\documentclass[11pt]{article}
\addtolength{\textwidth}{3cm}
\addtolength{\textheight}{4cm}
\addtolength{\oddsidemargin}{-1.5cm}
\addtolength{\evensidemargin}{-1.5cm}
\addtolength{\topmargin}{-2cm}

\usepackage[utf8x]{inputenc}
\usepackage[english]{babel}
\usepackage{graphicx}   
\usepackage[pdftex, colorlinks=true, linkcolor=black, citecolor=black]{hyperref}
\usepackage{braket}
\usepackage{amsmath}
\usepackage{amssymb}
\usepackage{amsthm}
\usepackage{mathtools}
\usepackage{bbm}		
\usepackage{array}		
\usepackage{color}
\usepackage[usenames,dvipsnames]{xcolor}
\usepackage{caption}
\usepackage{enumitem}	
\usepackage{wrapfig}

\allowdisplaybreaks

\theoremstyle{definition}\newtheorem{definition}{Definition}[section]

\theoremstyle{plain}\newtheorem{thm}{Theorem}
\newtheorem{lem}{Lemma}[section]
\newtheorem{cor}[lem]{Corollary}
\newtheorem{prop}[lem]{Proposition}

\theoremstyle{remark}

\newcommand{\lemit}[1]{\begin{enumerate}[label={(\alph*)}, ref={\thelem\alph*}]{#1}\end{enumerate}}	

\renewcommand{\hat}[1]{\widehat{#1}}											
\renewcommand{\tilde}[1]{\widetilde{#1}}										

\newcommand{\ls}{\lesssim}													
\newcommand{\gs}{\gtrsim}													

\newcommand{\lr}[1]{\left\langle #1 \right\rangle} 							
\newcommand{\norm}[1]{\lVert#1\rVert}   										
\newcommand{\onorm}[1]{\lVert#1\rVert_\mathrm{op}}							

\newcommand{\R}{\mathbb{R}}													
\newcommand{\N}{\mathbb{N}}													
\newcommand{\fC}{\mathfrak{C}}
\newcommand{\cN}{\mathcal{N}}

\newcommand\mydots{,\makebox[1em][c]{.\hfil.\hfil.},}							
\newcommand\mycdots{\makebox[1em][c]{$\cdot$\hfil$\cdot$\hfil$\cdot$}}		

\newcommand{\Tr}{\mathrm{Tr}}												
\renewcommand{\d}{\mathop{}\!\mathrm{d}}										

\renewcommand{\i}{\mathrm{i}}												
\newcommand{\e}{\mathrm{e}}

\let\textl\l
\renewcommand{\l}{\ell}

\newcommand{\D}{d}

\newcommand{\Hb}{H^\beta}
\newcommand{\Vext}{{V^\mathrm{ext}}}
\newcommand{\vb}{v^\beta}
\newcommand{\vbot}{\vb_{12}}
\newcommand{\vboth}{\vb_{13}}
\newcommand{\vbij}{\vb_{ij}}

\newcommand{\vbarpt}{\overline{v}^\pt}

\newcommand{\pt}{{\varphi(t)}}
\newcommand{\ps}{{\varphi(s)}}
\newcommand{\pr}{{\varphi(r)}}
\newcommand{\pz}{{\varphi_0}}

\newcommand{\Hbog}{\mathbb{H}_\mathrm{Bog}}

\newcommand{\pp}{p^\varphi}
\newcommand{\qp}{q^\varphi}
\newcommand{\Pp}{P^\varphi}

\newcommand{\qpz}{q^\pz}
\newcommand{\Ppz}{P^\pz}
\newcommand{\ppt}{p^\pt}
\newcommand{\qpt}{q^\pt}
\newcommand{\Ppt}{P^\pt}

\newcommand{\hpt}{h^{\pt}}

\newcommand{\mpt}{\mu^\pt}

\newcommand{\Hpt}{\tilde{H}^\pt}

\newcommand{\Ubog}{\tilde{U}_\varphi}
\newcommand{\psibog}{\tilde{\psi}_\varphi}

\newcommand{\Tex}{T^\mathrm{ex}_{d,v,\Vext}}

\newcommand{\chik}{\chi^{(k)}}
\newcommand{\xipk}{\xi^{(k)}_\varphi}

\newcommand{\xipz}{\xi^{(0)}_\varphi}
\newcommand{\xipo}{\xi^{(1)}_\varphi}
\newcommand{\xipN}{\xi^{(N)}_\varphi}
\newcommand{\xip}{\xi_\varphi}
\newcommand{\xipt}{\xi_\pt}
\newcommand{\xiptk}{\xi^{(k)}_\pt}

\newcommand{\xipZ}{\xi_{\varphi_0}}
\newcommand{\xipzk}{\xi^{(k)}_{\pz}}

\newcommand{\xipzN}{\xi^{(N)}_{\varphi_0}}
\newcommand{\xitpt}{\tilde{\xi}_\pt}
\newcommand{\xitptk}{\tilde{\xi}_\pt^{\,(k)}}

\newcommand{\Np}{\mathcal{N}_\varphi}
\newcommand{\Npz}{\mathcal{N}_{\varphi_0}}
\newcommand{\Npt}{\mathcal{N}_{\varphi(t)}}

\newcommand{\UNp}{\mathfrak{U}_N^\varphi}
\newcommand{\UNpz}{\mathfrak{U}_N^{\pz}}
\newcommand{\UNpt}{\mathfrak{U}_N^{\pt}}

\newcommand{\FNp}{{\mathcal{F}^{\leq N}_{\perp \varphi}}}
\newcommand{\FNpz}{{\mathcal{F}^{\leq N}_{\perp\pz}}}
\newcommand{\FNpt}{{\mathcal{F}^{\leq N}_{\perp\pt}}}
\newcommand{\Fpt}{{\mathcal{F}_{\perp\pt}}}

\newcommand{\Cpt}{\mathcal{C}^\pt}
\newcommand{\Cps}{\mathcal{C}^\ps}
\newcommand{\Cpso}{\mathcal{C}^{\varphi(s_1)}}
\newcommand{\Cpst}{\mathcal{C}^{\varphi(s_2)}}
\newcommand{\Cpr}{\mathcal{C}^\pr}

\newcommand{\Qpt}{\mathcal{Q}^\pt}
\newcommand{\Qps}{\mathcal{Q}^\ps}

\newcommand{\Qpst}{\mathcal{Q}^{\varphi(s_2)}}
\newcommand{\Qpr}{\mathcal{Q}^\pr}

\newcommand{\psia}{\psi^{(a)}_\varphi}
\newcommand{\psiao}{\psi^{(a+1)}_\varphi}

\newcommand{\psio}{\psi^{(1)}_\varphi}
\newcommand{\psit}{\psi^{(2)}_\varphi}
\newcommand{\psith}{\psi^{(3)}_\varphi}

\newcommand{\Tnk}{T_n^{(k)}}
\newcommand{\Tak}{T_a^{(k)}}
\newcommand{\Tkk}{T_k^{(k)}}
\newcommand{\Taa}{T_a^{(a)}}
\newcommand{\Tzz}{T_0^{(0)}}

\newcommand{\Ttnk}{\tilde{T}_n^{(k)}}
\newcommand{\Ttak}{\tilde{T}_a^{(k)}}
\newcommand{\Ttkk}{\tilde{T}_k^{(k)}}
\newcommand{\Ttaa}{\tilde{T}_a^{(a)}}
\newcommand{\Ttaoao}{\tilde{T}_{a+1}^{(a+1)}}

\newcommand{\Ttoo}{\tilde{T}_1^{(1)}}
\newcommand{\Tttt}{\tilde{T}_2^{(2)}}
\newcommand{\Ttot}{\tilde{T}_1^{(2)}}

\newcommand{\wl}{w_\lambda}
\newcommand{\lj}{l^j}
\newcommand{\llj}{l_\lambda^j}

\newcommand{\fcaz}{\fC_{\,a}}

\newcommand{\fcbz}{\fC_{\,b}}
\newcommand{\fczz}{\fC_{\,0}}

\newcommand{\cat}{C_{a}^{\,t}}

\newcommand{\cbt}{C_{b}^{\,t}}
\newcommand{\cbts}{C_{b}^{\,t,s}}

\newcommand{\cjts}{C_{j}^{\,t,s}}

\DeclareMathOperator*{\supp}{\mathrm{supp}}

\makeatletter
\DeclareFontFamily{OMX}{MnSymbolE}{}
\DeclareSymbolFont{MnLargeSymbols}{OMX}{MnSymbolE}{m}{n}
\SetSymbolFont{MnLargeSymbols}{bold}{OMX}{MnSymbolE}{b}{n}
\DeclareFontShape{OMX}{MnSymbolE}{m}{n}{
    <-6>  MnSymbolE5
   <6-7>  MnSymbolE6
   <7-8>  MnSymbolE7
   <8-9>  MnSymbolE8
   <9-10> MnSymbolE9
  <10-12> MnSymbolE10
  <12->   MnSymbolE12
}{}
\DeclareFontShape{OMX}{MnSymbolE}{b}{n}{
    <-6>  MnSymbolE-Bold5
   <6-7>  MnSymbolE-Bold6
   <7-8>  MnSymbolE-Bold7
   <8-9>  MnSymbolE-Bold8
   <9-10> MnSymbolE-Bold9
  <10-12> MnSymbolE-Bold10
  <12->   MnSymbolE-Bold12
}{}

\let\llangle\@undefined
\let\rrangle\@undefined
\DeclareMathDelimiter{\llangle}{\mathopen}%
                     {MnLargeSymbols}{'164}{MnLargeSymbols}{'164}
\DeclareMathDelimiter{\rrangle}{\mathclose}%
                     {MnLargeSymbols}{'171}{MnLargeSymbols}{'171}
\makeatother



\title{Higher order corrections to the mean-field description of the dynamics of interacting bosons}

\author{Lea Boßmann\thanks{Fachbereich Mathematik, Eberhard Karls Universität Tübingen, 
	Auf der Morgenstelle 10, 72076 Tübingen, Germany; 
	and Institute of Science and Technology Austria, Am Campus 1, 3400 Klosterneuburg, Austria. \texttt{lea.bossmann@ist.ac.at}},\;
	Nata\v{s}a Pavlovi\'c\thanks{Department of Mathematics, The University of Texas at Austin, 2515 Speedway Stop C1200, Austin, TX 78712, USA. \texttt{natasa@math.utexas.edu}},\;
	Peter Pickl\thanks{Mathematisches Institut, Ludwig-Maximilians-Universität München, Theresienstr.\ 39, 80333 München, Germany; and Duke Kunshan University, No.\ 8 Duke Avenue, Kunshan, Jiangsu Province, 215316, China. \texttt{pickl@math.lmu.de}},\;
	and Avy Soffer\thanks{Department of Mathematics, Rutgers University, 110 Frelinghuysen Road, Piscataway, NJ 08854, USA. \texttt{soffer@math.rugers.edu}}
}

\date{\today}

\begin{document}
\maketitle

\begin{abstract}
\noindent
In this paper, we introduce a novel method for deriving higher order corrections to the mean-field description of the dynamics of interacting bosons. More precisely,
we consider the dynamics of $N$ $\D$-dimensional bosons for large $N$.
The bosons initially form a  Bose--Einstein condensate and interact with each other via a pair potential of the form $(N-1)^{-1}N^{\D\beta}v(N^\beta\cdot)$ for $\beta\in[0,\frac{1}{4\D})$.
We derive a sequence of $N$-body functions which approximate the true many-body dynamics in $L^2(\R^{\D N})$-norm to arbitrary precision in powers of $N^{-1}$.
The approximating functions are constructed as Duhamel expansions of finite order in terms of the first quantised analogue of a Bogoliubov time evolution.
\end{abstract}


\section{Introduction}
We consider a system of $N$ bosons in $\R^\D$, $\D\geq 1$, interacting with each other via pair interactions in the mean field scaling regime. 
The Hamiltonian of the system is given by
\begin{equation}\label{Hb}
\Hb(t):=\sum\limits_{j=1}^N\left(-\Delta_j+\Vext(t,x_j)\right)+\frac{1}{N-1}\sum\limits_{i<j}\vb(x_i-x_j)\,.
\end{equation}
Here, $\Vext$ denotes some possibly time-dependent external potential, and the interaction potential $\vb$ is defined as
\begin{equation}\label{vb}
\vb(x):=N^{\D\beta}v(N^\beta x)\,,\qquad \beta\in[0,\tfrac{1}{\D})\,,
\end{equation}
for some  bounded, spherically symmetric and compactly supported function $v:\R^\D\to\R$. 
In the following, we will make use of the abbreviation
$$\vb_{ij}:=\vb(x_i-x_j)\,.$$
Note that the prefactor $(N-1)^{-1}$ in front of $\vb$ is chosen such that the interaction energy and the kinetic energy per particle are of the same order.
The mean inter-particle distance is of order $ N^{-\frac{1}{\D}}$ and therefore much smaller than the range of the interaction, which scales as $N^{-\beta}$. 
Hence, on average, every particle interacts with many other particles, and the interactions are weak since $(N-1)^{-1}N^{\D\beta}\to 0$ as $N\to\infty$.
This implies that we consider a mean-field regime. 
In particular, the case $\beta=0$ is known as the Hartree scaling regime.

We study the time evolution of the $N$-body system for large $N$ when the bosons initially exhibit Bose--Einstein condensation.
We impose suitable conditions on the external potential $\Vext(t)$ such that $\Hb(t)$ is self-adjoint on $\mathcal{D}(\Hb(t))=H^2(\R^{\D N})$ for each $t\in\R$. Consequently, $\Hb(t)$ generates a unique family of unitary time evolution operators $\{U(t,s)\}_{t,s\in\R}$ via the Schrödinger equation
\begin{equation}\label{U}
\i\tfrac{\d}{\d t}U(t,s)=\Hb(t)U(t,s)\,,\qquad U(s,s)=\mathbbm{1}\,.
\end{equation}
The $N$-body wave function at time $t\in\R$ is determined by 
\begin{equation}\label{SE}
\psi(t)=U(t,0)\psi(0)
\end{equation}
for some initial datum $\psi(0)=\psi_0\in L^2_\mathrm{sym}(\R^{\D N})$.
Due to the interactions, the characterisation of the time evolution $U(t,s)$ is a difficult problem. Even if the system was initially in a factorised state, where all particles are independent of each other, the interactions instantaneously correlate the particles such that an explicit formula for $U(t,s)$ is quite inaccessible.\\

To describe $U(t,s)$ approximatively, one observes that the dynamics of the many-body system can be decomposed into the dynamics of the condensate wave function $\varphi(t)\in L^2(\R^\D)$ and 
the dynamics of the excitations from the (time-evolved) condensate. 
The evolution of $\pt$ approximates the $N$-body dynamics $\psi(t)$ in the sense of reduced densities (see Section \ref{subsec:leading:order}). 
Moreover, if the dynamics of the excitations are suitably approximated and added to the description, one obtains an approximation of $\psi(t)$ with respect to the $L^2(\R^{\D N})$ norm, in the sense that
\begin{equation*}
\norm{\psi(t)-\psi_\mathrm{approx}(t)}^2_{L^2(\R^{\D N})}\leq C(t) N^{-\delta}
\end{equation*}
for some power $\delta\in(0,1]$ depending on the choice of $\beta$ (see Section \ref{subsec:next:order}).\\

In this paper, we introduce a novel method for deriving  a more precise characterisation of the dynamics.
This is achieved by constructing a sequence of $N$-body wave functions, which are defined via an iteration of Duhamel's formula with the time evolution $\Ubog(t,s)$ generated by an auxiliary Hamiltonian $\Hpt(t)$ (see \eqref{Hpt} for a precise definition).
Under the assumption that sufficiently high moments of the number of excitations in the initial state are subleading, we prove higher order corrections to the norm approximation for the scaling regime $\beta\in[0,\frac{1}{4\D})$.
This is to be understood in the following sense:\  we construct a sequence of $N$-body wave functions $\{\psia(t)\}_{a\in\mathbb{N}}\subset L^2(\mathbb{R}^{\D N})$
such that, for sufficiently large $N$,
\begin{equation}\label{eqn:aim}
\norm{\psi(t)-\psia(t)}^2_{L^2(\R^{\D N})}\leq C(t) N^{-a\delta(\beta,\gamma)}
\end{equation}
for some time-dependent constant $C(t)$.
The positive exponent $\delta(\beta,\gamma)$ is  determined in Theorem~\ref{thm:corrections}.
It depends on $\beta$ and on a parameter $\gamma$ which is related to the initial number of excitations (see assumption A3).

Let us remark that the approximating functions $\psia(t)$ are $N$-body wave functions.
Since they are explicitly given in terms of the dynamics $\Ubog(t,s)$ related to the (first order) norm approximation, the functions $\psia(t)$ are much more accessible than the true dynamics $\psi(t)$.
In particular, all higher order corrections can be obtained from the norm approximation $\Ubog(t,0)\psi_0$ by computing an $N$-independent number of integrals.

Moreover, if the initial excitation vector is quasi-free, one can extend the method introduced in this paper. 
We conjecture that it is possible to  approximate all $n$-point correlation functions of the full dynamics $\psi(t)$ to arbitrary precision by expressions depending only on approximations of the 2-point correlation functions, whose computation reduces to solving two coupled linear one-body equations (\cite[Equations (17a-b)]{grillakis2013} and \cite[Equation (34)]{nam2015}). 
This would mean a huge simplification of the full $N$-body problem \eqref{U} corresponding to the Hamiltonian \eqref{Hb} for certain initial states, in particular with regard to a numerical analysis, and we plan to show this in a separate paper.
\\

Finally, we note that higher order approximations of the reduced density matrices 
were obtained by Paul and Pulvirenti in~\cite{paul2019} for $\beta=0$ and factorised initial data, 
based on the method of kinetic errors from the paper by Paul, Pulvirenti and Simonella~\cite{paul2018}.
For $j\in\{1\mydots N\}$, the authors of \cite{paul2019} construct a sequence $\{F^{N,n}_j(t)\}_{n\in\N}$ of trace class operators on $L^2(\R^{jd})$,
which approximate the $j$-particle reduced density matrix $\gamma^{(j)}(t)$ of the system with increasing accuracy up to arbitrary precision. 
The approximating operators $F^{N,n}_j(t)$ 
can be determined by a number of operations scaling with $n$. They
depend on the initial data as well as the knowledge of the solution of the Hartree equation and its linearisation around this solution.

Due to different methods used, it is not straightforward to compare the results of \cite{paul2019} with the results of this paper. However, we list some features of 
our paper that differ from the operator-based method of kinetic errors \cite{paul2019,paul2018}.  
In contrast to the approach in \cite{paul2019}, we derive approximations directly for the time-evolved $N$-body wave function, and our construction is implemented as a robust algorithm 
that requires an $a$-dependent, $N$-independent number of explicit calculations to compute the $a$'th order approximation.
Moreover, the results obtained in this paper  include positive values of $\beta$ and cover more generic initial data than  \cite{paul2019}, where the initial state is assumed factorised,  i.e., with zero initial excitations.

\bigskip
\noindent\textbf{Notation.} 
In the following, any expression $C$ that is independent of both $N$ and $t$ will be referred to as a constant.
Note that constants may depend on all fixed parameters of the model such as $\pz$, $\psi_0$, $v$ and $\Vext(0)$.
Further, we denote $A\ls B$ and $A\gs B$ to indicate that there exists a constant $C>0$ such that $A\leq C B$, resp. $A\geq CB$, and
abbreviate
$$\lr{\,\cdot\,,\,\cdot\,}_{L^2(\R^{\D N})}=:\lr{\,\cdot\,,\,\cdot\,}, \qquad \norm{\,\cdot\,}_{L^2(\R^{\D N})}=:\norm{\,\cdot\,}, \qquad \norm{\,\cdot\,}_{\mathcal{L}(L^2(\R^{\D N}))}=:\onorm{\,\cdot\,}.$$
Finally, we use the notation
$$\lfloor r\rfloor:=\max\left\{z\in\mathbb{Z}:\,z\leq r\right\}\,,\qquad
\lceil r\rceil:=\min\left\{z\in\mathbb{Z}:\,z > r\right\} $$ 
for $r\in\R$.


\section{Known results}
\subsection{Leading order approximation: reduced densities}\label{subsec:leading:order}
A first approximation to the $N$-body dynamics is provided by the time evolution of the condensate wave function. 
Its dynamics yield a macroscopic description of the Bose gas, which, in the limit $N\to\infty$, coincides with the true dynamics in the sense of reduced density matrices.
In order to formulate this mathematically, one assumes that the system is initially in a Bose--Einstein condensate with condensate wave function $\varphi_0$, i.e., 
$$\lim\limits_{N\to\infty}\Tr\left|\gamma^{(1)}(0)-|\varphi_0\rangle\langle\varphi_0|\right|=0\,,$$
where 
$$\gamma^{(1)}(t):=\Tr_{2\mydots N}|\psi(t)\rangle\langle\psi(t)|$$
is the one-particle reduced density matrix of $\psi(t)$ at time $t$. Then it has been shown,  see e.g. \cite{adbagote, adami2007, chhapase, chen2013_2, erdos2006, erdos2007, kirkpatrick2011, sohinger15}, that
\begin{equation} \label{intro-gamma1-convergence}
\lim\limits_{N\to\infty}\Tr\left|\gamma^{(1)}(t)-|\pt\rangle\langle\pt|\right|=0
\end{equation}
for any $t\in\R$, where $\pt$ is the solution of the Hartree equation
\begin{equation}\label{NLS}
\i\tfrac{\d}{\d t}\varphi(t)
=\left(-\Delta+\Vext(t)+\vbarpt-\mpt\right)\varphi(t)
=:\hpt(t)\varphi(t)
\end{equation}
with initial datum $\varphi(0)=\varphi_0$ and with
\begin{equation}\label{vbarpt}
\vbarpt(x):=\left(\vb\ast|\varphi(t)|^2\right)(x):=\int_{\R^\D}\vb(x-y)|\varphi(t,y)|^2\d y \,.
\end{equation}
Note that for $\beta=0$, the equation~\eqref{NLS} is the $N$-independent Hartree (NLH) equation.
For $\beta>0$, the evolution is $N$-dependent and converges to the non-linear Schr\"odinger (NLS) dynamics with $N$-independent coupling parameter $\int v$ in the limit $N\to\infty$.
The parameter $\mpt$ is a real-valued phase factor, which we choose as
\begin{equation}\label{mpt}
\mpt:=\tfrac{1}{2}\int_{\R^\D}\d x|\varphi(t,x)|^2\,\vbarpt(x)\; = \;
\tfrac{1}{2}\int_{\R^\D}\d x\int_{\R^\D}\d y\,|\varphi(t,x)|^2|\varphi(t,y)|^2\vb(x-y)
\end{equation}
for later convenience. For the convergence with respect to reduced densities, this phase is irrelevant since it cancels in the projection $|\pt\rangle\langle\pt|$.

One way to prove the convergence \eqref{intro-gamma1-convergence}, and consequently to derive the NLH/ NLS equation from a system of $N$ bosons, is via the so-called BBGKY\footnote{(Bogoliubov-Born-Green-Kirkwood-Yvon)} hierarchy, which was prominently used in the works of
Lanford for the study of classical mechanical systems in the infinite particle limit \cite{la-1968,la-1968/69}.
The first derivation of the NLH equation via the BBGKY hierarchy was given
by Spohn in \cite{spohn1980}, and this was further pursued, e.g., in  \cite{adbagote,adami2007,froehlich2007, frknsc}. 
About a decade ago, Erd\H{o}s, Schlein and Yau 
fully developed the  BBGKY hierarchy approach to the derivation of 
the NLH/NLS equation in their seminal works including \cite{erdos2006,erdos2007}. 
Subsequently, a crucial step of this method was revisited by 
Klainerman and Machedon in \cite{klainerman2008}, based on reformulating combinatorial argument in \cite{erdos2006,erdos2007}
and a viewpoint inspired by methods of non-linear PDEs. This, in turn, motivated many recent works on the derivation of dispersive PDEs, 
including \cite{chhapase, chpa, chen2013_2,chen2016, chen2017_2, kirkpatrick2011, sohinger15}.
In~\cite{rodnianski2009}, Rodnianski and Schlein introduced yet another method for proving~\eqref{intro-gamma1-convergence}, which uses coherent states on Fock space and was inspired by 
techniques of quantum field theory and the pioneering work of Hepp~\cite{hepp}.

In the context of the current paper, the most relevant works on the derivation of the NLH/NLS equation are due to Pickl~\cite{pickl2011,pickl2015}, who introduced an efficient method for deriving effective equations from the many-body dynamics, transforming the physical idea behind the mean-field description of an $N$-body system into a mathematical algorithm. 
Instead of describing the condensate as the vacuum of a Fock space of excitations, this approach remains in the $N$-body setting and uses projection operators to factor out the condensate.
This strategy was successfully applied to prove effective dynamics for $N$-boson systems in various situations, e.g.,~\cite{anapolitanos2016, GP,deckert2014,jeblick2016, jeblick2017, knowles2010, michelangeli2017_2, michelangeli2017}.

\subsection{Next-to-leading order: norm approximation}\label{subsec:next:order}
Whereas closeness in the sense of reduced densities implies that the majority of the particles (up to a relative number that vanishes as $N\to\infty$) is in the state $\pt$, 
the norm approximation requires the control of all $N$ particles.
In particular, this implies that the excitations from the condensate can no longer be omitted from the description.
In this sense, the norm approximation of $\psi(t)$ can be understood as next-to-leading order correction to the mean-field description.

A norm approximation for initial coherent states on Fock space was first obtained in \cite{grillakis2010,grillakis2011} by Grillakis, Machedon and Margetis. 
For initial states $\psi_0\in L^2(\R^{dN})$ with fixed particle number, Lewin, Nam and Schlein proved in~\cite{lewin2015} a norm approximation for $\beta=0$ and $\Vext=0$ under quite general assumptions on the interaction potential $v$.
Nam and Napi\'orkowski extended this result in~\cite{nam2015} to the range $\beta\in[0,\frac13)$, in \cite{nam2017} to the range $\beta\in[0,\frac12)$ for the three-dimensional defocusing case, 
and in~\cite{nam2017_2} to the focusing case in dimensions one and two for $\beta>0$ and $\beta\in(0,1)$, respectively.
As proposed in \cite{lewin2015_2}, the authors decomposed the $N$-body wave function $\psi(t)$ into condensate and excitations as
\begin{equation}\label{eqn:orth:excitations}
\psi(t)=\sum\limits_{k=0}^N\pt^{\otimes (N-k)}\otimes_s\xiptk
\end{equation}
for some $\xipt=\big(\xiptk\big)_{k=0}^N\in\FNpt$, 
where 
\begin{equation}\label{FNpt}
\FNp:= \bigoplus_{k=0}^N\bigotimes_\mathrm{sym}^k \{\varphi\}^\perp
\end{equation}
is the truncated bosonic Fock space over the orthogonal complement in $L^2(\R^\D)$ of the span of $\varphi\in L^2(\R^\D)$. 
A definition of $\xiptk$ will be given in~\eqref{eqn:phik}.
Further, $\otimes_s$ denotes the symmetric tensor product, which is for $\psi_a\in L^2(\R^{\D a})$, $\psi_b\in L^2(\R^{\D b})$ defined as 
\begin{equation*}
(\psi_a\otimes_s\psi_b)(x_1\mydots x_{a+b}):= \frac{1}{\sqrt{a!\,b!\,(a+b)!}}\sum\limits_{\sigma\in \mathfrak{S}_{a+b}}\psi_a(x_{\sigma(1)}\mydots x_{\sigma(a)})\,\psi_b(x_{\sigma(a+1)}\mydots x_{\sigma(a+b)})\,,
\end{equation*}
where $\mathfrak{S}_{a+b}$ denotes the set of all permutations of $a+b$ elements.
The addend $k=0$ in~\eqref{eqn:orth:excitations} describes the condensate, while the terms $k\in\{1\mydots N\}$ correspond to the excitations.
In the following, we will  refer to $\xipk(t)$ as \emph{$k$-particle excitation}.

In \cite{lewin2015,nam2015,nam2017_2,nam2017}, the authors consider initial data of the form
\begin{equation}\label{eqn:initial:cond:norm:approx}
\psi_0=\sum\limits_{k=0}^N\pz^{\otimes(N-k)}\otimes_s\chik(0)
\end{equation}
for some appropriate initial excitation vector $\chi(0):=(\chik(0))^\infty_{k=0}\in\mathcal{F}(\{\pz\}^\perp)$.
It is then shown that there exist constants $C,C'>0$ such that
\begin{equation}\label{eqn:norm:approx}
\left\|\psi(t)-\sum\limits_{k=0}^N\pt^{\otimes(N-k)}\otimes_s\chik(t)\right\|^2_{L^2(\R^{\D N})} \leq  C \e^{C't}N^{-\delta}\,,
\end{equation}
where $\delta=1-3\beta$ for the three-dimensional defocusing case with $\beta\in[0,\frac13)$
and $\delta=\frac12$ and $\delta<\frac13(1-\beta)$ for the one- and two-dimensional focusing case, respectively.
The excitations  $\chi(t)=(\chik(t))_{k= 0}^\infty \in \mathcal{F}(\{\pt\}^\perp)$ at time $t>0$
are determined by the Bogoliubov evolution,
\begin{equation}\label{eqn:H:Bog}
\i\tfrac{\d}{\d t}\chi(t)=\mathbb{H}_\mathrm{Bog}(t)\chi(t)\,.
\end{equation}
Here, $\mathbb{H}_\mathrm{Bog}(t)$ denotes the Bogoliubov Hamiltonian%
\footnote{
	Written in second quantized form, $\Hbog(t)$ is defined as
	\begin{equation*}\begin{split}
	\Hbog(t):=\int_{\R^d}a^*_x\left(\hpt(t,x)+K_1(t,x)\right) a_x\d x
	+\tfrac12\int_{\R^d}\d x\int_{\R^d}\d y\left(K_2(t,x,y)a^*_x a^*_y+\overline{K_2(t,x,y)}a_xa_y\right)\,,
	\end{split}\end{equation*}
	where $a^*_x$ and $a_x$ denote the operator-valued distributions corresponding to the usual creation and annihilation operators on $\mathcal{F}(L^2(\R^d))$.
	Besides, $K_1(t):=Q(t)\tilde{K}_1(t)Q(t)$ with $Q(t):=1-|\pt\rangle\langle\pt|$,
	where $\tilde{K}_1$ is the Hilbert-Schmidt operator on $L^2(\R^d)$ with kernel 
	$\tilde{K}_1(t,x,y):=\varphi(t,x)\vb(x-y)\overline{\varphi(t,y)}$.
	Further, $K_2(t):=\left(Q(t)\otimes Q(t)\right)\tilde{K}_2(t,\cdot,\cdot)$, 
	where 
	$\tilde{K}_2(t,x,y):=\varphi(t,x)\vb(x-y)\varphi(t,y)$ (e.g.~\cite[Equation~(31)]{nam2015}).
},
an effective  Hamiltonian on Fock space which is quadratic in the number of creation and annihilation operators.\\

For three dimensions and  scaling parameter $\beta=0$, a similar result was obtained by Mitrouskas, Petrat and Pickl in~\cite{mitrouskas2016, mitrouskas_PhD} via a first quantised approach, where the splitting of $\psi(t)$ into condensate and excitations is realised by means of projections as introduced  in~\cite{pickl2011}.
Since we will work in the first quantised setting, let us  recall this approach and introduce some notation.
\begin{definition}\label{def:p:q}
Let $\varphi\in L^2(\R^\D)$. 
Define the orthogonal projections on $L^2(\R^\D)$
$$\pp:=|\varphi\rangle\langle\varphi|, \qquad \qp:=\mathbbm{1}-p^\varphi$$
and the corresponding projection operators on $L^2(\R^{\D N})$
$$\pp_j:=\underbrace{\mathbbm{1}\otimes\cdots\otimes\mathbbm{1}}_{j-1}\otimes\, \pp\otimes \underbrace{\mathbbm{1}\otimes\cdots\otimes\mathbbm{1}}_{N-j} \quad\text{and}\quad \qp_j:=\mathbbm{1}-\pp_j\,.$$
For $0\leq k\leq N$, define the many-body projections
\begin{eqnarray*}
\Pp_k
&:=&\sum\limits_{\substack{J\subseteq\{1,\dots,N\}\\|J|=k}}\prod\limits_{j\in J}q^\varphi_j\prod\limits_{l\notin J}p^\varphi_l
\;=\;\frac{1}{(N-k)!k!}\sum\limits_{\sigma\in\mathfrak{S}_N}\qp_{\sigma(1)}\mycdots \qp_{\sigma(k)}\pp_{\sigma(k+1)}\mycdots \pp_{\sigma(N)}
\end{eqnarray*}
and $\Pp_k=0$ for $k<0$ and $k>N$.
Further, for any function $f: \N_0\rightarrow\R_0^+$ and any $j\in\mathbb{Z}$, define the operators $\hat{f^\varphi}, \hat{f^\varphi_j}\in\mathcal{L}\left(L^2(\R^{\D N})\right)$ by 
$$\hat{f^\varphi}:= \sum\limits_{k=0}^N f(k)\Pp_k, \qquad 
\hat{f^\varphi_j}:=\sum\limits_{n=-j}^{N-j} f(n+j)\Pp_n\,.$$
We will in particular need the operators $\hat{n^\varphi}$ and $\hat{m^\varphi}$ corresponding to the weights
$$n(k):=\sqrt{\tfrac{k}{N}}\,, \qquad m(k):=\sqrt{\tfrac{k+1}{N}}\,.$$
\end{definition}
Hence, for any $\psi\in L^2(\R^{\D N})$, the part of $\psi$ in the condensate $\varphi^{\otimes N}$ is given by $\Pp_0\psi$, and the part of $\psi$ corresponding to $k$ particles being excited from the condensate is precisely $\Pp_k\psi$ for $k\geq1$. By construction,  $\Pp_k\Pp_{k'}=\delta_{k,k'}\Pp_k$.
Besides, the identity $\sum_{k=0}^N \Pp_k=\mathbbm{1}$ implies
\begin{equation}\label{eqn:decomposition}
\psi\;=\;\sum_{k=0}^N \Pp_k\psi \;=:\;\sum_{k=0}^N \varphi^{\otimes (N-k)}\otimes_s\xipk
\end{equation}
for some $\xipk\in L^2(\R^{\D k})$.
To determine the explicit form of $\xipk$, observe that by Definition~\ref{def:p:q},
\begin{eqnarray*}
\Pp_k\psi(x_1\mydots x_N)
&=&\frac{1}{(N-k)!k!}\sum\limits_{\sigma\in\mathfrak{S}_N}
	\varphi(x_{\sigma(k+1)})\mycdots\varphi(x_{\sigma(N)})\,\qp_{\sigma(1)}\mycdots \qp_{\sigma(k)} \times\\
	&&\times\, \int\limits_{\R^d}\d y_1\mycdots \int\limits_{\R^d}\d y_{N-k}\,\overline{\varphi(y_1)}\mycdots\overline{\varphi(y_{N-k})}\,\psi(x_{\sigma(1)}\mydots x_{\sigma(k)},y_1\mydots y_{N-k})\\
&=:&\left(\varphi^{\otimes (N-k)}\otimes_s\xipk\right)(x_1\mydots x_N)\,,
\end{eqnarray*}
where, by definition of the symmetric tensor product,
\begin{eqnarray}
&&\hspace{-1cm}\xipk(x_1\mydots x_k):=\nonumber\\
&&=\sqrt{\tbinom{N}{k}}\, \qp_{1}\mycdots \qp_{k}
\int\limits_{\R^d}\d \tilde{x}_{k+1}\mycdots \int\d \tilde{x}_N\,\overline{\varphi(\tilde{x}_{k+1})}\mycdots\overline{\varphi(\tilde{x}_N)}\,\psi(x_1\mydots x_k,\tilde{x}_{k+1}\mydots\tilde{x}_N)\,.\label{eqn:phik}
\end{eqnarray}
Obviously, $\xipk$ is symmetric under permutations of all of its coordinates,
and $\xipk$ is orthogonal to $\varphi$ in every coordinate, i.e.,
\begin{equation}\label{eqn:phik:orth}
\int_{\R^\D}\overline{\varphi(x_j)}\,\xipk(x_1\mydots x_j\mydots x_N) \d x_j=0\,,\qquad \pp_j\xipk = 0 \,,\qquad \qp_j\xipk=\xipk
\end{equation}
for every $j\in\{1\mydots k\}$. Hence, $\xipk\in\bigotimes_\mathrm{sym}^k\{\varphi\}^\perp$.
The excitations $\xipk$, $k\in\{0\mydots N\}$, define a vector $\xip:=\left(\xipz,\xipo\mydots \xipN\right) $ in the truncated Fock space $\FNp$ defined in~\eqref{FNpt}.
The relation between the $N$-body state $\psi$ and the corresponding excitation vector $\xip$ is given by the unitary map
\begin{eqnarray}\label{eqn:map:U}
\UNp:L^2(\R^{\D N}) \to  \FNp\;, \quad
\psi  \mapsto  \UNp\psi:=\xip\,,
\end{eqnarray}
where $\xip$ is defined by~\eqref{eqn:phik}.
The vacuum $(1,0\mydots 0)$ of $\FNp$ corresponds to the condensate $\varphi^{\otimes N}$, 
and the probability of $k$ particles being outside the condensate equals
\begin{equation}\label{eqn:phik:norm}
\norm{\xipk}^2_{L^2(\R^{\D k})}=\tbinom{N}{k}\norm{\qp_1\cdots \qp_k \pp_{k+1}\cdots \pp_N\psi}^2=\norm{\Pp_k\psi}^2
\end{equation}
by~\eqref{eqn:phik}.
The number operator $\Np$ on $\FNp$, counting the number of excitations, is defined by its action
$$\left(\Np\,\xip\right)^{(k)}:=k\,\xipk\,.$$
The expected number of excitations from the condensate $\varphi^{\otimes N}$ in the state $\psi$ is thus given by
\begin{equation}\begin{split}\label{eqn:N}
\lr{\xip,\Np\,\xip}_\FNp&=\sum\limits_{k=0}^N k\norm{\xipk}^2_{L^2(\R^{\D k})}=\sum\limits_{k=0}^N k\norm{\Pp_k\psi}^2=
N\lr{\psi,\sum\limits_{k=0}^N\tfrac{k}{N}\Pp_k\psi}\\
&=N\norm{\hat{n^\varphi}\psi}^2
\end{split}\end{equation}
with $\hat{n^\varphi}$ from Definition~\ref{def:p:q}.\\

In \cite{mitrouskas2016}, the authors introduce an auxiliary $N$-particle Hamiltonian $\Hpt(t)$ by subtracting from $\Hb(t)$ in each coordinate the mean-field Hamiltonian $\hpt(t)$ from~\eqref{NLS}, inserting identities 
$$(\ppt_i+\qpt_i)(\ppt_j+\qpt_j)$$ 
on both sides of the difference, and discarding all terms which are cubic, $\Cpt$, or quartic, $\Qpt$, in the number of projections $\qpt$:
\begin{lem}\label{lem:Hpt}
$$\Hb(t)=\Hpt(t)+\Cpt+\Qpt\,,$$
where
\begin{equation}\begin{split}\label{Hpt}
\Hpt(t)\;:=\;&\sum\limits_{j=1}^N\hpt_j(t)\\
&+\frac{1}{N-1}\sum\limits_{i<j}\left(\ppt_i\qpt_j\vb_{ij}\qpt_i\ppt_j+\ppt_i\ppt_j\vb_{ij}\qpt_i\qpt_j+\mathrm{h.c.}\right)\,,
\end{split}\end{equation}
and with
\begin{eqnarray*}
\Cpt&:=&\frac{1}{N-1}\sum\limits_{i<j}\bigg(\qpt_i\qpt_j\Big(\vb_{ij}-\vbarpt(x_i)-\vbarpt(x_j)\Big)(\qpt_i\ppt_j+\ppt_i\qpt_j)+\mathrm{h.c.}\bigg)\,, \\
\Qpt&:=&\frac{1}{N-1}\sum\limits_{i<j}\qpt_i\qpt_j\left(\vb_{ij}-\vbarpt(x_i)-\vbarpt(x_j)+2\mpt\right)\qpt_i\qpt_j\,.
\end{eqnarray*}
\end{lem}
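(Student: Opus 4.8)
The plan is to verify the operator identity by a direct computation, isolating the one-body mean-field part of $\Hb(t)$ and then resolving the pair interaction into contributions sorted by the number of factors $\qpt$ they contain. First I would add and subtract $\sum_{j=1}^N\hpt_j(t)$. Since $\hpt_j(t)=-\Delta_j+\Vext(t,x_j)+\vbarpt(x_j)-\mpt$, the kinetic and external-potential terms cancel against those in $\Hb(t)$. Distributing the one-body terms symmetrically over the pairs---using that each index lies in $N-1$ pairs, so that $\sum_{j}\vbarpt(x_j)=\frac{1}{N-1}\sum_{i<j}(\vbarpt(x_i)+\vbarpt(x_j))$ and $N\mpt=\frac{1}{N-1}\sum_{i<j}2\mpt$---yields
\[
\Hb(t)=\sum_{j=1}^N\hpt_j(t)+\frac{1}{N-1}\sum_{i<j}W_{ij}\,,\qquad W_{ij}:=\vb_{ij}-\vbarpt(x_i)-\vbarpt(x_j)+2\mpt\,.
\]

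Next I would insert the identity $\mathbbm{1}=(\ppt_i+\qpt_i)(\ppt_j+\qpt_j)$ on both sides of each $W_{ij}$ and expand into the $16$ terms obtained by choosing $\ppt$ or $\qpt$ for each of the four projection factors, sorting them by the total number of $\qpt$ that occur. The unique $4$-$q$ term is exactly $\Qpt$. The $3$-$q$ terms $\qpt_i\qpt_j W_{ij}(\qpt_i\ppt_j+\ppt_i\qpt_j)+\hc$ reduce to $\Cpt$ once the scalar $2\mpt$ piece is discarded---which is forced, since it meets a factor $\qpt\ppt=0$ on one coordinate---leaving precisely the summand of $\Cpt$. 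It then remains to show that every term with at most one $\qpt$ vanishes, that the two ``diagonal'' $2$-$q$ terms $\qpt_i\ppt_j W_{ij}\qpt_i\ppt_j$ and $\ppt_i\qpt_j W_{ij}\ppt_i\qpt_j$ vanish as well, and that in the remaining four $2$-$q$ terms only the $\vb_{ij}$ part survives, so that they assemble into the interaction part of $\Hpt$.

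All of these cancellations are controlled by three elementary identities that encode the mean-field subtraction: $\ppt_j\vb_{ij}\ppt_j=\vbarpt(x_i)\ppt_j$ (the definition of $\vbarpt$), together with $\ppt_j\vbarpt(x_j)\ppt_j=2\mpt\ppt_j$ and $\ppt_i\ppt_j\vb_{ij}\ppt_i\ppt_j=2\mpt\ppt_i\ppt_j$ (the definition of $\mpt$), combined with the orthogonality $\ppt_k\qpt_k=0$ and the idempotence $(\ppt_k)^2=\ppt_k$, $(\qpt_k)^2=\qpt_k$. Using these, the $\vb_{ij}$ contribution in each low-$q$ term is exactly cancelled by the accompanying $\vbarpt$ or $\mpt$ contribution---this is precisely the purpose of the choice of $\vbarpt$ and of the phase $\mpt$. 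I do not anticipate a genuine difficulty here: the whole argument is bookkeeping together with repeated use of $\ppt_k\qpt_k=0$. The one step requiring real care is checking the two diagonal $2$-$q$ terms, where all four pieces of $W_{ij}$ contribute nontrivially and must be shown to cancel in two pairs; this is the place where a sign or projection slip would go unnoticed, so I would carry it out explicitly.
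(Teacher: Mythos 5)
Your proposal is correct and follows essentially the same route as the paper: rewriting $\Hb(t)=\sum_j\hpt_j(t)+\frac{1}{N-1}\sum_{i<j}\big(\vb_{ij}-\vbarpt(x_i)-\vbarpt(x_j)+2\mpt\big)$, inserting $\mathbbm{1}=(\ppt_i+\qpt_i)(\ppt_j+\qpt_j)$ on both sides, and reducing all low-$q$ terms via $\ppt_i\vb_{ij}\ppt_i=\vbarpt(x_j)\ppt_i$, $\ppt_i\vbarpt(x_i)\ppt_i=2\mpt\ppt_i$ and $\ppt_k\qpt_k=0$. The paper merely states these steps without carrying out the sixteen-term expansion, so your more explicit bookkeeping (including the pairwise cancellation in the diagonal two-$q$ terms) is just a fleshed-out version of the same argument.
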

\begin{proof}
\begin{eqnarray*}
\Hb(t)&=&\sum\limits_{j=1}^N\hpt_j(t)+\frac{1}{N-1}\sum\limits_{i<j}\vb_{ij}-\sum\limits_{j=1}^N\vbarpt(x_j)+N\mpt\\
&=&\sum\limits_{j=1}^N\hpt_j(t)+\frac{1}{N-1}\sum\limits_{i<j}\left(\vb_{ij}-\vbarpt(x_i)-\vbarpt(x_j)+2\mpt\right)\,.
\end{eqnarray*}
Now one inserts identities $\mathbbm{1}=(\ppt_i+\qpt_i)(\ppt_j+\qpt_j)$ before and after the expression in the brackets and uses the relations
$$\ppt_i\vb_{ij}\ppt_i=\vbarpt(x_j)\ppt_i, \qquad 
\ppt_i\vbarpt(x_i)\ppt_i=2\mpt \ppt_i\,,$$
which concludes the proof.
\end{proof}
The auxiliary Hamiltonian $\Hpt(t)$ has a quadratic structure comparable to that of the Bogoliubov-Hamiltonian $\Hbog(t)$: all terms in $\Hpt(t)-\sum_j\hpt_j(t)$, which form an effective two-body potential, contain exactly two projectors $\qpt$ onto the complement of the condensate wave function, while $\Hbog(t)$ is quadratic in the creation and annihilation operators of the excitations.
However, $\Hpt(t)$ is particle number conserving and acts on the $N$-body Hilbert space $L^2(\R^{\D N})$, 
i.e., it determines the evolution of the $N$-body wave function consisting of excited particles and particles in the condensate $\pt$, with $\pt$ the solution of \eqref{NLS}.
In contrast, $\Hbog(t)$ operates on the excitation Fock space $\Fpt$, does not conserve the particle number, and exclusively concerns the dynamics of the excitations with respect to the condensate wave function evolving according to~\eqref{NLS}.

The time evolution generated by $\Hpt(t)$ is denoted by $\Ubog(t,s)$.
For an initial datum $\psi_0\in L^2_\mathrm{sym}(\R^{\D N})$, the corresponding $N$-body wave function at time $t\in\R$ is
\begin{equation}\label{psibog}
\psibog(t)=\Ubog(t,0)\psi_0\,.
\end{equation}
Existence and uniqueness of of the time evolution $\Ubog(t,s)$ are recalled in Lemma \ref{lem:Bog}.

Under appropriate assumptions on the initial datum $\psi_0$, the time evolution $\Ubog(t,s)$ approximates the actual time evolution $U(t,s)$. More precisely, there exist constants $C,C'>0$ such that
\begin{equation}\label{eqn:David:norm}
\big\|\big(U(t,0)-\Ubog(t,0)\big)\psi_0\big\|^2_{L^2(\R^{\D N})}\leq C\e^{C't^2}N^{-1}
\end{equation}
\cite[Theorem 2.6]{mitrouskas2016}.
Further, in the limit $N\to\infty$, the excitations in $\Ubog(t,0)\psi_0$ coincide with the solutions of the Bogoliubov evolution equation: 
let $\xipZ=\big(\xipzk\big)_{k=0}^N$ denote the excitations from $\pz^{\otimes N}$ in the initial state $\psi_0$ under the decomposition~\eqref{eqn:orth:excitations}, let 
$\xitpt=\big(\xitptk\big)^N_{k=0}$ denote the excitations from $\pt^{\otimes N}$ in $\Ubog(t,0)\psi_0$, and let $\chi(t)=\left(\chik(t)\right)_{k\geq 0}$ denote the solutions of~\eqref{eqn:H:Bog} with initial datum $\chi^{(k)}(0)=\xipZ^{(k)}$ for $0\leq k\leq N$ and $\chi^{(k)}(0)=0$ for $k>N$.
Then
\begin{equation}\label{eqn:David:equiv}
\sum\limits_{k=0}^N\Big\|\xitptk-\chik(t)\Big\|^2_{L^2(\R^{\D k})}\leq C\e^{C't^2}N^{-1}
\end{equation}
\cite[Lemma 2.8]{mitrouskas2016}.
Hence, the combination of~\eqref{eqn:David:norm} and~\eqref{eqn:David:equiv} yields~\eqref{eqn:norm:approx}, with a different time-dependent constant but the same $N$-dependence.\\

Finally, let us remark that for larger values of the scaling parameter, beyond the mean field regime, the evolutions of $\pt$ and $\xipt$ do not (approximately) decouple any more as a consequence of the short-scale structure related to the two-body scattering process.
For $\beta\in(0,1)$, an accordingly adjusted variant of~\eqref{eqn:norm:approx} for appropriately modified initial data was proved by Brennecke, Nam, Napi\'orkowski and Schlein in~\cite{brennecke2017_2} in the three-dimensional defocusing case.
Similar estimates for the many-body evolution of appropriate classes of Fock space initial data have been obtained in~\cite{boccato2015, chong2016, ginibre1979, ginibre1979_2, grillakis2013, grillakis2017, grillakis2010, grillakis2011, kuz2017, rodnianski2009} for various ranges of the scaling parameter.
A related result for Bose gases with large volume and large density was proved in~\cite{petrat2017}.


\section{Main results}\label{sec:main}
\subsection{Assumptions}\label{subsec:main:framework}
Let us  state our assumptions on the model~\eqref{Hb} and on the initial data.
\begin{itemize}
\item[A1] \emph{Interaction potential.} 
Let $v:\R^\D\to\R$ be spherically symmetric and bounded uniformly in $N$, i.e., $\norm{v}_{L^\infty(\R^\D)}\ls 1$. Further, assume that 
$\supp v\subseteq\{x\in\R^\D:|x|\ls 1\}$.
\item[A2] \emph{External potential.}
Let $\Vext:\R\times\R^\D\to\R$ such that $\Vext(\cdot,x)\in\mathcal{C}(\R)$ for each $x\in\R^d$ and $\Vext(t,\cdot)\in L^\infty(\R^\D)$ for each $t\in\R$.
\item[A3] \emph{Initial data.} 
Let  $\psi_0\in H^2(\R^{\D N})\cap L^2_\mathrm{sym}(\R^{\D N})$ and $\varphi_0\in H^k(\R^\D)$, $k=\lceil\tfrac{\D}{2}\rceil$, both be normalised. Let $\gamma\in(0,1]$ and $A\in\N$.
Assume that for any $a\in\{0\mydots A\}$, there exists a set of non-negative, $a$-dependent constants $\left\{\fcaz\right\}_{0\leq a\leq A}$ with $\fczz=1$ such that, for sufficiently large $N$,
$$\left\|\left(\hat{m^{\varphi_0}}\right)^a \psi_0\right\|^2\leq \fcaz\,N^{-\gamma a}\,.$$
\end{itemize}
Our analysis is valid as long as the solution $\pt$ of the non-linear equation~\eqref{NLS} exists in $H^k(\R^\D)$-sense for $k=\lceil\tfrac{\D}{2}\rceil$. 
The maximal time of $H^k(\R^d)$-existence, $\Tex$, is defined as
\begin{equation}\label{existence:time}
\Tex:=\sup\left\{ t\in \R_0^+:\norm{\pt}_{H^k(\R^\D)}<\infty \text{ for } k=\lceil\tfrac{\D}{2}\rceil\right\}
\end{equation}
and depends on the dimension $\D$, the sign of $\vbarpt$, and the regularity of the external trap $\Vext (t)$.
Under assumptions A1 and A2 and for times $s,t\in[0,\Tex)$, the time evolution $\Ubog(t,s)$ is well-defined.

\begin{lem}\label{lem:Bog}
Let $s,t\in\big[0,\Tex\big)$. Then $\Hpt(t)$ is self-adjoint on $\mathcal{D}(\Hpt(t))=H^2(\R^{\D N})$ and  generates a unique family of unitary time evolution operators $\Ubog(t,s)$. $\Ubog(t,s)$ is strongly continuous jointly in $s,t$ and leaves $H^2(\R^{\D N})$ invariant.
\end{lem}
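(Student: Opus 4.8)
The plan is to write $\Hpt(t)=H_0+W(t)$ with $H_0:=-\Delta$ on $L^2(\R^{\D N})$, which is self-adjoint on $H^2(\R^{\D N})$, and $W(t):=\Hpt(t)-H_0$ the remainder, collecting the external and mean-field potentials $\sum_j(\Vext(t,x_j)+\vbarpt(x_j)-\mpt)$ together with the projected two-body interaction. The structural fact that makes everything work is that $W(t)$ is a \emph{bounded} symmetric operator for each fixed $N$ and $t\in[0,\Tex)$: by A1 each multiplication operator $\vb_{ij}$ has norm $N^{\D\beta}\norm{v}_{L^\infty(\R^\D)}$, the projections $\ppt,\qpt$ are contractions, $\Vext(t,\cdot)\in L^\infty$ by A2, and $\vbarpt,\mpt$ are bounded since $\varphi(t)\in H^k(\R^\D)$ for $t<\Tex$; symmetry is manifest because the interaction enters together with its $\hc$. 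Hence $\onorm{W(t)}<\infty$, locally uniformly in $t$, and the Kato--Rellich theorem immediately yields that $\Hpt(t)$ is self-adjoint on the unchanged domain $\mathcal{D}(\Hpt(t))=H^2(\R^{\D N})$. In particular the graph norm of $\Hpt(t)$ is equivalent to $\norm{\cdot}_{H^2(\R^{\D N})}$, uniformly for $t$ in compact subintervals of $[0,\Tex)$.

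For the propagator I would pass to the interaction picture: setting $\tilde W(t):=\e^{\i H_0 t}W(t)\e^{-\i H_0 t}$, which is again bounded with $\onorm{\tilde W(t)}=\onorm{W(t)}$, one defines $\Ubog(t,s):=\e^{-\i H_0 t}\tilde U(t,s)\e^{\i H_0 s}$ from the norm-convergent Dyson series for $\tilde U(t,s)$. Existence, uniqueness, the cocycle property, unitarity and joint strong continuity in $(s,t)$ then follow from the standard non-autonomous theory for self-adjoint families on a common domain (e.g.\ Reed--Simon~II, Thm~X.70, or Kato's theory of linear evolution equations), whose hypotheses reduce here to strong continuity and local uniform boundedness of $t\mapsto W(t)$. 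Strong continuity I would verify term by term: $t\mapsto\Vext(t,\cdot)$ is strongly continuous as a multiplication operator by A2 and dominated convergence, while $t\mapsto\ppt,\vbarpt,\mpt$ are even norm-continuous because the Hartree flow $t\mapsto\varphi(t)$ is continuous in $H^k(\R^\D)$ on $[0,\Tex)$.

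The main obstacle is the invariance of $H^2(\R^{\D N})$. This cannot be read off the Dyson series, since multiplication by the merely bounded $\Vext$ maps $H^2$ out of itself, so each individual term leaves $H^2$ even though the resummed unitary must preserve it---exactly as for the time-independent group $\e^{-\i t(-\Delta+V)}$ with $V\in L^\infty$. I would instead argue by an a priori $H^2$-estimate. Using the graph-norm equivalence from the first paragraph, it suffices to control $\onorm{\Hpt(t)\psi(t)}$ along $\psi(t):=\Ubog(t,0)\psi_0$ with $\psi_0\in H^2(\R^{\D N})$; differentiating $\lr{\psi(t),\Hpt(t)^2\psi(t)}$, the commutator $[\Hpt(t),\Hpt(t)^2]$ vanishes and only the term produced by $\partial_t\Hpt(t)=\partial_t W(t)$ survives, so a Gronwall argument gives $\psi(t)\in H^2(\R^{\D N})$ with locally bounded norm, after which $L^2$-strong continuity upgrades to strong continuity on $H^2(\R^{\D N})$. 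The delicate point is that this estimate uses regularity of $t\mapsto W(t)$ beyond mere strong continuity; the required control is furnished by differentiability of the Hartree flow through $\partial_t\varphi(t)=-\i\hpt(t)\varphi(t)$ and of the external trap, and I would make it rigorous either by running the estimate for a regularised family $W_\varepsilon(t)$ and passing to the limit, or by directly checking the hypotheses of Kato's hyperbolic evolution theorem, which delivers domain preservation from the same input.
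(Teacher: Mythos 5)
Your proposal is correct, and its first half coincides with the paper's: the paper likewise treats $\Hpt(t)$ as $-\Delta$ plus a bounded symmetric perturbation, using the Sobolev embedding $\norm{\pt}_{L^\infty(\R^\D)}\ls\norm{\pt}_{H^k(\R^\D)}$ to bound $\vbarpt$ and $\mpt$ on $[0,\Tex)$, so that self-adjointness on $H^2(\R^{\D N})$ is immediate. The divergence is in how the propagator and the invariance of $H^2(\R^{\D N})$ are obtained. The paper verifies one further hypothesis --- that $t\mapsto\Hpt(t)\psi$ is Lipschitz for $\psi\in H^2(\R^{\D N})$, using the continuity of $t\mapsto\Vext(t)$ and $\tfrac{\d}{\d t}\ppt=\i[\ppt,\hpt(t)]$ --- and then invokes the Kato-type theorem of \cite{griesemer2017}, which delivers existence, uniqueness, unitarity, joint strong continuity and domain invariance in a single stroke. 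You instead rebuild the content of that citation by hand: a Dyson series in the interaction picture for existence, uniqueness, unitarity and strong continuity, and an a priori Gronwall bound on $\norm{\Hpt(t)\Ubog(t,0)\psi_0}$ for domain invariance. Your diagnosis that domain invariance is the only genuinely nontrivial point (individual Dyson terms do not preserve $H^2$) is exactly right, and your second fallback --- checking the hypotheses of Kato's hyperbolic evolution theorem --- is literally the paper's proof. Two caveats on the more hands-on variant: (i) the formal differentiation of $\lr{\psi(t),\Hpt(t)^2\psi(t)}$ presupposes that $\psi(t)$ lies in the domain of $\Hpt(t)^2$, so regularising $W(t)$ in time alone does not close the argument; one must also regularise the operator (spectral cutoffs or Yosida approximants), which is in effect how the Kato-type theorems are themselves proved; (ii) both routes consume the same regularity input, namely Lipschitz/differentiable dependence of the projected interaction on $t$ through the Hartree flow, so nothing is saved on hypotheses. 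What your route buys is self-containedness and an explicit quantitative $H^2$ bound along the evolution; what the paper's buys is brevity --- two verifications and one citation.
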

\begin{proof}
As a consequence of the Sobolev embedding theorem (e.g.\ \cite[Theorem 4.12, Part IA]{adams}), $\norm{\pt}_{L^\infty(\R^\D)}\ls \norm{\pt}_{H^k(\R^\D)}$ for $k=\lceil\tfrac{\D}{2}\rceil$.
Hence, by definition~\eqref{existence:time} of $\Tex$, $\mpt$ and $(N-1)\vbarpt$ are bounded uniformly in $N$ for $t\in\big[0,\Tex\big)$.
Further, $t\mapsto\Hpt(t)\psi$ is Lipschitz for all $\psi\in H^2(\R^{\D N})$ because of~\eqref{NLS}, since $t\mapsto\Vext(t)\in\mathcal{L}(L^2(\R^\D))$ is continuous and as $\tfrac{\d}{\d t}\ppt=\i[\ppt,\hpt(t)]$. Hence, the statement of the lemma follows from~\cite{griesemer2017}.
\end{proof}

Assumptions A1 and A2 are rather standard in the rigorous treatment of interacting many-boson systems. Note that we make no assumption on the sign of the potential or its scattering length and thus cover both repulsive and attractive interactions.
Besides, we admit a large class of time-dependent external traps $\Vext$, with basically the only restriction that $\Vext(t)$ must not obstruct the self-adjointness of $\Hb(t)$ on $H^2(\R^{\D N})$. 

The third assumption provides a bound on the expected number of excitations from the condensate $\pz^{\otimes N}$ in the initial state $\psi_0$. 
Note that while $\gamma=0$ is the trivial bound, the condition becomes more restrictive as $\gamma$ increases.
We have chosen this particular formulation of A3 for later convenience\footnote{
	Note that the operators $\hat{n^\varphi}$ and $\hat{m^\varphi}$ are equivalent in the sense that they are related via~\eqref{eqn:n:m:op}, namely $(\hat{n^\varphi})^{2a}\leq (\hat{m^\varphi})^{2a} \leq 2^a (\hat{n^\varphi})^{2a}+N^{-a}$, hence all results in terms of $\hat{m^\varphi}$ can be translated to the corresponding statements in terms of $\hat{n^\varphi}$. We chose to work with $\hat{m^\varphi}$ instead of $\hat{n}^\varphi$ because this makes in particular Proposition~\ref{thm:alpha} easier to write. For example, in terms of $\hat{n^\varphi}$, Proposition~\ref{thm:alpha:2} reads\\\vspace{-5pt}
	$$\norm{(\hat{n^\varphi})^j\Ubog(t,s)\psi}^2\ls \cjts\sum\limits_{n=0}^j N^{n(-1+\D\beta)}\left(2^{j-n}\norm{(\hat{n^\varphi})^{j-n}\psi}^2+N^{-j+n}\right)\,,$$
	which contains an additional term $N^{-j+n}$.
	Since the proof of our main result requires an iteration of this proposition, the version with $\hat{m^\varphi}$ is more practicable. 
}. However, its physical meaning is better understood from one of the following two equivalent versions of A3:

\begin{itemize}
\item[A3$\,^\prime$]
Let  $\psi_0\in  H^2(\R^{\D N})\cap L^2_\mathrm{sym}(\R^{\D N})$ 
and $\varphi_0\in H^k(\R^\D)$, $k=\lceil\tfrac{\D}{2}\rceil$, both be normalised. Let $\gamma\in(0,1]$ and $A\in\N$.
Assume that for any $a\in\{0\mydots A\}$, there exists a set of non-negative, $a$-dependent constants $\{\fcaz^\prime\}_{0\leq a\leq A}$ with $\fczz^\prime=1$ such that, for sufficiently large $N$,
$$\left\|\qpz_1\cdots \qpz_a \psi_0\right\|^2\leq \fcaz^\prime\,N^{-\gamma a}.$$
\item[A3$\,^{\prime\prime}$]
Let  $\psi_0\in H^2(\R^{\D N})\cap  L^2_\mathrm{sym}(\R^{\D N})$ 
and $\varphi_0\in H^k(\R^\D)$, $k=\lceil\tfrac{\D}{2}\rceil$, both be normalised. Let $\gamma\in(0,1]$, $A\in\N$ and $\xipZ=\UNpz\psi_0$.
Assume that for any $a\in\{0\mydots A\}$, there exists a set of non-negative, $a$-dependent constants $\{\fcaz^{\prime\prime}\}_{0\leq a\leq A}$ with $\fczz^{\prime\prime}=1$ such that, for sufficiently large $N$,
$$\lr{\xipZ,\Npz^a\,\xipZ}_\FNpz=\sum\limits_{k=0}^N k^a\norm{\xipzk}^2_{L^2(\R^{\D k})}\leq \fcaz^{\prime\prime}\,N^{(1-\gamma)a}\,.$$
\end{itemize}
The equivalence $\mathrm{A3}\Leftrightarrow\mathrm{A3}\,^\prime\Leftrightarrow\mathrm{A3}\,^{\prime\prime}$ follows immediately from Lemma~\ref{lem:equivalence:q:m}, whose proof is postponed to Section~\ref{subsec:preliminaries}.

\begin{lem} \label{lem:equivalence:q:m}
Let $a\in\{1\mydots N\}$ and $\varphi\in L^2(\R^\D)$. Let $\psi\in L^2_\mathrm{sym}(\R^{\D N})$ and $\xip=\UNp\psi$. Then
\lemit{
\item 
$\big\|\qp_1\mycdots\qp_a\psi\big\|^2\;\leq\; \left\|\left(\hat{m^\varphi}\right)^a\psi\right\|^2 \;\leq\;
 4^a a!\,  \sum\limits_{j=1}^aN^{-a+j}\big\|\qp_1\mycdots\qp_j\psi\big\|^2+N^{-a}\,,$
\item $\lr{\xip,\Np^a\,\xip}_\FNp\;\leq\; N^a \left\|\left(\hat{m^\varphi}\right)^a\psi\right\|^2\;\leq\;  1+2^a \lr{\xip,\Np^a\,\xip}_\FNp $ .
}
\end{lem}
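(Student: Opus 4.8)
The plan is to reduce all three quantities in the lemma to the spectral weights $p_k:=\norm{\Pp_k\psi}^2=\norm{\xipk}^2$ of $\psi$ relative to the orthogonal family $\{\Pp_k\}_{k=0}^N$, and then to compare the resulting discrete sums. Since the $\Pp_k$ are mutually orthogonal projections with $\sum_k\Pp_k=\mathbbm{1}$, the weights $\hat{m^\varphi}=\sum_k m(k)\Pp_k$ give $\norm{(\hat{m^\varphi})^a\psi}^2=\sum_{k=0}^N m(k)^{2a}p_k=N^{-a}\sum_{k=0}^N(k+1)^a p_k$, while the reasoning behind~\eqref{eqn:N} together with~\eqref{eqn:phik:norm} yields $\lr{\xip,\Np^a\,\xip}=\sum_{k=0}^N k^a\norm{\xipk}^2=\sum_{k=0}^N k^a p_k$. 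The only missing ingredient is a representation of $\norm{\qp_1\cdots\qp_j\psi}^2$ in the same language.

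The combinatorial key is the operator identity $\sum_{|J|=j}\prod_{i\in J}\qp_i=\sum_{k=0}^N\binom{k}{j}\Pp_k$, which I would verify by evaluating both sides on the range of each $\Pp_k$: on a configuration with exactly $k$ excited coordinates, $\prod_{i\in J}\qp_i$ vanishes unless $J$ is contained in those $k$ coordinates, and there are $\binom{k}{j}$ such sets $J$. Taking the expectation in $\psi$ and using permutation symmetry, so that $\lr{\psi,\prod_{i\in J}\qp_i\psi}$ is independent of the $j$-subset $J$ and equals $\norm{\qp_1\cdots\qp_j\psi}^2$ (commuting projections multiply to a projection), gives $\binom{N}{j}\norm{\qp_1\cdots\qp_j\psi}^2=\sum_k\binom{k}{j}p_k$. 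With these three sum representations, part (b) is immediate: $\sum_k k^a p_k\le\sum_k(k+1)^a p_k=N^a\norm{(\hat{m^\varphi})^a\psi}^2$ gives the lower bound, while splitting off the term $k=0$, using $(k+1)^a\le 2^a k^a$ for $k\ge1$ and $p_0\le\norm{\psi}^2=1$ gives $N^a\norm{(\hat{m^\varphi})^a\psi}^2\le 1+2^a\lr{\xip,\Np^a\,\xip}$.

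For the lower bound in part (a) I would compare the two sums term by term. Using the identity above with $j=a$, one has $\binom{N}{a}^{-1}\binom{k}{a}=\prod_{i=0}^{a-1}\frac{k-i}{N-i}\le\prod_{i=0}^{a-1}\frac{k+1}{N}=N^{-a}(k+1)^a$, where each factor inequality $\frac{k-i}{N-i}\le\frac{k+1}{N}$ rearranges to $i(k+1-N)\le N$ and holds for $0\le i\le a-1\le N$ and $k\le N$ (the factors with $k<a$ contributing $0$). Summing against $p_k$ yields $\norm{\qp_1\cdots\qp_a\psi}^2=\binom{N}{a}^{-1}\sum_k\binom{k}{a}p_k\le N^{-a}\sum_k(k+1)^a p_k=\norm{(\hat{m^\varphi})^a\psi}^2$.

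The upper bound in part (a) is where the real work lies, since it requires \emph{inverting} the relation, i.e.\ estimating the pure power moment $\sum_k k^a p_k$ by the binomial moments $\sum_k\binom{k}{j}p_k$ that the $\qp$-norms compute. Here I would expand $k^a=\sum_{j=1}^a S(a,j)\,j!\binom{k}{j}$ in Stirling numbers $S(a,j)$ of the second kind, so that $\sum_k k^a p_k=\sum_{j=1}^a S(a,j)\,j!\binom{N}{j}\,\norm{\qp_1\cdots\qp_j\psi}^2$; then, after the split $N^{-a}\sum_k(k+1)^a p_k\le N^{-a}+2^a N^{-a}\sum_k k^a p_k$, I would use $j!\binom{N}{j}=N(N-1)\cdots(N-j+1)\le N^j$ to produce the factors $N^{-a+j}$ and reduce everything to bounding the combinatorial coefficient by $2^a S(a,j)\le 4^a a!$. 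The main obstacle is precisely controlling this coefficient uniformly in $j$: one needs $S(a,j)\le 2^a a!$, which I would obtain from $S(a,j)\le B_a$ (the Bell number) and the elementary estimate $B_a\le a!$, proved by induction through the recursion $B_a=\sum_{k=0}^{a-1}\binom{a-1}{k}B_k$ and $\sum_{m\ge0}\frac{1}{m!}\le e<3$. The generous prefactor $4^a a!$ leaves ample slack, confirming the stated estimate.
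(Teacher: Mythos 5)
Your proof is correct, and it takes a genuinely different route from the paper's. You diagonalise all three quantities in the orthogonal family $\{\Pp_k\}$, reducing everything to weighted sums $\sum_k w(k)\,p_k$ with $p_k=\norm{\Pp_k\psi}^2$, and your key lemma is the binomial-moment identity $\sum_{|J|=j}\prod_{i\in J}\qp_i=\sum_k\binom{k}{j}\Pp_k$, which converts $\norm{\qp_1\mycdots\qp_j\psi}^2$ into $\binom{N}{j}^{-1}\sum_k\binom{k}{j}p_k$; the inversion from power moments to binomial moments is then handled by the Stirling-number expansion $k^a=\sum_{j=1}^a S(a,j)\,j!\binom{k}{j}$ together with the Bell-number bound $B_a\leq a!$. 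The paper instead stays on the operator side: the lower bound in (a) is obtained from the iteration lemma (Lemma~\ref{lem:fqq:2}) combined with the pointwise weight comparison $\hat{n}^{2a}\leq\hat{m}^{2a}\leq 2^a\hat{n}^{2a}+N^{-a}$, and the upper bound comes from expanding $\norm{(\hat{n^\varphi})^a\psi}^2=\lr{\psi,(\tfrac1N\sum_j\qp_j)^a\psi}$ multinomially, using the symmetry of $\psi$ to group terms by the number $j$ of distinct indices, and counting compositions of $a$ into $j$ positive parts ($\binom{a-1}{j-1}$ of them, each with multinomial coefficient at most $a!$). At heart both arguments perform the same conversion between moments of the excitation number and the correlation quantities $\norm{\qp_1\mycdots\qp_j\psi}^2$, and your composition count and the paper's are two bookkeepings of the same combinatorics (indeed $j!\,S(a,j)$ is exactly the sum of the multinomial coefficients over those compositions); what your version buys is self-containedness and transparency—the scalar identity replaces both Lemma~\ref{lem:fqq} and the symmetry argument, and your coefficient-wise comparison $\binom{N}{a}^{-1}\binom{k}{a}\leq N^{-a}(k+1)^a$ gives the first inequality of (a) without any iteration—while the paper's version buys economy, reusing lemmas it needs elsewhere anyway and avoiding Stirling/Bell machinery. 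One small point present in both proofs: the additive constants $N^{-a}$ and $1$ require $\norm{\psi}\leq 1$ (you invoke $p_0\leq\norm{\psi}^2=1$ explicitly, the paper uses it implicitly via $\hat{m}^{2a}\leq 2^a\hat{n}^{2a}+N^{-a}$ applied to a normalised $\psi$); this hypothesis is tacit in the lemma's statement but always satisfied where it is applied.
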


Hence, A3 can be understood as follows: 
Let $A\in\N$ and consider sufficiently large $N$ such that $A=\mathcal{O}(1)$ with respect to $N$, i.e.\ $A\ls 1$.
Then we assume that for any $a\leq A$, the part of the wave function with any $a$ particles outside the condensate is at most of order $N^{-\gamma a}$.

Equivalently, A3 states that the first $A\ls1$ moments of the number of excitations must be sub-leading with respect to the particle number; for $\gamma=1$, they must even be bounded uniformly in $N$.
Here, ``sub-leading'' means that the moments of the relative number of excitations, i.e., the expectation values of $(\Npt/N)^A$, vanish as $N\to\infty$.
This, in turn, provides a bound on the high components of the excitation vector: for example, $\sum_{k=0}^N k^A\norm{\xipzk}^2_{L^2(\R^{\D k})}\ls N^{(1-\gamma)A}$ implies $\norm{\xipzN}^2_{L^2(\R^{\D k})}\ls N^{-\gamma A}$. In other words, it must be very unlikely that significantly many particles are outside the condensate, whereas we impose no restriction on excitations involving only few particles (with respect to $N$).

As soon as $a$ becomes comparable to $N$, i.e., $a\gs N$, the constants $\fC_{a}^{(\prime,\prime\prime)}$ are $N$-dependent and the assumption is trivially satisfied. However, note that we demand that $N$ be large enough that $A\ls 1$.
\\

The simplest example of an $N$-body state satisfying A3 is the product state $\psi=\pz^{\otimes N}$.
Whereas the ground state of non-interacting bosons ($v=0$) is of this form, the ground state as well as the lower excited states of interacting systems are not close to an exact product with respect to the $L^2(\R^{\D N})$-norm due to the correlation structure related to the interactions.

Regarding interacting bosons, A3 is fulfilled for quasi-free states with subleading expected number of excitations, since for any quasi-free state $\xi\in\mathcal{F}$ and any  $a\geq 1$ there exists a constant $C_a>0$ such that 
$$\lr{\chi,\mathcal{N}^a\chi}_\mathcal{F}\leq C_a\left(1+\lr{\chi,\mathcal{N}\chi}_\mathcal{F}\right)^a$$
(e.g.\ \cite[Lemma 5]{nam2015}).
In~\cite[Theorem A.1]{lewin2015_2}, it was shown that the ground state of $\Hbog$ is a quasi-free state, which, via the map $\UNp$, defines an $N$-body state $\psi_\mathrm{Bog}$ that converges to the actual ground state $\psi_0$ in norm as $N\to\infty$ \cite[Theorem 2.2]{lewin2015_2}.
Note that  we require a certain minimal size of $\gamma$, which is strictly greater than $\tfrac{2}{3}$.

Further, let us remark that assumption A3  for $A=1$ means complete BEC, which was shown to be a sufficient condition for the validity of the Bogoliubov approximation \cite{lewin2015_2}.
It was shown in \cite[Lemma 1]{seiringer2011} and \cite[Lemma 1]{grech2013} that A3 with $A=1$ and $\gamma=1$ is satisfied for low-energy states of a $d$-dimensional Bose gas for $\beta=0$ in the homogeneous and inhomogeneous setting, respectively. For the Gross--Pitaevskii scaling $\beta=1$ in three dimensions, a comparable bound was proved in \cite{boccato2017, boccato2018_2}.

Besides, A3 with parameter $\gamma<1$ is comparable to part of the assumption made in \cite{nam2017}. In this work, the authors assume that the initial excitation vector $\xi_\pz$ be a quasi-free state in $\mathcal{F}_{\perp\pt}$ such that 
$$\lr{\xi_\pz,\cN_\pz\xi_\pz}\leq \kappa_\varepsilon N^\varepsilon \qquad\text{and}\qquad \lr{\xi_\pz,\d\Gamma(\mathbbm{1}-\Delta)\xi_\pz} \leq \kappa_\varepsilon N^{\beta+\varepsilon}$$
for all $\varepsilon>0$ and with $\kappa_\varepsilon>0$  independent of $N$, and prove a norm approximation for $\beta\in[0,\frac12)$ with parameter $\delta=(1-\varepsilon-2\beta)/2$.
Since $\xi_\pz$ is quasi-free, the first part of this assumption is comparable to   A3, whereas we do not impose any condition on the initial energy of the excitations.

Finally, Mitrouskas showed in~\cite[Chapter 3]{mitrouskas_PhD} that assumption A3 with $\gamma=1$ (and consequently for all $\gamma\in(0,1]$) is fulfilled by the ground state and lower excited states of a homogeneous Bose gas on the $d$-dimensional torus for $\beta=0$.
More precisely, let $\pz$ be the minimizer of the Hartree functional on the torus with ground state energy $E_0$, and let $\psi_n$ denote the $n$'th excited state with energy $E_n$. 
Then the author proves that there exist constants $C,D>0$ such that 
$\norm{\Ppz_a\psi_n}^2\leq C\e^{-Da}$
for all $(E_n-E_0)\leq a\leq N$.
As a corollary of this statement, it is shown that there exists $C_a>0$ such that
$$\lr{\psi_n,\qpz_1\cdots \qpz_a \psi_n}\leq N^{-a}C_a\left(1+\left(E_n-E_0\right)^a\right),
$$
which implies that assumption A3$\,^\prime$ is satisfied.

\subsection{Control of higher moments of the number of excitations}\label{subsec:main:moments}
In our first result, we estimate the growth of the first $A$ moments of the number of excitations when the system
evolves under the dynamics $U(t,s)$ or $\Ubog(t,s)$. 
Estimates of this kind are often needed to derive effective descriptions of the dynamics of interacting bosons, e.g., in~\cite{benarous2013,boccato2015,chen2011,mitrouskas2016,petrat2017,rodnianski2009}.
Our proof extends comparable statements for $\beta=0$ and $\D=3$ obtained in~\cite[Lemma 2.1]{mitrouskas2016} and~\cite[Proposition 3.3]{rodnianski2009}, and for Bose gases with large volume and large density in~\cite[Corollary 4.2]{petrat2017}. 
The estimates are stated for $\norm{(\hat{m^\varphi})^a\psi}^2$ as these expressions are required for the proof of our main theorem. 
By Lemma~\ref{lem:equivalence:q:m}, they easily translate to bounds on the corresponding quantities $\norm{q_1\cdots q_a\psi}^2$ and $\lr{\xip,\Np^a\xip}$.
The proofs of Proposition~\ref{thm:alpha} and Corollary~\ref{cor:alpha} are postponed to Section~\ref{subsec:proof:thm:alpha}.

\begin{prop}\label{thm:alpha}
Let $\beta\in[0,\frac{1}{\D})$, assume \emph{A1} and \emph{A2} and let $\psi\in L^2_\mathrm{sym}(\R^{\D N})$. Let $s\in\R$, $\ps\in H^k(\R^\D)$ for $k=\lceil\tfrac{\D}{2}\rceil$, and let $\pt$ be the solution of~\eqref{NLS} with initial datum $\ps$.
Then it holds for $t\in\big[s,s+\Tex\big)$ and $j\in\{1\mydots N\}$ that
\lemit{
\item \label{thm:alpha:1}
	for any $b\in\N_0$,
	\begin{eqnarray*}
	\left\|\left(\hat{m^\pt}\right)^j U(t,s)\psi\right\|^2
	&\ls & \cjts\sum\limits_{n=0}^j N^{n(-1+\D\beta)}\left\|\left(\hat{m^\ps}\right)^{j-n}\psi\right	
	\|^2\\
	&&+\,2^b\cbts\sum\limits_{n=0}^b N^{n(-1+\D\beta)+\D\beta b}\left\|\left(\hat{m^\ps}\right)^{b-n}	
	\psi\right\|^2,\end{eqnarray*}
\item \label{thm:alpha:2}
$$\left\|\left(\hat{m^\pt}\right)^j \Ubog(t,s)\psi\right\|^2 \ls 
\cjts\sum\limits_{n=0}^j N^{n(-1+\D\beta)}\left\|\left(\hat{m^\ps}\right)^{j-n}\psi\right\|^2\,,$$
}
where $\cjts:= j!\, 3^{j(j+1)}\e^{9^j \int_s^t\norm{\varphi(s_1)}^2_{H^k(\R^\D)}\d s_1  }$.
\end{prop}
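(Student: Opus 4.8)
The plan is to run a Grönwall argument directly on the quantities $\alpha_j(t):=\|(\hat{m^\pt})^j\,\psi(t)\|^2=\langle\psi(t),(\hat{m^\pt})^{2j}\psi(t)\rangle$, where $\psi(t)$ denotes either $U(t,s)\psi$ or $\Ubog(t,s)\psi$. Differentiating in $t$ and using $\i\partial_t\psi(t)=H(t)\psi(t)$ with $H(t)=\Hb(t)$ resp.\ $\Hpt(t)$, one obtains
\begin{equation*}
\tfrac{\d}{\d t}\alpha_j(t)=\big\langle\psi(t),\big(\partial_t(\hat{m^\pt})^{2j}+\i[H(t),(\hat{m^\pt})^{2j}]\big)\psi(t)\big\rangle.
\end{equation*}
The first key step is the mean-field cancellation. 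Since $(\hat{m^\pt})^{2j}=\sum_k m(k)^{2j}\Ppt_k$ is a function of the projections $\Ppt_k$ which are transported by the one-body flow generated by $\sum_l\hpt_l(t)$ (using $\tfrac{\d}{\d t}\ppt=\i[\ppt,\hpt(t)]$), one checks that $\partial_t(\hat{m^\pt})^{2j}+\i[\sum_l\hpt_l(t),(\hat{m^\pt})^{2j}]=0$. Writing $H(t)=\sum_l\hpt_l(t)+\tilde{W}(t)$ as in the proof of Lemma~\ref{lem:Hpt}, this removes the entire one-body part and leaves $\tfrac{\d}{\d t}\alpha_j(t)=\langle\psi(t),\i[\tilde{W}(t),(\hat{m^\pt})^{2j}]\psi(t)\rangle$, an expression purely in the interaction.

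Next I would sort the interaction terms by how they shift the excitation number, using $\Hb(t)=\Hpt(t)+\Cpt+\Qpt$. The summand $\Qpt$ and the first pair term $\ppt_i\qpt_j\vbij\qpt_i\ppt_j$ preserve the number of excitations and hence commute with $(\hat{m^\pt})^{2j}$, dropping out of the commutator. The surviving pieces of $\Hpt(t)$ are $B:=\tfrac{1}{N-1}\sum_{i<j}\ppt_i\ppt_j\vbij\qpt_i\qpt_j$ and $B^*$, which lower resp.\ raise the excitation number by two; in the $U$-case there is additionally $\Cpt$, which shifts it by one. For such shift operators I would invoke the commutation relation with the shifted weights $\hat{f^\varphi_j}$ of Definition~\ref{def:p:q}, e.g.\ $(\hat{m^\pt})^{2j}B=B\,\widehat{(m_{-2})^{2j}}$, so that $[(\hat{m^\pt})^{2j},B]=B\big(\widehat{(m_{-2})^{2j}}-(\hat{m^\pt})^{2j}\big)$ and the expectation reduces to controlling $B$ against a difference of two weight functions of the number operator.

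For part (b) this already closes. The weight difference $m(k-2)^{2j}-m(k)^{2j}$, with $m(k)^{2j}=((k+1)/N)^j$, is of size $N^{-1}$ times lower powers of $\hat{m^\pt}$, which supplies the decreasing exponents; the factor $\tfrac{1}{N-1}\sum_{i<j}\sim N$ together with $\|\vbij\|_\infty\ls N^{\D\beta}$ from A1 supplies the $N^{\D\beta}$; and the two projections $\qpt_i\qpt_j$ in $B$ are absorbed into moments via Lemma~\ref{lem:equivalence:q:m}(a). Collecting the combinatorics one arrives at a closed differential inequality $\tfrac{\d}{\d t}\alpha_j(t)\ls \norm{\pt}_{H^k(\R^\D)}^2\sum_{n=0}^{j} c_{j,n}\,N^{n(-1+\D\beta)}\alpha_{j-n}(t)$ in the finite family $\alpha_0\mydots\alpha_j$, and integrating Grönwall's inequality yields the stated bound with the constant $\cjts$.

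The main obstacle is the cubic term $\Cpt$ in part (a). Because it changes the excitation number by one, the Cauchy--Schwarz splitting of $\langle\psi(t),\i[\Cpt,(\hat{m^\pt})^{2j}]\psi(t)\rangle$ produces a term carrying three projections $\qpt$, i.e.\ an effectively half-integer moment that does \emph{not} belong to the finite family $\{\alpha_0\mydots\alpha_j\}$ and so cannot be reabsorbed into the Grönwall system. To close this I would not estimate it in one step but bootstrap: bound the cubic contribution by a slightly higher moment at the cost of a factor $N^{\D\beta}$, and iterate this $b$ times until the topmost moment is controlled trivially by $\norm{\psi}^2=1$. This iteration is exactly what produces the second line of part (a), with its sum up to $b$, the powers $N^{\D\beta b}$ and the factor $2^b$; the restriction $\beta<\tfrac1\D$ guarantees $-1+\D\beta<0$, so the correction stays subleading and the scheme is useful for every $b$.
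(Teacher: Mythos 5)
Your skeleton---a Gr\"onwall argument for $\alpha_j(t):=\|(\hat{m^\pt})^j\psi(t)\|^2$, removal of the one-body part via Lemma~\ref{lem:pt:3}, and sorting of the interaction by the number of excitations it creates or annihilates---is indeed the paper's starting point, and your identification of the surviving terms (the pair terms $\ppt_i\ppt_j\vbij\qpt_i\qpt_j+\hc$, plus $\Cpt$ in the $U$-case) is correct. The first genuine gap is in your estimate of the pair term, which is needed in \emph{both} parts: bounding it through $\norm{\vb}_{L^\infty(\R^\D)}\ls N^{\D\beta}$ (or even through $\onorm{\vbij\,\ppt_i}\ls N^{\D\beta/2}\norm{\pt}_{H^k(\R^\D)}$) puts the factor $N^{\D\beta}$ on the \emph{leading} moment: your own bookkeeping ($N$ from the sum, $N^{-1}$ from the weight difference, $N^{\D\beta}$ from the potential) gives $\tfrac{\d}{\d t}\alpha_j\ls N^{\D\beta}\norm{\pt}^2_{H^k(\R^\D)}\,\alpha_j+\dots$, and Gr\"onwall then produces a constant of the form $\e^{CN^{\D\beta}(t-s)}$, not the $N$-independent $\cjts$. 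The closed inequality you claim, with an $\mathcal{O}(1)$ coefficient at $n=0$ and $N^{-1+\D\beta}$ only at $n=1$, therefore does not follow from your estimate. The paper obtains that structure from Lemma~\ref{lem:GammaLambda} (a symmetrisation over the $N-1$ other particles) combined with the bound $\onorm{\ppt_1\vbot\ppt_2}\ls\norm{\vb}_{L^1(\R^\D)}\norm{\pt}^2_{H^k(\R^\D)}\ls\norm{\pt}^2_{H^k(\R^\D)}$---the $L^1$ norm of $\vb$ is scaling invariant, unlike its $L^2$ and $L^\infty$ norms---so that all $N^{\D\beta}$-growth is confined to the diagonal term, which carries an extra factor $N^{-1}$; see \eqref{aux:2}. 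This is the mechanism that produces the pair $\alpha_j+N^{-1+\D\beta}\alpha_{j-1}$, and without it neither part (a) nor part (b) gives the stated constant for $\beta>0$.

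Second, the upward bootstrap you propose for the cubic term in part (a) cannot close. By \eqref{aux:3}, with plain weights the cubic contribution is $\ls j3^jN^{\D\beta/2}\norm{\pt}_{H^k(\R^\D)}\sqrt{\alpha_j\,\alpha_{j+1}}$, so each upward step costs a factor $N^{\D\beta}>1$; and whereas the downward chain terminates at the conserved quantity $\alpha_0=\norm{\psi}^2$, the upward chain has no conserved terminus. Stopping after $b$ steps with the trivial bound $\alpha_{j+b}\leq 2^{j+b}\norm{\psi}^2$ leaves a contribution of order $(C(t-s))^bN^{\D\beta b}\,2^{j+b}/b!$, which carries \emph{no} moment of the initial data; it grows with $N$ for every $\beta>0$ and cannot be made subleading by any assumption of type A3. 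Your appeal to $-1+\D\beta<0$ concerns the downward coupling and is irrelevant for this term; note that in the statement of part (a) every power $N^{\D\beta b}$ multiplies $\|(\hat{m^\ps})^{b-n}\psi\|^2$, which is exactly what makes the second sum harmless for well-prepared data, and your terminal term lacks this factor. The paper avoids the problem altogether with the cutoff weights $\wl$ of \eqref{eqn:wl}, cutoff at $N^\lambda$ with $\lambda=1-\D\beta$: the associated difference weights \eqref{lj} vanish for $k\gs N^\lambda$, whence $\hat{l}\,\hat{n}^4\ls j3^jN^{-2+\lambda}\,\hat{\wl}^j$, so the cubic term stays at the \emph{same} cutoff-moment level with an $\mathcal{O}(1)$ coefficient and Gr\"onwall closes within the finite family $\{\alpha_{\psi,\varphi,s}(\wl^i;t)\}_{i\leq j}$. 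The parameter $b$ then enters only at the very end, through the pointwise comparison $m^{2j}(k)\leq N^{-j(1-\lambda)}\wl^j(k)+2^j\wl^b(k)$, valid for every $b\in\N$, i.e., through converting cutoff moments back into $\hat{m^\pt}$-moments---not through an iteration depth.
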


Under the additional assumption A3 on the initial data, this implies that at any time $t$ and for sufficiently large $N$, the first $A$ moments of the number of excitations remain sub-leading:
\begin{cor}\label{cor:alpha}
Assume \emph{A1} -- \emph{A2} and \emph{A3} with $\gamma\in(0,1]$ and $A\in\{1\mydots N\}$. 
Let $\psi(t)$, $\psibog(t)$ and $\pt$ denote the solutions of~\eqref{SE}, \eqref{psibog} and~\eqref{NLS} with initial data $\psi_0$ and $\varphi_0$ from \emph{A3}.
Let 
$$\xipZ=\UNpz\psi_0\,,\quad\xipt=\UNpt\psi(t)=\UNpt U(t,0)\psi_0\,,\quad \xitpt=\UNpt\psibog(t)=\UNpt \Ubog(t,0)\psi_0\,.$$
Then for  $t\in\big[0,\Tex\big)$, sufficiently large $N$ and $a\in\{0\mydots A\}$, it holds that
\lemit{
\item for the time evolution $U(t,s)$ that
	\begin{equation*}
	\norm{(\hat{m^\pt})^a\psi(t)}^2\ls  \cat\begin{cases}
	N^{-a(1-d\beta)} & \text{ for } \beta\in[0,\tfrac{1}{2d})\,, \;  \gamma\in[1-d\beta,1]\,,\\[10pt]
    N^{-\gamma a} & \text{ for }  \beta\in[0,\tfrac{1}{d})\,, \;  \gamma\in(d\beta,1-d\beta]\,,
 \end{cases}\end{equation*}	
 or, equivalently, that
 	\begin{equation*}
	\lr{\xipt,\Npt^a\xipt}_\FNpt\ls  \cat\begin{cases}
	N^{d \beta a} & \text{ for } \beta\in[0,\tfrac{1}{2d})\,, \;  \gamma\in[1-d\beta, 1]\,,\\[10pt]
    N^{(1-\gamma) a} & \text{ for }  \beta\in[0,\tfrac{1}{d})\,, \;  \gamma\in(d\beta, 1-d\beta]\,,
 \end{cases}\end{equation*}	
\item for the time evolution $\Ubog(t,0)$ and $\beta\in[0,\tfrac{1}{d})$ that
\begin{equation*}
 \norm{(\hat{m^\pt})^a\Ubog(t,0)\psi_0}^2\ls  \cat\begin{cases}
 	N^{-a(1-d\beta)}  & \text{ for }\, \gamma\in[1-d\beta, 1]\,,\\[10pt]
  	N^{-\gamma a}     & \text{ for }\,  \gamma\in(0, 1-d\beta]\,,
  \end{cases}
\end{equation*}
or, equivalently, that
\begin{equation*}
\lr{\xitpt,\Npt^a\xitpt}_\FNpt\ls \cat \begin{cases}
	N^{d \beta a} & \gamma\in[1-d\beta, 1]\,,\\[10pt]
	N^{(1-\gamma)a} & \gamma\in(0,1-d\beta]\,
\end{cases}
\end{equation*}}
with $\cat:=C_a^{t,0}$ and where we estimated $a,\fcaz,\fcaz''\ls1$ for the sake of readability.

\end{cor}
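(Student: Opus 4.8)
The plan is to substitute assumption A3 into the moment estimates of Proposition~\ref{thm:alpha}, specialised to $s=0$ and $\ps=\pz$, and to extract the dominant power of $N$ from the resulting finite sums. Throughout I use A3 in the form $\|(\hat{m^\pz})^{c}\psi_0\|^2\ls N^{-\gamma c}$ (estimating $\fC_{c}\ls1$), together with $\psi(t)=U(t,0)\psi_0$ and $\psibog(t)=\Ubog(t,0)\psi_0$, so that the bounds on $\|(\hat{m^\pt})^a\,\cdot\,\|^2$ transfer to the excitation–number expectations $\lr{\xipt,\Npt^a\xipt}_\FNpt$ and $\lr{\xitpt,\Npt^a\xitpt}_\FNpt$ upon multiplication by the factor $N^a$ furnished by Lemma~\ref{lem:equivalence:q:m}(b).

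For the Bogoliubov dynamics I would start from Proposition~\ref{thm:alpha:2}, which carries no error term. Inserting A3 gives, up to the constant $\cat$,
\begin{equation*}
\|(\hat{m^\pt})^a\Ubog(t,0)\psi_0\|^2\ls\sum_{n=0}^a N^{n(-1+d\beta)}N^{-\gamma(a-n)}=N^{-\gamma a}\sum_{n=0}^a N^{n(\gamma-(1-d\beta))}.
\end{equation*}
The sign of the exponent $\gamma-(1-d\beta)$ decides which endpoint of the geometric sum dominates: for $\gamma\leq 1-d\beta$ the term $n=0$ wins, yielding $N^{-\gamma a}$, while for $\gamma\geq 1-d\beta$ the term $n=a$ wins, yielding $N^{-\gamma a}N^{a(\gamma-(1-d\beta))}=N^{-a(1-d\beta)}$. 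This reproduces both cases for $\Ubog$ for all $\beta\in[0,\tfrac1d)$, and multiplying by $N^a$ via Lemma~\ref{lem:equivalence:q:m}(b) gives the equivalent $\Npt$-statements $N^{(1-\gamma)a}$ resp.\ $N^{d\beta a}$.

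For the true dynamics $U(t,0)$ I would use Proposition~\ref{thm:alpha:1}. Its first sum is identical in structure to the previous display and produces exactly the claimed main terms. The work lies in the second, $b$-dependent sum, which I treat as an error to be absorbed by a sufficiently large, $N$-independent choice of the free parameter $b=b(a,\gamma,\beta)$. Substituting A3 and reindexing by $c=b-n$, the error exponent reads $b(-1+2d\beta)+c(1-d\beta-\gamma)$; optimising over $c\in\{0\mydots b\}$ leaves either $N^{b(-1+2d\beta)}$ (when $\gamma\geq 1-d\beta$) or $N^{b(d\beta-\gamma)}$ (when $\gamma\leq 1-d\beta$). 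Taking $b$ large drives these below the main term precisely when the exponent is negative, which is where $\beta<\tfrac1{2d}$ enters the first case (making $-1+2d\beta<0$) and where $\gamma>d\beta$ enters the second (making $d\beta-\gamma<0$), exactly matching the ranges in the statement. Choosing $b\geq\tfrac{a(1-d\beta)}{1-2d\beta}$ resp.\ $b\geq\tfrac{\gamma a}{\gamma-d\beta}$ renders the error $\ls$ the main term, and since $b$ is fixed independently of $N$ the accompanying constants $2^b\cbt$ are absorbed into $\cat$. The $\Npt$-reformulations again follow on multiplying by $N^a$.

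The main obstacle is precisely this control of the error term: it forces the constraints $\beta<\tfrac1{2d}$ and $\gamma>d\beta$, and it requires the moment bounds of A3 at indices up to the chosen $b$, which strictly exceeds $a$ for $\beta>0$. One must be careful here, because the high-$c$ contributions in the error sum \emph{cannot} be handled by the trivial bound $\|\hat{m^\pz}\|\ls1$ alone: doing so replaces $N^{-\gamma c}$ by $N^0$ and reinstates a growing factor $N^{bd\beta}$. Hence the estimate genuinely relies on A3 holding up to order $b$, a fixed ($N$-independent) multiple of $a$; in applications one therefore takes the parameter $A$ in A3 large enough to supply these higher moments.
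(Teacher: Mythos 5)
Your proposal is correct and follows essentially the same route as the paper's proof: part (b) is Proposition~\ref{thm:alpha:2} combined with A3, with the dominant endpoint of the sum determined by the sign of $\gamma-(1-\D\beta)$, and part (a) additionally absorbs the $b$-dependent error sum of Proposition~\ref{thm:alpha:1} by taking $b$ large (the paper chooses $b>a\tfrac{1-\D\beta}{1-2\D\beta}$ for $\gamma\geq 1-\D\beta$, matching your first case), which is exactly where the restrictions $\beta<\tfrac{1}{2\D}$ and $\gamma>\D\beta$ enter, and the Fock-space reformulations follow from Lemma~\ref{lem:equivalence:q:m}(b) as you say. Your closing caveat --- that for $\beta>0$ the argument genuinely needs the A3 moment bounds up to order $b>a$, since the trivial bound on high moments reinstates a factor $N^{\D\beta b}$, so that $a$ cannot be pushed all the way to $A$ without further assumption --- is a legitimate observation that the paper's own proof silently glosses over.
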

At the threshold $\gamma=1-\D\beta$, the leading order terms in the sums in Proposition~\ref{thm:alpha} change, hence we obtain two different estimates. The additional restrictions on $\beta$ and $\gamma$ in part (a) stem from the second sum in Proposition~\ref{thm:alpha:1}.
Only if either $\beta<\frac{1}{2\D}$ or $\gamma>\D\beta$, it is possible to choose $b$ sufficiently large that the first sum dominates for large $N$.
For $\beta=0$, both time evolutions preserve the property A3\,$^{\prime\prime}$ exactly, i.e., with the same power $\gamma$ of $N$, up to a constant growing rapidly in $t$ and $a$.
For $\beta>0$, the conservation is exact only for small $\gamma$, whereas one looses some power of $N$ for larger $\gamma$. 
Further, note that for the range $\gamma\in(0,\D\beta]$, we do not obtain a non-trivial estimate for the excitations $\xipt$ in $U(t,0)\psi_0$.

\subsection{Higher order corrections to the norm approximation}\label{subsec:main:corrections}
Based on the estimates obtained in Proposition~\ref{thm:alpha}, our main result establishes corrections of any order to the norm approximations~\eqref{eqn:norm:approx} and~\eqref{eqn:David:norm}:\ under assumption A3 on the initial data, we construct a sequence $\{\psia\}_{a\in\mathbb{N}}\subset L^2(\mathbb{R}^{\D N})$
such that 
$$
\norm{\psi(t)-\psia(t)}^2\leq C(t) N^{-a\delta(\beta,\gamma)}
$$
for some $\delta(\beta,\gamma)>0$, which may depend on $\beta$ as well as on the parameter $\gamma$ from assumption A3. 
For reasons given below, our analysis is restricted to the scaling regime $\beta\in[0,\frac{1}{4\D})$.\\

As explained in the introduction, it is well known that the actual time evolution $\psi(t)$ is close to the evolution $\psibog(t)$ from~\eqref{psibog} in norm.
Hence, the first element of the approximating sequence $\{\psia\}_{a\in\N}$ is determined by
$$\psio(t):=\Ubog(t,0)\psi_0\,.$$
Using Duhamel's formula, the difference between $U(t,s)\psi$ and $\Ubog(t,s)\psi$ can be expressed as
\begin{eqnarray}\label{eqn:Duhamel1}
U(t,s)\psi
=\Ubog(t,s)\psi-\i\int_s^tU(t,r)\left(\Cpr+\Qpr\right)\Ubog(r,s)\psi\d r
\end{eqnarray}
for any $\psi\in L^2(\R^{\D N})$. Consequently,
\begin{eqnarray}
\norm{\psi(t)-\psio(t)}&=&\left\|-\i\int_0^t U(t,s)\left(\Cps+\Qps\right)\Ubog(s,0)\psi_0\d s\right\|\nonumber\\
&\leq&\int_0^t\norm{\Cps\Ubog(s,0)\psi_0}\d s+\int_0^t\norm{\Qps\Ubog(s,0)\psi_0}\d s\label{eqn:psi1:1}
\end{eqnarray}
by the triangle inequality and as a consequence of the unitarity of $U(t,s)$.
The leading order contribution in~\eqref{eqn:psi1:1} is the term containing $\Cps$
because the cubic interaction terms are larger than the quartic ones in the following sense:
\begin{lem}\label{lem:QC}
Let $\psi\in L^2_\mathrm{sym}(\R^{\D N})$ and denote by $\pt$ the solution of~\eqref{NLS} with initial datum $\pz\in H^k(\R^\D)$, $k=\lceil\tfrac{\D}{2}\rceil$. Then for any $j\in\N_0$ and $t\in\big[0,\Tex\big)$,
\lemit{
	\item $\norm{\left(\hat{m^\pt}\right)^j\Qpt\psi}^2\ls N^{2+2\D\beta}\norm{\big(\hat{m^\pt}\big)^{4+j}\psi}^2$,
	\item $\norm{\left(\hat{m^\pt}\right)^j\Cpt\psi}^2\ls 4^{j} \norm{\pt}^2_{H^k(\R^\D)}N^{2+\D\beta}\norm{\big(\hat{m^\pt}\big)^{3+j}\psi}^2$.
}
\end{lem}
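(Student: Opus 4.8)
The plan is to dispatch the weight $(\hat{m^\pt})^j$ first and then to estimate the ``bare'' operators $\Qpt$ and $\Cpt$. Since $W_{ij}:=\vb_{ij}-\vbarpt(x_i)-\vbarpt(x_j)+2\mpt$ acts only on the coordinates $x_i,x_j$ and is flanked by $\qpt_i\qpt_j$ on both sides, each summand of $\Qpt$ leaves every excitation sector $\mathrm{ran}\,\Ppt_k$ invariant (particles $i,j$ are excited before and after, the others untouched), so $\Qpt$ commutes with $\hat{m^\pt}$ and $\norm{(\hat{m^\pt})^j\Qpt\psi}=\norm{\Qpt(\hat{m^\pt})^j\psi}$. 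Each summand of $\Cpt$, by contrast, moves exactly one particle between $\ppt$ and $\qpt$, i.e.\ shifts the excitation number by $\pm1$; the shift relation $\hat{f^\pt}A_\pm=A_\pm\hat{f^\pt_{\pm1}}$ (with $\hat{f^\pt_{j}}=\sum_n f(n+j)\Ppt_n$) then moves $(\hat{m^\pt})^j$ past $\Cpt$, and the elementary estimate $m(k\pm1)\le\sqrt2\,m(k)$ produces the constant $4^j$ while transferring the weight onto $\psi$. In both cases it remains to bound $\norm{\Qpt\sigma}$ and $\norm{\Cpt\sigma}$ for symmetric $\sigma$.

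Two ingredients feed these bounds. First, the size of the interaction: by A1, $\norm{\vb}_{L^\infty}\ls N^{\D\beta}$, which I apply at the two $\qpt$-legs of $\Qpt$; for $\Cpt$ one leg is a $\ppt_j$, and pairing $\vb$ with the condensate gives the gain
\[
\norm{\ppt_j\vb_{ij}\qpt_j}_{\mathrm{op}}\le\sup_{x_i}\Big(\int|\pt(x_j)|^2\,\vb(x_i-x_j)^2\,\d x_j\Big)^{1/2}\le\norm{\vb}_{L^2}\norm{\pt}_{L^\infty}\ls N^{\D\beta/2}\norm{\pt}_{H^k},
\]
the last step by Sobolev embedding with $k=\lceil\tfrac{\D}{2}\rceil$; the remaining pieces satisfy $\norm{\vbarpt}_{L^\infty},|\mpt|\ls\norm{\pt}_{L^\infty}^2\ls1$ and are subleading. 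This is precisely what separates the $N^{2\D\beta}$ of (a) from the $N^{\D\beta}$ of (b) and furnishes the factor $\norm{\pt}_{H^k}^2$. Second, the conversion of pair sums into powers of $\hat{m^\pt}$: on the $k$-excitation sector $\sum_{i<j}\qpt_i\qpt_j$ acts as $\binom{k}{2}$, whence $\sum_{i<j}\qpt_i\qpt_j\le\tfrac12 N^2(\hat{m^\pt})^4$, and analogously $\sum_i\qpt_i\le N(\hat{m^\pt})^2$ and $\sum_{i,j,k,l\ \mathrm{distinct}}\qpt_i\qpt_j\qpt_k\qpt_l\le N^4(\hat{m^\pt})^8$; moreover, by the symmetry of $\sigma$, a single $\qpt_1\qpt_2\sigma$ already carries the full weight, $\norm{\qpt_1\qpt_2\sigma}^2=\binom{N}{2}^{-1}\sum_{i<j}\norm{\qpt_i\qpt_j\sigma}^2\le\tfrac{N}{N-1}\norm{(\hat{m^\pt})^2\sigma}^2$.

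To assemble the sharp powers I pass to the bilinear form $\norm{\Qpt\sigma}^2=\langle\sigma,\Qpt^2\sigma\rangle$ (recall $\Qpt,\Cpt$ are self-adjoint) and organize the resulting double sum over pairs $(i,j),(k,l)$ according to the number of shared indices. In the generic disjoint case the four projections force at least four excitations; after extracting the two interaction norms $\ls N^{2\D\beta}$ and using the counting identity, the surviving $\tfrac{1}{N^2}\sum_{\mathrm{distinct}}\qpt_i\qpt_j\qpt_k\qpt_l$ is bounded by $N^2(\hat{m^\pt})^8$, so that the four $\qpt$-legs collectively produce the weight $(\hat{m^\pt})^4$ (not the $(\hat{m^\pt})^2$ that a single Cauchy–Schwarz extracts), yielding $N^{2+2\D\beta}\norm{(\hat{m^\pt})^4\sigma}^2$. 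The terms with one or two coincident indices carry fewer free summation indices and are of lower or equal order. The cubic operator is handled in the same way, now with three $\qpt$-legs per factor, the $\ppt$-leg estimate above, and $\sum_{\mathrm{distinct}}\qpt\qpt\qpt\le N^3(\hat{m^\pt})^6$, which gives $N^{2+\D\beta}\norm{(\hat{m^\pt})^3\sigma}^2$ together with the $\norm{\pt}_{H^k}^2$ from the Sobolev bound.

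The hard part is exactly this combinatorial bookkeeping of the double sum: one must check that the coincidence cases do not beat the disjoint one, and that in each group the surviving projections can be rewritten — via the $\hat{m^\pt}$-calculus and the symmetry of $\sigma$ — as precisely the advertised power of $\hat{m^\pt}$, since a naive triangle inequality loses two factors of $\hat{m^\pt}$ (equivalently a full power of $N$, as one sees by testing on a fixed-excitation state). The other genuinely necessary subtlety is to use the $L^2$-pairing of $\vb$ with $\pt$ rather than the cruder $\norm{\vb}_{L^\infty}$ at the $\ppt$-leg of $\Cpt$; this is what saves the factor $N^{\D\beta}$ distinguishing (b) from (a).
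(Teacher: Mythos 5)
Your proposal is correct and takes essentially the same route as the paper's own proof: commute the weight through $\Qpt$ exactly (and shift it through $\Cpt$ with the $\hat{f^\pt_{\pm1}}$-calculus, absorbing the shift into the $4^j$), expand the square into a double sum over pairs classified by coincident indices, bound the interaction by $\norm{\vb}_{L^\infty}\ls N^{\D\beta}$ at $\qpt$-legs and by the $L^2$-pairing $\onorm{\ppt_1\vb_{12}}\ls N^{\D\beta/2}\norm{\pt}_{H^k(\R^\D)}$ at the $\ppt$-leg, and convert the surviving projections into powers of $\hat{m^\pt}$ via $(\hat{n^\pt})^2=N^{-1}\sum_j\qpt_j$ and symmetry. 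Your emphasis on retaining all three or four $\qpt$-legs in each coincidence class (rather than losing them to a naive Cauchy--Schwarz) is exactly the point of the paper's intermediate bounds $N^{2\D\beta}\big(\norm{\qpt_1\qpt_2\,\cdot\,}^2+N\norm{\qpt_1\qpt_2\qpt_3\,\cdot\,}^2+N^2\norm{\qpt_1\qpt_2\qpt_3\qpt_4\,\cdot\,}^2\big)$, so no gap remains.
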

The proof of this lemma is postponed to Section~\ref{subsec:proof:thm:corrections}.
For $j=0$, it gives a bound on the cubic and quartic terms; the more general statement $j\geq0$ is included for later convenience.

When applying Lemma~\ref{lem:QC} to~\eqref{eqn:psi1:1}, we obtain expressions of the form $\norm{(\hat{m^\ps})^j\Ubog(s,0)\psi_0}^2$.
To be able to use assumption A3 on the initial data, we need to interchange, in a sense, the order of $\Ubog(s,0)$ and $(\hat{m^\ps})^j$.
This is where Proposition~\ref{thm:alpha} comes into play: from part \ref{thm:alpha:2}, it follows
for sufficiently large $N$ that
\begin{eqnarray}
\norm{\Cps\Ubog(s,0)\psi_0}^2
&\overset{\text{\ref{lem:QC}}}{\ls}&N^{2+\D\beta}\norm{(\hat{m^\ps})^3\Ubog(s,0)\psi_0}^2\nonumber\\
&\overset{\text{\ref{thm:alpha:2}}}{\ls}&C(s) \, N^{2+\D\beta}\sum_{n=0}^3 N^{n(-1+\D\beta)}\norm{(\hat{m^\pz})^{3-n}\psi_0}^2\nonumber\\
&\overset{\text{A3}}{\ls}&C(s)\, N^{2+\D\beta}\sum_{n=0}^3  N^{n(-1+\D\beta+\gamma)-3\gamma}\,.\nonumber
\end{eqnarray}
To enhance readability, we do not keep track of the different constants  $C(t)$ for now, but we specify it in more detail in Theorem \ref{thm:corrections}.
As in Corollary~\ref{cor:alpha}, the size of $\gamma$ determines the leading order term in the sum: for $\gamma\geq 1-\D\beta$, the dominant contribution issues from $n=3$, whereas otherwise the addend corresponding to $n=0$ is of leading order. Consequently, 
\begin{equation}\label{eqn:psi1:3}
\norm{\Cps\Ubog(s,0)\psi_0}^2\ls C(s)
\begin{cases}
	N^{-1+4\D\beta}&\text{ for }\;\gamma\in[1-\D\beta,1]\,,\\[5pt]
	N^{2+\D\beta-3\gamma} &\text{ for }\;\gamma\in\big(\tfrac{2+\D\beta}{3},1-\D\beta\big]\,.
\end{cases}
\end{equation}
To ensure that~\eqref{eqn:psi1:3} converges to zero as $N\to\infty$, we restricted the range of parameters $\gamma$ admitted by assumption A3 to $\gamma\in(\tfrac{2+\D\beta}{3},1]$. 
Besides, in the first case, the bound is only small for $\beta<\frac{1}{4\D}$, and the second case is anyway only possible for $\beta<\tfrac{1}{4\D}$. 
Hence, we can only cover the parameter regime $\beta\in[0,\tfrac{1}{4\D})$. 
Analogously to~\eqref{eqn:psi1:3}, we also obtain
\begin{equation}\label{eqn:psi2:2}
\norm{\Qps\Ubog(s,0)\psi_0}^2\ls C(s)
	\begin{cases}
	N^{-2+6\D\beta}&\text{ for }\;\gamma\in[1-\D\beta,1]\,,\\[5pt]
	N^{2+2\D\beta-4\gamma}&\text{ for }\;\gamma\in\big(\tfrac{2+\D\beta}{3},1-\D\beta\big]\,.
	\end{cases}
\end{equation}
Note that $\beta<\tfrac{1}{4\D}$ implies that $-2+6\D\beta<-1+4\D\beta$, and besides, it follows from $\gamma>\tfrac{2+3\D}{3}$ and $\beta<\tfrac{1}{4\D}$ that 
 $2+2\D\beta-4\gamma<2+\D\beta-3\gamma$.
Consequently, the contribution with $\Cps$ dominates in~\eqref{eqn:psi1:1} for sufficiently large $N$, which leads to the estimate
\begin{equation}\label{eqn:psi1:2}
\norm{\psi(t)-\psio(t)}^2\ls C(t)N^{-\delta(\beta,\gamma)}
\end{equation}
with 
\begin{equation}\label{delta}
\delta(\beta,\gamma):=\begin{cases}
	1-4\D\beta &\text{ for }\;\gamma\in[1-\D\beta, 1]\,,\\[3pt]
	-2-\D\beta+3\gamma&\text{ for }\;\gamma\in\big(\tfrac{2+\D\beta}{3},1-\D\beta\big]\,.
	\end{cases}
\end{equation}
This yields~\eqref{eqn:aim} for $n=1$.\\

To construct the second element $\psit(t)$ of the approximating sequence, we need to extract from~\eqref{eqn:Duhamel1} the relevant contributions such that  
$\norm{\psi(t)-\psit(t)}^2\leq C(t)N^{-2\delta(\beta,\gamma)}$.
As a consequence of Lemma~\ref{lem:QC}, we define
\begin{eqnarray*}
\psit(t)&:=&\Ubog(t,0)\psi_0-\i\int_0^t\d s\,\Ubog(t,s)\Cps\Ubog(s,0)\psi_0\,,
\end{eqnarray*}
which equals the leading order contribution in~\eqref{eqn:Duhamel1} but with the true time evolution $U(t,s)$ replaced by $\Ubog(t,s)$.
Put differently, the leading order contribution is cancelled but for the difference between $U(t,s)$ and $\Ubog(t,s)$.
Since this difference is evaluated on $\Cps\Ubog(s,0)\psi_0$, which is small in norm, this is an improvement compared to the first order approximation $\psio(t)$. 
To verify this, let us compute the difference between $\psi(t)$ and $\psit(t)$. Using Duhamel's formula twice, we obtain
\begin{eqnarray*}
\psi(t)-\psit(t)
&=&-\i\int_0^t\left(U(t,s)-\Ubog(t,s)\right)\Cps\Ubog(s,0)\psi_0\d s\\
&&-\i\int_0^tU(t,s)\Qps\Ubog(s,0)\psi_0\d s\\
&=&-\int_0^t\d s_1\int_{s_1}^t\d s_2\,U(t,s_2)\left(\Cpst+\Qpst\right)\Ubog(s_2,s_1)\Cpso\Ubog(s_1,0)\psi_0\\
&&-\i\int_0^tU(t,s)\Qps\Ubog(s,0)\psi_0\d s.
\end{eqnarray*}
Due to the unitarity of $U(t,s)$, we obtain with the triangle inequality
\begin{eqnarray}
\norm{\psi(t)-\psit(t)}
&\leq&\int_0^t\d s_1\int_{s_2}^t\d s_2\norm{\Cpst\Ubog(s_2,s_1)\Cpso\Ubog(s_1,0)\psi_0}\nonumber\\
&&+\int_0^t\d s_1\int_{s_1}^t\d s_2\norm{\Qpst\Ubog(s_2,s_1)\Cpso\Ubog(s_1,0)\psi_0}\nonumber\\
&&+\int_0^t\d s\norm{\Qps\Ubog(s,0)\psi_0}\,.\label{eqn:psi-psi2}
\end{eqnarray}
The leading order term in~\eqref{eqn:psi-psi2} can be estimated as
\begin{eqnarray}
&&\hspace{-2cm}\norm{\Cpst\Ubog(s_2,s_1)\Cpso\Ubog(s_1,0)\psi_0}^2\nonumber\\
&\overset{\text{\ref{lem:QC},\ref{thm:alpha:2}}}{\ls} &N^{2+\D\beta}C(s_1,s_2)\,\sum_{n=0}^3N^{n(-1+\D\beta)}\norm{(\hat{m^{\varphi(s_1)}})^{3-n}\Cpso\Ubog(s_1,0)\psi_0}^2\nonumber\\
&\overset{\text{\ref{lem:QC},\ref{thm:alpha:2}}}{\ls}& N^{4+2\D\beta}C(s_1,s_2)\,\sum_{n=0}^3\sum_{l=0}^{6-n}\, N^{(n+l)(-1+\D\beta)}\norm{(\hat{m^\pz})^{6-n-l}\psi_0}^2\nonumber\\
&\overset{\text{A3}}{\ls}& N^{4+2\D\beta}C(s_1,s_2)\,\sum_{n=0}^3\sum_{l=0}^{6-n} N^{(n+l)(-1+\D\beta+\gamma)-6\gamma}\,.\nonumber
\end{eqnarray}
As before, considering the two ranges of $\gamma$ separately yields for sufficiently large $N$
\begin{equation*} 
\norm{\Cpst\Ubog(s_2,s_1)\Cpso\Ubog(s_1,0)\psi_0}^2 \ls C(s_1,s_2)\, N^{-2\delta(\beta,\gamma)}
\end{equation*}
with $\delta(\beta,\gamma)$ from~\eqref{delta}.
Analogously, the second term can be estimated as
\begin{eqnarray*} 
\norm{\Qpst\Ubog(s_2,s_1)\Cpso\Ubog(s_1,0)\psi_0}^2\ls \;C(s_1,s_2)
	\begin{cases}
	N^{-3+10\D\beta}&\text{ for }\;\gamma\in[1-\D\beta, 1]\,,\\[5pt]
	N^{4+3\D\beta-7\gamma}&\text{ for }\;\gamma\in\big(\tfrac{2+\D\beta}{3},1-\D\beta\big]\,,
	\end{cases}
\end{eqnarray*}
and the third term was already treated in~\eqref{eqn:psi2:2}.
Combining all bounds, we obtain
\begin{eqnarray*}
\norm{\psi(t)-\psit(t)}^2&\ls&C(t)N^{-2\delta(\beta,\gamma)}\,,
\end{eqnarray*}
which yields~\eqref{eqn:aim} for $a=2$.\\

Iterating  Duhamel's formula $(a-1)$ times, we construct $\psia(t)$ as an expansion with $a-1$ terms, where the last term contains the true time evolution $U(t,s)$ and all others exclusively contain $\Ubog(t,s)$. Consequently, to construct $\psith(t)$, we iterate~\eqref{eqn:Duhamel1} once more, which yields
\begin{equation*}\begin{split}
&\left(U(t,0)-\Ubog(t,0)\right)\psi\\
&\qquad=-\i\int_0^t\d s\,\Ubog(t,s)\left(\Cps+\Qps\right)\Ubog(s,0)\psi\\
&\qquad\quad\, -\int_0^t\d s_1\int_{s_1}^t\d s_2\,U(t,s_2)\left(\mathcal{C}^{\varphi(s_2)}+\mathcal{Q}^{\varphi(s_2)}\right)\Ubog(s_2,s_1)\left(\mathcal{C}^{\varphi(s_1)}+\mathcal{Q}^{\varphi(s_1)}\right)\Ubog(s_1,0)\psi\,.
\end{split}\end{equation*}
The leading order contributions issue from the first integral and from the expression with two cubic interaction terms. Analogously to above, they determine the next element $\psith$ of the sequence $\{\psia\}_{a\in\N}$ as
\begin{eqnarray*}
\psith(t)&:=&\Ubog(t,0)\psi-\i\int_0^t\d s\,\Ubog(t,s)\left(\Cps+\Qps\right)\Ubog(s,0)\psi_0\\
&&-\int_0^t\d s_1\int_{s_1}^t\d s_2\,\Ubog(t,s_2)\,\mathcal{C}^{\varphi(s_2)}\Ubog(s_2,s_1)\,\mathcal{C}^{\varphi(s_1)}\Ubog(s_1,0)\psi_0\,,
\end{eqnarray*}
and similar calculations as before yield $\norm{\psi(t)-\psith(t)}^2\ls C(t)N^{-3\delta(\beta,\gamma)}$.
Continuing the iteration of~\eqref{eqn:Duhamel1}, we obtain for any $a\geq 1$ and $s_0=0$ the expansion
\begin{eqnarray}
\psi(t)
	&=&\sum\limits_{n=0}^{a-1}(-\i)^{n}\int\limits_0^t\d s_1\int\limits_{s_1}^t\d s_2\mycdots\hspace{-4pt}\int\limits_{s_{n-1}}^t\d s_{n}\,
	\Ubog(t,s_{n})\left(\mathcal{C}^{\varphi(s_{n})}+\mathcal{Q}^{\varphi(s_{n})}\right)\Ubog(s_{n},s_{n-1})\mycdots\times\nonumber\\
	&&\hspace{6.5cm}\times\Ubog(s_2,s_1)\left(\mathcal{C}^{\varphi(s_1)}+\mathcal{Q}^{\varphi(s_1)}\right)\Ubog(s_1,0)\psi_0\nonumber\\
	&&+(-\i)^{a}\int\limits_0^t\d s_1\int\limits_{s_1}^t\d s_2\mycdots\int\limits_{s_{a-1}}^t\d s_{a}\,
	U(t,s_{a})\left(\mathcal{C}^{\varphi(s_{a})}+\mathcal{Q}^{\varphi(s_{a})}\right)\Ubog(s_{a},s_{a-1})\mycdots\times\nonumber\\
	&&\hspace{6.5cm}\times\Ubog(s_2,s_1)\left(\mathcal{C}^{\varphi(s_1)}+\mathcal{Q}^{\varphi(s_1)}\right)\Ubog(s_1,0)\psi_0\nonumber\\
&=&\sum\limits_{n=0}^{a-1}\prod\limits_{\nu=1}^n\left(-\i\int_{s_{\nu-1}}^t\d s_\nu \right)\Ubog(t,s_n)
	\prod\limits_{\l=0}^{n-1}\left(\left(\mathcal{C}^{\varphi(s_{n-\l})}+\mathcal{Q}^{\varphi(s_{n-\l})}\right)\Ubog(s_{n-\l},s_{n-\l-1})\right)\psi_0\nonumber\\
&&+\prod\limits_{\nu=1}^{a}\left(-\i\int_{s_{\nu-1}}^t\d s_\nu \right)U(t,s_{a})
	\prod\limits_{\l=0}^{a-1}\left(\left(\mathcal{C}^{\varphi(s_{a-\l})}+\mathcal{Q}^{\varphi(s_{a-\l})}\right)\Ubog(s_{a-\l},s_{a-\l-1})\right)\psi_0\, .\qquad
	\label{eqn:Duhamel:k}
\end{eqnarray}
All products are to be understood as ordered, i.e.\ $\prod_{\l=0}^L P_\l:=P_0P_1\cdots P_L$ for $L\in\N$ and any expressions $P_\l$.
Extracting the leading contributions in each order, we construct the  sequence $\{\psia(t)\}_{a\in\N}$ as follows:
\begin{definition}\label{def:psik}
Let $I_1^\pt:=\Cpt$ and $I_2^\pt:=\Qpt$.
Define the set 
$$\mathcal{S}^{(k)}_n:=\left\{(j_1\mydots j_n):\; j_\l\in\{1,2\} \text{ for } \l=1\mydots n \text{ and } \sum_{\l=1}^n j_\l=k\right\},$$
i.e., the set of $n$-tuples with elements in $\{1,2\}$ such that the elements of each tuple add  to $k$.
Define for $n\in\N$ and $n\leq k\leq 2n$
\begin{eqnarray*}
\Tnk
&:=&\sum_{(j_1\mydots j_n)\in\mathcal{S}^{(k)}_n}(-\i)^n
	\prod\limits_{\nu=1}^n\left(\;\int\limits_{s_{\nu-1}}^t\d s_\nu\right)
	\Ubog(t,s_n)\prod\limits_{\l=0}^{n-1}\left(I_{j_{n-\l}}^{\varphi(s_{n-\l})}\Ubog(s_{n-\l},s_{n-\l-1})\right)\psi_0
\\
&=&(-\i)^n \int\limits_0^t\d s_1\int\limits_{s_1}^t\d s_2\mycdots\int\limits_{s_{n-1}}^t\d s_n\, 
	\Ubog(t,s_n)\times\\
	&&\times\sum_{(j_1\mydots j_n)\in\mathcal{S}^{(k)}_n}
	\left(I_{j_n}^{\varphi(s_n)}	\Ubog(s_n,s_{n-1})I_{j_{n-1}}^{\varphi(s_{n-1})}\mycdots
	\Ubog(s_2,s_1)I_{j_1}^{\varphi(s_1)}\right)
	\Ubog(s_1,0)\psi_0\,,
\end{eqnarray*}
where  $s_0:=0$.
As above, the products are ordered.
For $n=k=0$, let $\Tzz:=\Ubog(t,0)\psi_0$, and $\Tnk:=0$ for $k< n$ and $k>2n$.
Hence, $\Tnk$ is an $n$-dimensional integral where the integrand contains all possible combinations of $I^{\varphi(s_l)}_{j_l}$ such that $\sum_{l=1}^n j_l=k$.
\\
Finally, the elements of the sequence $\{\psia\}_{a\in\N}$ are defined as
\begin{eqnarray*}
\psia(t)&:=&\sum\limits_{k=0}^{a-1}\sum_{n=\lceil\frac{k}{2}\rceil}^k \Tnk\,
\;=\; \sum\limits_{n=0}^{a-1}\;\sum\limits_{k=n}^{\min\{2n,a-1\}}\Tnk.
\end{eqnarray*}
\end{definition}

\begin{thm}\label{thm:corrections}
Let $\beta\in[0,\frac{1}{4\D})$ and assume \emph{A1} -- \emph{A3} with $A\in\{1\mydots N\}$ and $\gamma\in(\tfrac{2+\D\beta}{3},1]$.
Let $\psi(t)$ and $\pt$ denote the solutions of~\eqref{SE} and~\eqref{NLS} with initial data $\psi_0$ and $\varphi_0$ from \emph{A3}, respectively,
and let $\psi^{(a)}_\varphi(t)$ be defined as in Definition~\ref{def:psik}. Then for sufficiently large $N$, $t\in\big[0,\Tex\big)$ and $a\in\{1\mydots\lfloor\tfrac{A}{6}\rfloor\}$, there exists a constant $c(a)$ such that
\begin{equation*}
\norm{\psi(t)-\psia(t)}^2\ls \e^{c(a)\int\limits_0^t\norm{\varphi(s)}^2_{H^k(\R^\D)}\d s}\; N^{-a\delta(\beta,\gamma)},
\end{equation*}
where 
\begin{equation*}
\delta(\beta,\gamma)=\begin{cases}
	1-4\D\beta &\text{ for }\;\gamma\in[1-\D\beta,1]\,,\\[3pt]
	3\gamma-2-\D\beta&\text{ for }\;\gamma\in\big(\tfrac{2+\D\beta}{3},1-\D\beta\big]\,.
	\end{cases}
\end{equation*}
\end{thm}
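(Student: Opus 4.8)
The plan is to start from the fully iterated Duhamel expansion \eqref{eqn:Duhamel:k} and compare it term by term with Definition~\ref{def:psik}. Matching the two, the error $\psi(t)-\psia(t)$ splits into exactly two kinds of contributions: the \emph{remainder} $R_a$, i.e.\ the $a$-fold integral in \eqref{eqn:Duhamel:k} carrying the true evolution $U(t,s_a)$ at its front, and the \emph{dropped terms} $\Tnk$ with $\lceil a/2\rceil\le n\le a-1$ and $a\le k\le 2n$, which occur in the expansion of $\psi(t)$ but are discarded from $\psia(t)$. Since the number of these contributions depends only on $a$, and is therefore $\mathcal{O}(1)$ because $a\le\lfloor A/6\rfloor\ls 1$, it suffices by the triangle inequality to bound each one by $N^{-a\delta(\beta,\gamma)/2}$ (times a time-dependent factor) in norm.

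The core is a single \emph{master estimate}: every ordered product $I_{j_n}^{\varphi(s_n)}\Ubog(s_n,s_{n-1})\cdots I_{j_1}^{\varphi(s_1)}\Ubog(s_1,0)$ of \emph{weight} $k:=\sum_l j_l$, with $I_1=\mathcal{C}$ and $I_2=\mathcal{Q}$ as in Definition~\ref{def:psik}, satisfies $\norm{I_{j_n}^{\varphi(s_n)}\Ubog(s_n,s_{n-1})\cdots\Ubog(s_1,0)\psi_0}^2\ls C(\vec s)\,N^{-k\delta(\beta,\gamma)}$ with $\delta$ as in \eqref{delta}. I would prove this by induction on the number $n$ of interaction factors, peeling off the outermost factor with Lemma~\ref{lem:QC} --- which converts one $\mathcal{C}$ (resp.\ $\mathcal{Q}$) into three (resp.\ four) additional powers of $\hat{m^{\varphi(s_n)}}$ together with a prefactor $N^{2+\D\beta}$ (resp.\ $N^{2+2\D\beta}$) --- then commuting these $\hat{m}$-weights past the adjacent $\Ubog$ with Proposition~\ref{thm:alpha:2}, and iterating. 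The weights only accumulate on $\psi_0$ at the very end, where assumption A3 turns them into negative powers of $N$ via $\norm{(\hat{m^\pz})^j\psi_0}^2\ls\fcjz N^{-\gamma j}$. The computations already carried out in the text for $\norm{\Cps\Ubog(s,0)\psi_0}^2$ and $\norm{\Qps\Ubog(s,0)\psi_0}^2$ (see \eqref{eqn:psi1:3} and \eqref{eqn:psi2:2}) are precisely the base case: one cubic factor costs $N^{-\delta}$ and one quartic factor costs $N^{-2\delta}$, and the inductive step multiplies these gains to yield $N^{-k\delta}$ for weight $k$.

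Granting the master estimate, both families of error terms are controlled at once. Each dropped $\Tnk$ has weight $k\ge a$, hence $\norm{\Tnk}^2\ls N^{-k\delta}\le N^{-a\delta}$; and the remainder $R_a$, after discarding its outermost $U(t,s_a)$ by unitarity, is a product of exactly $a$ interaction factors, hence of weight at least $a$, so that $\norm{R_a}^2\ls N^{-a\delta}$ as well. The nested time integrals over the simplex $0\le s_1\le\cdots\le s_n\le t$ contribute only powers of $t^n/n!$, which together with the constants $\cjts$ from Proposition~\ref{thm:alpha} and the $\norm{\pt}^2_{H^k(\R^\D)}$ factors from Lemma~\ref{lem:QC} assemble into the single exponential $\e^{c(a)\int_0^t\norm{\varphi(s)}^2_{H^k(\R^\D)}\d s}$. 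The restriction $\beta\in[0,\tfrac{1}{4\D})$ together with $\gamma\in(\tfrac{2+\D\beta}{3},1]$ is exactly what makes $\delta(\beta,\gamma)>0$ in both regimes of \eqref{delta}, so that every bound genuinely decays, while $a\le\lfloor A/6\rfloor$ guarantees that the highest $\hat{m}$-moment of $\psi_0$ produced in the induction (of order at most a fixed multiple of $a$) stays within the range $\{0\mydots A\}$ covered by A3.

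The main obstacle is the uniform bookkeeping of powers of $N$ through the induction. One must verify that, in \emph{both} regimes $\gamma\ge 1-\D\beta$ and $\gamma\le 1-\D\beta$, the dominant branch among the finitely many terms produced by each application of Proposition~\ref{thm:alpha:2} is the one realising exactly the gain $N^{-\delta}$ per cubic and $N^{-2\delta}$ per quartic, with no branch ever larger; this is where the interplay of the Lemma~\ref{lem:QC} blow-ups $N^{2+\D\beta},N^{2+2\D\beta}$, the Proposition~\ref{thm:alpha:2} weights $N^{n(-1+\D\beta)}$, and the A3 gains must be balanced, and where the threshold $\beta=\tfrac{1}{4\D}$ enters. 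A secondary difficulty is keeping the combinatorial and time-dependent constants under control: the number of tuples grows like $2^n$ and the Proposition~\ref{thm:alpha:2} constants grow super-exponentially in the moment order, so one must check that for $a\ls 1$ these all collapse into the stated factor $\e^{c(a)\int_0^t\norm{\varphi(s)}^2_{H^k(\R^\D)}\d s}$ with an $a$-dependent, $N$-independent $c(a)$.
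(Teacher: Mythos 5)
Your proposal is correct and follows essentially the same route as the paper's proof: the same splitting of $\psi(t)-\psia(t)$ into the dropped terms $\Tnk$ of weight $k\geq a$ and the Duhamel remainder of weight $\geq a$, and the same master estimate, obtained by alternately peeling off interaction factors with Lemma~\ref{lem:QC}, propagating the $\hat{m}$-weights through $\Ubog$ with Proposition~\ref{thm:alpha:2}, closing with A3 on $\psi_0$, and verifying that in both $\gamma$-regimes the dominant branch is the all-cubic one (weight gain exactly $N^{-\delta(\beta,\gamma)}$ per cubic and at least $N^{-2\delta(\beta,\gamma)}$ per quartic factor). The only structural deviation is that the paper works with $\psiao$ and targets the stronger rate $N^{-(a+1)\delta(\beta,\gamma)}$, which forces it to expand the weight-$a$ difference $\Ttaa-\Taa$ by one further application of Duhamel's formula, whereas your direct targeting of $\psia$ at rate $N^{-a\delta(\beta,\gamma)}$ lets you bound the weight-$a$ remainder term by unitarity alone, so that extra step is legitimately absent from your argument.
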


Hence, given any desired precision of the approximation, there exists some $a\in\N$ such that the corresponding function $\psia(t)$ approximates the actual $N$-body dynamics $\psi(t)$ to this order for large $N$. 
To compute $\psia(t)$, an $a$-dependent  number of steps is required, as well as the knowledge of the first quantised Bogoliubov time evolution. 
We cover initial states where the first $A$ moments of the number of excitations are sub-leading, where $A$ depends on $a$ but is independent of $N$.

\section{Proofs}\label{sec:proofs}
\subsection{Preliminaries}\label{subsec:preliminaries}
\begin{lem}\label{lem:pt}
Let $\pz\in H^k(\R^\D)$ for $k=\lceil\tfrac{\D}{2}\rceil$, $t\in\big[0,\Tex\big)$ and $\pt$ the solution of~\eqref{NLS} with inital datum $\pz$.
\lemit{
	\item 	\label{lem:pt:2}	
			Let $f:\R^\D\times\R^\D\rightarrow\R$ be a measurable function such that $|f(z_j,z_k)|\leq F(z_k-z_j)$ almost everywhere for some 
			$F:\R^\D\rightarrow\R $. Then 
			$$\onorm{\ppt_1f(x_1,x_2)}\ls\norm{\pt}_{H^k(\R^\D)}\norm{F}_{L^2(\R^\D)}.$$
	\item 	\label{lem:pt:3}
			Let $f:\mathbb{N}_0\rightarrow\R^+_0$. Then
			$\Ppt_k,\;\hat{f^\pt}\in \mathcal{C}^1\big(\R,\mathcal{L}\left(L^2(\R^{\D N})\right)\big)$ for $0\leq k\leq N$
			and
			$$\tfrac{\d}{\d t}\hat{f^\pt}=\i\Big[\hat{f^\pt},\sum\limits_{j=1}^N \hpt_j(t)\Big],$$
			where $\hpt_j(t)$ denotes the one-particle operator $\hpt(t)$ from \eqref{NLS} acting on the $j$\textsuperscript{th} coordinate.		
}
\end{lem}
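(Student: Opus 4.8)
For part~(a) the plan is to reduce the operator to a multiplication operator by squaring. Since $\ppt_1$ is self-adjoint and $f$ is real-valued, one has $\onorm{\ppt_1 f(x_1,x_2)}^2=\onorm{\ppt_1 f(x_1,x_2)^2\ppt_1}$, and because $\ppt_1=|\pt\rangle\langle\pt|$ acts only in the first coordinate the sandwiched operator collapses to
$$\ppt_1 f(x_1,x_2)^2\ppt_1=g(x_2)\,\ppt_1\,,\qquad g(x_2):=\int_{\R^\D}|\pt(y)|^2 f(y,x_2)^2\d y\,.$$
The operator norm of $g(x_2)\ppt_1$ is at most $\norm{g}_{L^\infty(\R^\D)}$, and the pointwise bound $|f(y,x_2)|\le F(x_2-y)$ together with $|\pt(y)|^2\le\norm{\pt}_{L^\infty(\R^\D)}^2$ gives $\norm{g}_{L^\infty(\R^\D)}\le\norm{\pt}_{L^\infty(\R^\D)}^2\norm{F}_{L^2(\R^\D)}^2$. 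The claim then follows from the Sobolev embedding $\norm{\pt}_{L^\infty(\R^\D)}\ls\norm{\pt}_{H^k(\R^\D)}$ for $k=\lceil\tfrac{\D}{2}\rceil$ already used in Lemma~\ref{lem:Bog}; the only delicate point is that the nonstandard ceiling makes $k>\tfrac{\D}{2}$ strictly, so this embedding is valid in every dimension. This part is essentially a direct computation.

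For part~(b), since $\hat{f^\pt}=\sum_{k=0}^N f(k)\Ppt_k$ is a finite sum and each $\Ppt_k$ is a finite sum of ordered products of the single-coordinate factors $\ppt_j$ and $\qpt_j=\mathbbm{1}-\ppt_j$, it suffices to establish $\mathcal{C}^1$-regularity and the derivative formula for these elementary factors and then propagate them through the product rule. The first step is to show that $t\mapsto\ppt\in\mathcal{L}(L^2(\R^\D))$ is continuously differentiable in operator norm. Using that $t\mapsto\pt$ solves~\eqref{NLS} and is $\mathcal{C}^1$ into $L^2(\R^\D)$ with $\tfrac{\d}{\d t}\pt=-\i\hpt(t)\pt$ on $[0,\Tex)$, I would differentiate the rank-one operator $\ppt=|\pt\rangle\langle\pt|$ via the identity $\onorm{|a\rangle\langle b|}=\norm{a}\norm{b}$ for $a,b\in L^2(\R^\D)$, obtaining
$$\tfrac{\d}{\d t}\ppt=|\tfrac{\d}{\d t}\pt\rangle\langle\pt|+|\pt\rangle\langle\tfrac{\d}{\d t}\pt|=-\i\hpt(t)\ppt+\i\ppt\hpt(t)=\i[\ppt,\hpt(t)]\,,$$
where self-adjointness of $\hpt(t)$ moves the Hamiltonian across the bra and the scalar $\mpt$ drops out of the commutator; lifting to coordinate $j$ then gives $\tfrac{\d}{\d t}\ppt_j=\i[\ppt_j,\hpt_j(t)]$ and $\tfrac{\d}{\d t}\qpt_j=\i[\qpt_j,\hpt_j(t)]$ on $L^2(\R^{\D N})$.

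The core of part~(b) is then a purely algebraic telescoping. For any ordered product $B_1\cdots B_N$ of single-coordinate factors $B_l\in\{\ppt_l,\qpt_l\}$, the product rule and $\tfrac{\d}{\d t}B_l=\i[B_l,\hpt_l(t)]$ give
$$\tfrac{\d}{\d t}(B_1\cdots B_N)=\i\sum_{l=1}^N B_1\cdots[B_l,\hpt_l(t)]\cdots B_N=\i\Big[B_1\cdots B_N,\sum_{l=1}^N\hpt_l(t)\Big]\,,$$
the second equality using that $\hpt_l(t)$ acts only on the $l$-th coordinate and hence commutes with every factor $B_m$, $m\ne l$, so that it may be pulled out to either end of the product. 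Summing over the subsets $J$ with $|J|=k$ yields $\tfrac{\d}{\d t}\Ppt_k=\i[\Ppt_k,\sum_l\hpt_l(t)]$, and the finite sum over $k$ gives the stated formula for $\hat{f^\pt}$. I expect the main obstacle to be not this algebra but the analytic justification behind the previous paragraph: for $[\ppt,\hpt(t)]$ to be a bounded operator and for the operator-norm derivative to exist one needs $\hpt(t)\pt\in L^2(\R^\D)$, i.e.\ $\pt\in H^2(\R^\D)$. This holds on $[0,\Tex)$ for $\D\ge2$, where $k\ge2$, while for $\D=1$ it requires the extra smoothness propagated by~\eqref{NLS}. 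I would therefore isolate this regularity input at the outset, treating the commutator of $\ppt$ with the bounded multiplication part $\Vext(t)+\vbarpt-\mpt$ of $\hpt(t)$, which is manifestly bounded, separately from the commutator with the Laplacian, which is a bounded operator precisely when $\pt\in H^2(\R^\D)$.
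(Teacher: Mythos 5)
Your proof is correct and is in substance the same as the paper's, which disposes of both parts by citation to \cite{pickl2015}: your part~(a) is precisely the squaring argument $\onorm{\ppt_1 f}^2=\onorm{\ppt_1 f^2\ppt_1}$ with the collapse $\ppt_1f^2\ppt_1=g(x_2)\ppt_1$ behind the cited Lemma~4.1, supplemented by the same Sobolev-embedding remark (and you correctly observe that the paper's ceiling convention makes $k>\tfrac{\D}{2}$ strict, so the embedding holds in every dimension), while your part~(b) --- differentiating $\ppt$ via \eqref{NLS} and telescoping the derivative through the ordered products that define $\Ppt_k$ --- is exactly the computation behind the cited Lemma~6.2. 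The one caveat concerns your closing remark for $\D=1$: the flow \eqref{NLS} preserves regularity but does not create it, so with only $\pz\in H^1(\R)$ the commutator $[\ppt,-\Delta]$ is genuinely not bounded on $L^2$ and no argument can deliver the statement as literally formulated; this is a defect of the lemma's hypotheses in one dimension (shared, and silently glossed over, by the paper's citation), not of your proof, which is complete for every $\D\ge 2$.
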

\begin{proof}
For part (a), see, e.g.,~\cite[Lemma 4.1]{pickl2015} and note that $\norm{\pt}_{L^\infty(\R^\D)}\ls\norm{\pt}_{H^k(\R^\D)}$ by the Sobolev embedding theorem.
Part (b) can be shown as in the proof of~\cite[Lemma 6.2]{pickl2015}.
\end{proof}

\begin{lem}\label{lem:fqq}
Let $\psi\in L^2_\mathrm{sym}(\R^{\D N})$, $\varphi\in L^2(\R^d)$ and $f:\mathbb{N}_0\rightarrow\mathbb{R}_0^+$.
\lemit{
	\item 	\label{lem:fqq:1}
			$\left(\hat{n^\varphi}\right)^2=\frac{1}{N}\sum\limits_{j=1}^N \qp_j\,.$
	\item	\label{lem:fqq:2}
			Let $a\in\{1\mydots N\}$. Then for $j\in\{0\mydots a\}$,
			$$\norm{\qp_1\cdots \qp_a\psi}^2 \leq \norm{\qp_1\cdots \qp_j \left(\hat{n^\varphi}\right)^{a-j}\psi}^2\,.$$
	\item	\label{lem:fqq:3}
			In particular, this implies
			$$\left\|\hat{f^\varphi}\qp_1\psi\right\|^2\leq \left\|\hat{f^\varphi}\hat{n^\varphi}\psi\right\|
			^2, \qquad 
			\left\|\hat{f^\varphi}\qp_1\qp_2\psi\right\|^2\leq\left\|\hat{f^\varphi}\left(\hat{n^\varphi}\right)^2\psi\right\|^2\,. $$
}
\end{lem}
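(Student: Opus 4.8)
The plan is to let part (a) together with a single elementary commutation relation drive the whole lemma. For part (a), I would start from the fact that the $\{\Pp_k\}_{k=0}^N$ are mutually orthogonal projections summing to $\mathbbm{1}$, so that $(\hat{n^\varphi})^2=\sum_{k=0}^N\tfrac{k}{N}\Pp_k$. It then suffices to show $\sum_{j=1}^N\qp_j=\sum_{k=0}^N k\,\Pp_k$. This follows by a direct subset count: in the term $\prod_{l\in J}\qp_l\prod_{l\notin J}\pp_l$ of $\Pp_k$ (with $|J|=k$), applying $\qp_i$ leaves the term unchanged when $i\in J$ and annihilates it when $i\notin J$ (since $\qp_i\pp_i=0$), so $\sum_{i=1}^N\qp_i\,\Pp_k=k\,\Pp_k$; inserting $\mathbbm{1}=\sum_k\Pp_k$ gives $\sum_{i}\qp_i=\sum_k k\,\Pp_k=N(\hat{n^\varphi})^2$.

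The technical engine is the commutation relation $[\qp_i,\hat{f^\varphi}]=0$ for every $i$ and every $f$, and in particular $[\qp_i,\hat{n^\varphi}]=0$. I would prove it by splitting off one coordinate: writing $\Pp_k=\pp_1 P^{(1)}_k+\qp_1 P^{(1)}_{k-1}$, where $P^{(1)}_k$ is the $k$-excitation many-body projection acting only on coordinates $2,\dots,N$ (hence commuting with $\pp_1,\qp_1$), one checks directly that $\qp_1\Pp_k=\Pp_k\qp_1=\qp_1 P^{(1)}_{k-1}$, and by permutation symmetry the analogue holds for each $i$. Since $\hat{f^\varphi}$ and $\hat{n^\varphi}$ are linear combinations of the $\Pp_k$, they commute with every $\qp_i$; they also evidently map $L^2_\mathrm{sym}$ into itself.

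For part (b) I would run a \emph{downward} induction on the number of projections, showing that $g(i):=\norm{\qp_1\cdots\qp_i(\hat{n^\varphi})^{a-i}\psi}^2$ is non-increasing on $i\in\{0,\dots,a\}$; the assertion is then $g(a)\leq g(j)$. For the single step $g(j)\geq g(j+1)$, set $\phi:=(\hat{n^\varphi})^{a-j-1}\psi$ (still symmetric) and $\Phi:=\qp_1\cdots\qp_j\phi$. Using the commutation relation to pull $\hat{n^\varphi}$ to the left, $g(j)=\norm{\hat{n^\varphi}\Phi}^2=\langle\Phi,(\hat{n^\varphi})^2\Phi\rangle$, which by part (a) equals $\tfrac1N\sum_{i=1}^N\langle\Phi,\qp_i\Phi\rangle$. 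I would split this sum: for $i\leq j$ one has $\qp_i\Phi=\Phi$, contributing $\tfrac{j}{N}\norm{\Phi}^2$; for $i>j$, the symmetry of $\phi$ in the coordinates $j+1,\dots,N$ makes every term equal to $\langle\Phi,\qp_{j+1}\Phi\rangle=\norm{\qp_{j+1}\Phi}^2=g(j+1)$, contributing $\tfrac{N-j}{N}g(j+1)$. Hence $g(j)=\tfrac{j}{N}\norm{\Phi}^2+\tfrac{N-j}{N}g(j+1)\geq g(j+1)$, the last inequality being $\norm{\Phi}^2\geq\norm{\qp_{j+1}\Phi}^2=g(j+1)$ because $\qp_{j+1}$ is a projection.

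Part (c) is then immediate: since $\hat{f^\varphi}$ commutes with each $\qp_i$ and with $\hat{n^\varphi}$ and preserves symmetry, I would write $\norm{\hat{f^\varphi}\qp_1\psi}^2=\norm{\qp_1(\hat{f^\varphi}\psi)}^2$ and apply part (b) with $a=1,\ j=0$ to the symmetric vector $\hat{f^\varphi}\psi$, and similarly $\norm{\hat{f^\varphi}\qp_1\qp_2\psi}^2=\norm{\qp_1\qp_2(\hat{f^\varphi}\psi)}^2$ with $a=2,\ j=0$. I expect the only real obstacle to be part (b): recognizing that the induction must run downward in the number of excitation projections, and that the single step closes by averaging $\qp_{j+1}$ over the remaining coordinates through symmetry and then invoking part (a). Once the identity $g(j)=\tfrac{j}{N}\norm{\Phi}^2+\tfrac{N-j}{N}g(j+1)$ is established, monotonicity is forced by the trivial projection bound $\norm{\qp_{j+1}\Phi}\leq\norm{\Phi}$.
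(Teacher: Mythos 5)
Your proof is correct and follows essentially the same route as the paper: the single step $g(j)\geq g(j+1)$ in your downward induction is exactly the paper's iterated inequality $\norm{\qp_1\cdots\qp_j\psi}^2\leq\norm{\qp_1\cdots\qp_{j-1}\hat{n^\varphi}\psi}^2$, obtained from the same three ingredients (part (a), symmetry of the vector in the untouched coordinates, and the projection bound $\norm{\qp_{j+1}\Phi}\leq\norm{\Phi}$), merely organized as an exact identity plus a lower bound rather than a direct chain of upper bounds. The only difference is that you prove part (a) and the commutation relation $[\qp_i,\hat{f^\varphi}]=0$ from scratch, which the paper instead imports from the cited literature and its Lemma~\ref{lem:commutators:2}.
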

\begin{proof}
For simplicity, let us drop all superscripts $\varphi$.
Part (a) is shown e.g.\ in~\cite[Lemma 4.1]{pickl2015}. For part (b), observe that for any $1\leq j\leq N$,
\begin{eqnarray*}
\norm{q_1\cdots q_j\psi}^2
&=&\tfrac{j-1}{N}\lr{\psi,q_1\cdots q_{j}\psi}+\tfrac{N-j+1}{N}\lr{\psi,q_1\cdots q_j\psi}\\
&\leq&\tfrac{1}{N}\lr{\psi,q_1\cdots q_{j-1}\left(j-1+(N-j+1) q_j\right)\psi}\\
&=&\lr{\psi,q_1\cdots q_{j-1}\left(\tfrac{1}{N}\sum\limits_{l=1}^N q_l\right)\psi}
=\norm{q_1\cdots q_{j-1}\hat{n}\psi}^2
\end{eqnarray*}
by part (a). Since $\hat{n}\psi$ is again symmetric, the statement follows by iteration.
\end{proof}

\begin{lem}
Denote by $T_{ij}$ an operator acting non-trivially only on coordinates $i$ and $j$.
\lemit{
	\item	\label{lem:commutators:2}
			Let $\varphi\in L^2(\R^\D)$, let $f,g:\mathbb{N}_0\rightarrow\mathbb{R}_0^+$ be any weights and $i,j\in\{1\mydots N\}$. 
			Let  $Q^\varphi_0:=\pp_i\pp_j$, $Q^\varphi_1\in\{\pp_i\qp_j,\qp_i\pp_j\}$ and $Q^\varphi_2:=\qp_i\qp_j$. 
			Then, for $\mu,\nu\in\{0,1,2\}$,	
			$$ Q^\varphi_\mu\,\hat{f^\varphi}\,T_{ij}\,Q^\varphi_\nu=Q^\varphi_\mu \, T_{ij}\hat{f^\varphi}_{\mu-\nu}\,Q^\varphi_\nu\,.$$
	\item 	\label{lem:GammaLambda}
			Let $\Gamma,\Lambda\in L^2(\R^{\D N})$ be symmetric under the exchange of coordinates in a subset $\mathcal{M}\subseteq\{1\mydots N\}$ such that $j\notin\mathcal{M}$ and $k,l\in\mathcal{M}$. Then
		\begin{equation*}
		|\lr{\Gamma,T_{j,k}\Lambda}|\leq\norm{\Gamma}
		\Big(|\lr{T_{j,k}\Lambda,T_{j,l}\Lambda}|+|\mathcal{M}|^{-1}\norm{T_{j,k}\Lambda}^2\Big)^\frac12.
		\end{equation*}	
}
\end{lem}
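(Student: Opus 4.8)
The plan is to treat the two parts separately; each is an elementary algebraic statement, and no analytic input beyond the definitions in Definition~\ref{def:p:q} is required. For part~\ref{lem:commutators:2} the strategy is to sort excitations into those sitting on the two coordinates $i,j$ and those sitting on the remaining $N-2$ coordinates, while for part~\ref{lem:GammaLambda} the strategy is to symmetrise before applying Cauchy--Schwarz.

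For part~\ref{lem:commutators:2}, denote by $\Pi_m$ the orthogonal projection onto the subspace of $L^2(\R^{\D N})$ with exactly $m$ coordinates among $\{1\mydots N\}\setminus\{i,j\}$ excited. Since $\Pi_m$ acts trivially on the coordinates $i,j$, it commutes both with $T_{ij}$ and with every $Q^\varphi_\mu$. The key observation is that, because $Q^\varphi_\mu$ fixes exactly $\mu$ excitations among $\{i,j\}$, one has $Q^\varphi_\mu\Pp_k=Q^\varphi_\mu\Pi_{k-\mu}$, and reading the same decomposition from the other side, $\Pi_m\,Q^\varphi_\nu=\Pp_{m+\nu}\,Q^\varphi_\nu$. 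Inserting $\hat{f^\varphi}=\sum_k f(k)\Pp_k$ and combining these relations with $[T_{ij},\Pi_m]=0$ gives
\[
Q^\varphi_\mu\,\hat{f^\varphi}\,T_{ij}\,Q^\varphi_\nu
=\sum_k f(k)\,Q^\varphi_\mu\,T_{ij}\,\Pi_{k-\mu}\,Q^\varphi_\nu
=\sum_k f(k)\,Q^\varphi_\mu\,T_{ij}\,\Pp_{k-\mu+\nu}\,Q^\varphi_\nu.
\]
Relabelling the summation index by $n:=k-\mu+\nu$ converts the coefficient $f(k)$ into $f(n+(\mu-\nu))$, so the residual sum is exactly $\hat{f^\varphi_{\mu-\nu}}Q^\varphi_\nu$ by the definition of the shifted weight, which is the claim. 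The only point deserving care is that $Q^\varphi_1$ denotes just one of the two one-excitation projectors $\pp_i\qp_j$ or $\qp_i\pp_j$; but since such a $Q^\varphi_\mu$ still fixes the number of excitations on $\{i,j\}$, the identity $Q^\varphi_\mu\Pp_k=Q^\varphi_\mu\Pi_{k-\mu}$ survives verbatim.

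For part~\ref{lem:GammaLambda}, I would exploit the $\mathcal{M}$-symmetry of $\Gamma$ and $\Lambda$ to replace the single index $k$ by an average over $\mathcal{M}$ \emph{before} estimating. Because $k\in\mathcal{M}$, $j\notin\mathcal{M}$, and both $\Gamma,\Lambda$ are invariant under transpositions within $\mathcal{M}$, the inner product $\lr{\Gamma,T_{j,k'}\Lambda}$ is independent of $k'\in\mathcal{M}$, hence equals its own average:
\[
\lr{\Gamma,T_{j,k}\Lambda}=\Big\langle\Gamma,\tfrac{1}{|\mathcal{M}|}\sum_{k'\in\mathcal{M}}T_{j,k'}\Lambda\Big\rangle.
\]
Cauchy--Schwarz bounds the right-hand side by $\norm{\Gamma}$ times the norm of the averaged vector, whose square expands to $|\mathcal{M}|^{-2}\sum_{k',l'\in\mathcal{M}}\lr{T_{j,k'}\Lambda,T_{j,l'}\Lambda}$. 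Splitting this double sum into its diagonal ($k'=l'$) and off-diagonal ($k'\neq l'$) parts and again invoking $\mathcal{M}$-symmetry, every diagonal term equals $\norm{T_{j,k}\Lambda}^2$ and every off-diagonal term equals $\lr{T_{j,k}\Lambda,T_{j,l}\Lambda}$, so the sum collapses to $\tfrac{1}{|\mathcal{M}|}\norm{T_{j,k}\Lambda}^2+\tfrac{|\mathcal{M}|-1}{|\mathcal{M}|}\lr{T_{j,k}\Lambda,T_{j,l}\Lambda}$. Bounding $\tfrac{|\mathcal{M}|-1}{|\mathcal{M}|}\leq1$ and passing to absolute values (the off-diagonal inner product need not be non-negative) yields the stated estimate after taking a square root.

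The index bookkeeping in both parts is routine; the one genuinely substantive point is the symmetrisation in part~\ref{lem:GammaLambda}, where the gain of the factor $|\mathcal{M}|^{-1}$ arises precisely from the diagonal of the averaged double sum. This is what upgrades the naive bound $\norm{\Gamma}\norm{T_{j,k}\Lambda}$ into one governed by the typically much smaller exchange term $\lr{T_{j,k}\Lambda,T_{j,l}\Lambda}$, and it is exactly this improvement that makes the lemma useful for the later moment estimates.
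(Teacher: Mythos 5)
Your proof is correct. The paper gives no in-text argument for this lemma---it is proved by citation to \cite[Lemma 4.1]{pickl2015} and \cite[Lemma 4.7]{NLS}---and your two arguments (decomposing the excitation number relative to the pair $\{i,j\}$ for part (a), and symmetrising over $\mathcal{M}$ before applying Cauchy--Schwarz for part (b)) are exactly the standard proofs of those cited results; the only point worth noting is that the collapse of the off-diagonal terms in part (b) implicitly assumes $k\neq l$, the degenerate case $k=l$ being trivially true by Cauchy--Schwarz alone.
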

\begin{proof}
\cite[Lemma 4.1]{pickl2015} and \cite[Lemma 4.7]{NLS}.
\end{proof}

\noindent\emph{Proof of Lemma~\ref{lem:equivalence:q:m}.}
Let us for simplicity drop all superscripts $\varphi$. 
First, observe that 
\begin{equation}\label{eqn:n:m}\begin{cases}
n(k)^{2a}=\left(\tfrac{k}{N}\right)^a\leq \left(\tfrac{k+1}{N}\right)^a = m(k)^{2a}& \text{ for } k\geq 0,\\[2pt]
m(k)^{2a} \leq  \left(\tfrac{2k}{N}\right)^a =2^a n(k)^{2a}&\text{ for }k\geq 1\,,
\end{cases}\end{equation}
hence 
\begin{equation}\label{eqn:n:m:op}
\hat{n}^{2a}\leq \hat{m}^{2a}\leq 2^a\hat{n}^{2a}+N^{-a}
\end{equation}
in the sense of operators.
The first part of (a) follows from Lemma~\ref{lem:fqq:2} and the first line in~\eqref{eqn:n:m}.
For the second part, Lemma~\ref{lem:fqq:1} implies
\begin{eqnarray*}
\norm{\hat{n}^a\psi}^2 &=&\lr{\psi,\left(\frac{1}{N}\sum\limits_{j=1}^N q_j\right)^a\psi} = N^{-a}\lr{\psi,\sum\limits_{a_1+\mycdots+a_N=a} \binom{a}{a_1\mydots a_N}\,q_1^{a_1}\cdots q_N^{a_N}\psi}
\end{eqnarray*}
for $a_1\mydots a_N\in\{0\mydots a\}$. Due to the symmetry of $\psi$, since there are $\binom{a-1}{j-1}$ possibilities to write $a$ as the sum of $j$ positive integers and with $\binom{a}{a_1\mydots a_N}\leq a!$, this yields
\begin{eqnarray*}
\norm{\hat{n}^a\psi}^2 &=&\frac{a!}{N^a}\sum\limits_{j=1}^a \binom{N}{j}\binom{a-1}{j-1}\norm{q_1\cdots q_j\psi}^2\,.
\end{eqnarray*}
Further, note that 
\begin{equation} 
\max\limits_{j=\{1\mydots a-1\}}\binom{a-1}{j-1}=\binom{a-1}{\lceil\frac{a-1}{2}\rceil}
=\frac{(a-1)!}{\lceil\frac{a-1}{2}\rceil!\lfloor\frac{a-1}{2}\rfloor!}\leq 2^{a-1}\,,
\end{equation}
and $\tbinom{N}{j}\leq N^j$, hence
$$\norm{\hat{m}^a\psi}^2\leq N^{-a}\left(1+2^{2a-1}a! \sum\limits_{j=1}^a \tbinom{N}{j}\norm{q_1\mycdots q_j\psi}^2\right)\,.$$
Part (b) follows from~\eqref{eqn:N} and~\eqref{eqn:n:m:op}.
\qed

\subsection{Proof of Proposition~\ref{thm:alpha}}\label{subsec:proof:thm:alpha}
\emph{Proof of Proposition~\ref{thm:alpha}.}
The proof of this proposition is essentially an adaptation of the proof of~\cite[Corollary 4.2]{petrat2017}.
We begin with part (a).
Let $\psi\in L^2(\R^{\D N})$ symmetric, $s\in\R$ and $f:\N_0\to\R_0^+$ some weight function. Define 
\begin{equation}\label{eqn:thm1:0}
\alpha_{\psi,\varphi,s}(f;t):=\lr{U(t,s)\psi,\hat{f^\pt}\,U(t,s)\psi}.
\end{equation}
and
\begin{equation}\label{Z^beta}
Z^\beta_{ij}:=\left(\vbij-\vbarpt(x_i)-\vbarpt(x_j)+2\mpt\right).
\end{equation}
Let us for the moment abbreviate $U(t,s)\psi=:\psi_t$.
By Lemma~\ref{lem:pt:3},
\begin{eqnarray}
\tfrac{\d}{\d t}\alpha_{\psi,\varphi,s}(f;t)
&=&\i\lr{\psi_t,\left[\Hb(t)-\sum\limits_{j=1}^N\hpt_j(t),\,\hat{f^\pt}\right]\psi_t}\nonumber\\
&=&\i\tfrac{N}{2}\lr{\psi_t,\left[Z^\beta_{12},\,\hat{f^\pt}\right]\psi_t}\nonumber\\
&=&2N\Im\lr{\psi_t,\left(\hat{f^\pt}-\hat{f^\pt_{-1}}\right)\qpt_1\ppt_2Z^\beta_{12}\ppt_1\ppt_2\psi_t}\label{eqn:thm1:1}\\
&&+N\Im\lr{\psi_t,\left(\hat{f^\pt}-\hat{f^\pt_{-2}}\right)^\frac12\qpt_1\qpt_2\vb_{12}\ppt_1\ppt_2\left(\hat{f^\pt_2}-\hat{f^\pt}\right)^\frac12\psi_t}\label{eqn:thm1:2}\\
&&+2N\Im\lr{\psi_t,\left(\hat{f^\pt}-\hat{f^\pt_{-1}}\right)^\frac12\qpt_1\qpt_2Z^\beta_{12}\ppt_1\qpt_2\left(\hat{f^\pt_1}-\hat{f^\pt}\right)^\frac12 \psi_t},\qquad\label{eqn:thm1:3}
\end{eqnarray}
where we have inserted $\mathbbm{1}=(\ppt_1+\qpt_1)(\ppt_2+\qpt_2)$ on both sides of the commutator and used Lemma~\ref{lem:commutators:2}.
Since $\qpt_1\ppt_2Z^\beta_{12}\ppt_1\ppt_2=0$, we conclude that \eqref{eqn:thm1:1} equals zero.
From now on, we will for simplicity drop the superscripts $\pt$.
Let 
\begin{equation}\begin{split}\label{Lf}
{L}_f:=\Bigg\{\sum\limits_{k=2}^N(f(k)-f(k&-2))\Ppt_k, \;
				   \sum\limits_{k=1}^N\left(f(k)-f(k-1)\right)\Ppt_k, \;\\
				   &\sum\limits_{k=0}^{N-2}\left(f(k+2)-f(k)\right)\Ppt_k, \;
				   \sum\limits_{k=0}^{N-1}\left(f(k+1)-f(k)\right)\Ppt_k \bigg\}\,.
\end{split}\end{equation}
Since, for example, 
$ \left(\hat{f}-\hat{f_{-2}}\right)^\frac12 q_1q_2
=\left(\sum\limits_{k=2}^N(f(k)-f(k-2))\Ppt_k\right)^\frac12q_1q_2$, this yields
\begin{eqnarray}\label{aux:1}
\tfrac{\d}{\d t}\alpha_{\psi,\varphi,s}(f;t)
&\ls& \max\limits_{\hat{l}\in L_f} 
	\left\{N\left|\lr{\psi_t,\hat{l}^\frac12 q_1q_2\vbot p_1p_2\hat{l}^\frac12\psi_t}\right|
	+N\left|\lr{\psi_t,\hat{l}^\frac12 q_1q_2Z^\beta_{12}p_1q_2\hat{l}^\frac12\psi_t}\right|\right\}\,.\qquad
\end{eqnarray}
By Lemmas~\ref{lem:pt} and~\ref{lem:fqq} and since $\norm{\vb}_{L^2(\R^\D)}^2\ls N^{\D\beta}$, the first term in~\eqref{aux:1} leads to
\begin{eqnarray}
&&\hspace{-1cm} N\left|\lr{\psi_t,\hat{l}^\frac12 q_1q_2\vbot p_1p_2\hat{l}^\frac12\psi_t}\right|\nonumber\\
&\ls&N\norm{\hat{l}^\frac12 q_1\psi_t}\left(
	\lr{q_2\vbot p_2\,\hat{l}^\frac12p_1\psi_t,q_3\vboth p_3\,\hat{l}^\frac12 p_1\psi_t}
	+N^{-1}\norm{q_2\vbot p_2 p_1\,\hat{l}^\frac12\psi_t}^2\right)^\frac12\nonumber\\
&\ls&N\norm{\hat{l}^\frac12\,q_1\psi_t}
\bigg(\norm{\hat{l}^\frac12q_3\psi_t} \onorm{p_1p_2\vbot\vboth p_3p_1}\norm{\hat{l}^\frac12q_2 \psi_t}
+N^{-1}\onorm{\vbot p_2}^2\norm{\hat{l}^\frac12\psi_t}^2\bigg)^\frac12\nonumber\\
&\ls& N\lr{\psi_t,\hat{l}\,\hat{n}^2\psi_t}^\frac12\left(\lr{\psi_t,\hat{l}\,\hat{n}^2\psi_t}+N^{-1+\D\beta}\lr{\psi_t,\hat{l}\psi_t}\right)^\frac12\norm{\pt}_{H^k(\R^\D)}^2\,,\label{aux:2}
\end{eqnarray}
To obtain the estimate in the last line, note first that
$$ \onorm{p_1p_2\vboth\vbot p_1p_3}=\onorm{p_1\vboth p_2p_3\vbot p_1}=\onorm{p_1\vbot p_2}^2 \,.$$
Now we decompose $\vb=\vb_+-\vb_-$ into its positive and negative part such that $\vb_\pm\geq0$, hence $\vb_\pm(x)=\sqrt{\vb_\pm(x)}\sqrt{\vb_\pm(x)}$, 
which leads to
\begin{eqnarray*}
\onorm{p_1\vbot p_2}&=&\onorm{p_1(\vb_+-\vb_-)_{12} p_2}\\
&\leq& \onorm{p_1\sqrt{(\vb_+)_{12}}\sqrt{(\vb_+)_{12}}p_2}+\onorm{p_1\sqrt{(\vb_-)_{12}}\sqrt{(\vb_-)_{12}}p_2}\\
&\ls& \norm{\pt}_{H^k(\R^\D)}^2\left(\norm{\vb_+}_{L^1(\R^d)}+\norm{\vb_-}_{L^1(\R^d)}\right)\\
&=& \norm{\pt}_{H^k(\R^\D)}^2\norm{\vb}_{L^1(\R^\D)}\;\ls\; \norm{\pt}_{H^k(\R^\D)}^2
\end{eqnarray*}
by Lemma~\ref{lem:pt}.
The second term in~\eqref{aux:1} can be estimated as
\begin{eqnarray}
N\left|\lr{\psi_t,\hat{l}^\frac12 q_1q_2Z^\beta_{12}p_1q_2\hat{l}^\frac12\psi_t}\right|
&\ls&  N\norm{\hat{l}^\frac12 q_1q_2\psi_t}\norm{\hat{l}^\frac12\hat{n}\psi_t}\onorm{Z^\beta_{12}p_1}\nonumber\\
&\ls& N^{1+\frac{\D\beta}{2}}\lr{\psi_t,\hat{l}\,\hat{n}^4\psi_t}^\frac12\lr{\psi_t,\hat{l}\,\hat{n}^2\psi_t}^\frac12\norm{\pt}_{H^k(\R^\D)}\,.\label{aux:3}
\end{eqnarray}
Now we choose for $f$ the family of weight functions $\wl^j:k\mapsto\left(\wl(k)\right)^j$ given by
\begin{equation}\label{eqn:wl}
\wl(k):=
\begin{cases}
	\displaystyle \frac{k+1}{N^\lambda} & 0\leq k \leq N^\lambda -1,\\[8pt]
	\;\; 1 & \mathrm{else}
\end{cases}
\end{equation}
for some $0<\lambda\leq1-\D\beta$ and $j\in\{0\mydots N\}$. 
The set corresponding to $L_f$ from~\eqref{Lf} is called $L_{w^j_\lambda}$.
To bound the operators in $L_{w^j_\lambda}$, note that for any $a,b\in\mathbb{N}_0$, $a>b$, 
\begin{eqnarray*}
(k+a)^j-(k+b)^j
&=&\tbinom{j}{j-1}k^{j-1}(a-b)+\tbinom{j}{j-2}k^{j-2}(a^2-b^2)+...+(a^j-b^j)\\
&\leq& ja^j\left(\tbinom{j-1}{j-1}k^{j-1}+\tbinom{j-1}{j-2}k^{j-2}+...+\tbinom{j-1}{1}k+1\right)\\
&=&j a^j(k+1)^{j-1},
\end{eqnarray*}
where we have used in the second line that for every $1\leq m\leq j-1$,
$$\tbinom{j}{m}=\tfrac{j(j-1)!}{(j-m)((j-1)-m)!m!}=\tfrac{j}{j-m}\tbinom{j-1}{m}\leq j\tbinom{j-1}{m},$$
and that $a^j\geq a^\l-b^\l$ for any $1\leq\l\leq j$ and $j\geq 1$ (the statement is trivial for $j=0$).
Since $\wl(k)\leq \tfrac{k+1}{N^\lambda}$ for all $k$, especially also if $k>N^{\lambda}-1$, we conclude that
\begin{eqnarray*}
(\wl(k))^j-(\wl(k-1))^j&\hspace{-0.4cm}\leq\tfrac{(k+1)^j-k^j}{N^{\lambda j}}\leq j\tfrac{(k+1)^{j-1}}{N^{\lambda j}} = j\tfrac{\wl(k)^{j-1}}{N^\lambda}
\;\,\qquad\quad &\text{for } 1\leq k\leq N^\lambda-1,\\
(\wl(k+1))^j-(\wl(k))^j&\leq\tfrac{(k+2)^j-(k+1)^j}{N^{\lambda j}}\leq j2^j\tfrac{(k+1)^{j-1}}{N^{\lambda j}}=j2^{j}\tfrac{\wl(k)^{j-1}}{N^\lambda}\quad &\text{for } 0\leq k\leq N^\lambda-1,\\
(\wl(k+2))^j-(\wl(k))^j&\leq\tfrac{(k+3)^j-(k+1)^j}{N^{\lambda j}}\leq j3^j\tfrac{(k+1)^{j-1}}{N^{\lambda j}}=j3^{j}\tfrac{\wl(k)^{j-1}}{N^\lambda}\quad &\text{for } 0\leq k\leq N^\lambda-1.
\end{eqnarray*}
Besides, one computes analogously to above that $(k+1)^j-(k-1)^j\leq 2j(k+1)^{j-1}$, hence
\begin{eqnarray*}
(\wl(k))^j-(\wl(k-2))^j&\leq&\tfrac{(k+1)^j-k^j}{N^{\lambda j}}\leq 2j\tfrac{(k+1)^{j-1}}{N^{\lambda j}}=2j\tfrac{\wl(k)^{j-1}}{N^\lambda} \hspace{1.5cm} \text{for } 2\leq k\leq N^\lambda-1\,.
\end{eqnarray*}
Finally, $\wl(k)=1$ for $k>N^\lambda-1$, hence the above estimates imply
\begin{eqnarray*}
(\wl(k))^j-(\wl(k-1))^j&\leq j\tfrac{(k+1)^{j-1}}{N^{\lambda j}} \leq j2^{j-1}N^{-\lambda}= j2^{j-1}\tfrac{\wl(k)^{j-1}}{N^\lambda}\quad&\text{for }N^\lambda-1<k\leq N^\lambda,\\
(\wl(k))^j-(\wl(k-2))^j&\hspace{-0.5cm}\leq2j\tfrac{(k+1)^{j-1}}{N^{\lambda j}} \leq  j2^{j}N^{-\lambda}=j2^{j}\tfrac{\wl(k)^{j-1}}{N^\lambda}
\quad &\text{for } N^\lambda-1\leq k\leq N^\lambda.
\end{eqnarray*}
For all other values of $k$, the differences yield zero.
Thus, every element of $L_{\wl^j}$ can be bounded, in the sense of operators, by the operator corresponding to the weight function
\begin{equation}\label{lj}
\llj(k)= \begin{cases}
	\displaystyle j3^j\frac{\wl(k)^{j-1}}{N^\lambda}& 0\leq k\leq N^\lambda,\\[5pt]
	\; 0 & \mathrm{else}.\end{cases}
\end{equation}
Besides, since $\llj(k)=0$ for $k> N^\lambda+1$, one obtains
\begin{eqnarray}
\llj(k)n^2(k)&\leq& j3^j N^{-1} \wl^j(k),\label{eqn:thm1:7}\\
\llj(k) n^4(k)&\leq& j3^j\wl^j(k)\tfrac{k}{N^2}\leq j3^j\wl^j(k)\tfrac{N^\lambda+1}{N^2}
\ls j3^j N^{-2+\lambda}\wl^j(k).\label{eqn:thm1:8}
\end{eqnarray}
Inserting~\eqref{lj} to~\eqref{eqn:thm1:8} into~\eqref{aux:2} and~\eqref{aux:3} and using that $\lambda\leq 1-\D \beta$ implies $N^\frac{\D \beta+\lambda-1}{2}\leq 1$, we conclude that
\begin{equation}
\tfrac{\d}{\d t}\alpha_{\psi,\varphi,s}(\wl^j;t)\;\ls\; j3^j\norm{\pt}_{H^k(\R^\D)}^2\left(\alpha_{\psi,\varphi,s}(\wl^j;t)+ N^{\D\beta-\lambda}\alpha_{\psi,\varphi,s}(\wl^{j-1};t)\right)\,.
\end{equation}
Now we apply Grönwall's inequality, for now on using the abbreviations 
$\alpha_{\psi,\varphi,s}(\wl^j;t)=:\alpha_j(t)$
and $I_t:=\int_s^t\norm{\varphi(s_1)}_{H^k(\R^\D)}^2\d s_1$.
This yields
\begin{eqnarray*}
\alpha_j(t) &\ls&\e^{j3^jI_t}\left(\alpha_j(s)+j3^j N^{\D\beta-\lambda}\int_s^t\norm{\varphi(s_1)}_{H^k(\R^\D)}^2\alpha_{j-1}(s_1)\d s_1\right)\\
&\leq&\e^{j3^jI_t}\alpha_j(s)
+j3^j\e^{j(3^j+3^{j-1}) I_t}I_tN^{\D\beta-\lambda}\alpha_{j-1}(s)\\
&&+j(j-1)3^{j+(j-1)}\e^{j(3^j+3^{j-1}) I_t}I_t^2N^{2(\D\beta-\lambda)}\int_s^{t}\norm{\varphi(s_1)}_{H^k(\R^\D)}^2\alpha_{j-2}(s_1)\d s_1\\
&\ls&\e^{j3^jI_t}\alpha_j(s)\\
&&+j3^j\e^{j(3^j+3^{j-1}) I_t}I_tN^{\D\beta-\lambda}\alpha_{j-1}(s)\\
&&+j(j-1)3^{j+(j-1)}\e^{j(3^j+3^{j-1}+3^{j-2})I_t}I_t^2N^{2(\D\beta-\lambda)}\alpha_{j-2}(s)\\
&&+j(j-1)(j-2)3^{j+(j-1)+(j-2)}\e^{j(3^j+3^{j-1}+3^{j-2})I_t}I_t^2N^{3(\D\beta-\lambda)}\times\\
&&\hspace{5cm}\times\int_s^{t}\norm{\varphi(s_1)}_{H^k(\R^\D)}^2\alpha_{j-3}(s_1)\d s_1\\
&\ls& ...\\
&\ls&\sum\limits_{n=0}^{j}\tfrac{j!}{(j-n)!}3^\frac{n(2j+1-n)}{2}\e^{2j3^jI_t}I_t^n N^{n(\D\beta-\lambda)}\alpha_{j-n}(s)\,,
\end{eqnarray*}
where we have used that all integrands are non-negative and thus the upper boundary of all integrals could be replaced by $t$.
Written explicitly, this gives
\begin{equation}
\alpha_{\psi,\varphi,s}(\wl^j;t)\ls \cjts\,\sum\limits_{n=0}^{j}N^{n(\D\beta-\lambda)}\alpha_{\psi,\varphi,s}(\wl^{j-n};s)
=\cjts\,\sum\limits_{n=0}^{j}N^{n(\D\beta-\lambda)}\lr{\psi,\hat{\wl}^{j-n}\psi},\label{eqn:thm1:9}
\end{equation}
with
$$\cjts:= j!\, 3^{j(j+1)}\e^{9^j \int_s^t\norm{\varphi(s_1)}^2_{H^k(\R^\D)}\d s_1  }\,,$$
where we have estimated $I_t^j\e^{2j3^jI_t}<\e^{9^jI_t}$.
To relate this  estimate to $\norm{\hat{m}^j\psi}^2$, observe that for $0\leq k\leq N$,
$$\wl^j(k)\leq\left(\tfrac{k+1}{N^\lambda}\right)^j=\left(\tfrac{k+1}{N}\right)^j N^{j(1-\lambda)}=m^{2j}(k) N^{j(1-\lambda)},$$
and
$$m^{2j}(k)\;=\;\left(\tfrac{k+1}{N}\right)^j\leq
	\begin{cases}
		\left(\tfrac{k+1}{N^\lambda}\right)^jN^{-j(1-\lambda)}\;=\;\wl^j(k)N^{-j(1-\lambda)} & \text{ for }\, 0\leq k\leq N^\lambda-1,\\[5pt]
		2^j\;=\;2^j\wl^b(k)\; \text{ for any }b\in\N & \text{ for }\, N^\lambda-1\leq k\leq N.
	\end{cases}
$$
Consequently, $m^{2j}(k)\leq N^{-j(1-\lambda)}\wl^j(k)+\wl^b(k)$, and we conclude
\begin{eqnarray*}
	\alpha_{\psi,\varphi,s}(\wl^j;t)&=&\lr{\psi_t,\hat{\wl}^j\psi_t}
	\;\leq\; N^{j(1-\lambda)}\lr{\psi_t,\hat{m}^{2j}\psi_t}\;=\;N^{j(1-\lambda)}\norm{\hat{m}^j\psi_t}^2,\\
	\norm{\hat{m}^j\psi_t}^2&=&\lr{\psi_t,\hat{m}^{2j}\psi_t} \;\leq\; N^{-j(1-\lambda)}\alpha_{\psi,\varphi,s}(\wl^j;t)+2^j\alpha_{\psi,\varphi,s}(\wl^b;t)
\end{eqnarray*}
for any $b\in\N$. Inserting these estimates into~\eqref{eqn:thm1:9} yields
\begin{eqnarray*}
\norm{\hat{m}^jU(t,s)\psi}^2 &\ls &
\cjts\,\sum\limits_{n=0}^j N^{n(-1+\D\beta)}\norm{\hat{m}^{j-n}\psi}^2
+2^j\cbts\,\sum\limits_{n=0}^b N^{n(-1+\D\beta)+b(1-\lambda)}\norm{\hat{m}^{b-n}\psi}^2.
\end{eqnarray*}
To minimise the second term, we choose the maximal $\lambda=1-\D\beta$, which concludes the proof of part (a).\\

The proof of part (b) is much simpler since we now consider the time evolution $\Ubog(t,s)$. The term corresponding to~\eqref{eqn:thm1:3} vanishes, which implies that we may directly consider the weights $m^{2j}(k)$ instead of taking the detour via $\wl^j(k)$. 
Analogously to~\eqref{eqn:thm1:0}, we define 
\begin{equation*}
\tilde{\alpha}_{\psi,\varphi,s}(f;t):=\lr{\Ubog(t,s)\psi,\hat{f^\pt}\,\Ubog(t,s)\psi}\,.
\end{equation*} 
We will now abbreviate $\Ubog(t,s)\psi=:\tilde{\psi}_t$. In this notation,
\begin{eqnarray*}
\tfrac{\d}{\d t}\tilde{\alpha}_{\psi,\varphi,s}(f;t)
&=&\i\lr{\tilde\psi_t,\left[\Hpt(t)-\sum\limits_{j=1}^N\hpt_j(t),\,\hat{f^\pt}\right]\tilde\psi_t}\\
&=&\i\tfrac{N}{2}\lr{\tilde\psi_t,\left[\ppt_1\qpt_2\vbot\qpt_1\ppt_2+\mathrm{h.c.}\,,\,\hat{f^\pt}\right]\tilde\psi_t}\\
&&+\i\tfrac{N}{2}\lr{\tilde\psi_t,\left[\ppt_1\ppt_2\vbot\qpt_1\qpt_2+\mathrm{h.c.}\,,\,\hat{f^\pt}\right]\tilde\psi_t}\\
&=&-N\Im\lr{\tilde\psi_t,\qpt_1\qpt_2\left(\hat{f^\pt}-\hat{f^\pt_{-2}}\right)^\frac12\vbot\ppt_1\ppt_2\left(\hat{f^\pt_2}-\hat{f^\pt}\right)^\frac12\tilde\psi_t}.
\end{eqnarray*}
We now evaluate this expression for the weight $m^{2j}(k)$, i.e.
$$\tilde\alpha_{\psi,\varphi,s}(m^{2j};t)=\left\|\left(\hat{m^\pt}\right)^j\tilde\psi_t\right\|^2.$$
This corresponds to $\wl^j(k)$ with the choice $\lambda=1$ in~\eqref{eqn:wl}.  Consequently, we define $\lj(k):=j3^jN^{-1}m^{2(j-1)}(k)$
analogously to~\eqref{lj} and conclude that $m^{2j}(k)-m^{2j}(k-2)\leq \lj(k)$ and  $m^{2j}(k+2)-m^{2j}(k)\leq \lj(k)$.
Analogously to the estimate of the first term in~\eqref{aux:1} and using the relation~\eqref{eqn:thm1:7} for $\lambda=1$, we obtain
$$\tfrac{\d}{\d t}\big\|\left(\hat{m^\pt}\right)^j\tilde{\psi}\big\|^2
\ls j3^j\norm{\pt}^2_{H^k(\R^\D)}\left(\big\|{\left(\hat{m^\pt}\right)^j\tilde{\psi}}\big\|^2
+N^{-1+\D\beta}\big\|{\left(\hat{m^\pt}\right)^{j-1}\tilde{\psi}}\big\|^2\right).$$
The same Grönwall argument which led to~\eqref{eqn:thm1:9} concludes the proof.
\qed\\

\noindent\emph{Proof of Corollary~\ref{cor:alpha}.}
From Proposition~\ref{thm:alpha:1} and the assumptions on the initial data, we conclude that for every $b\in\N$ and sufficiently large $N$,
\begin{eqnarray*}
\left\|\left(\hat{m^\pt}\right)^a\psi(t)\right\|^2
&\ls&\cat\,\sum\limits_{n=0}^a \mathfrak{C}_{a-n}\,N^{n(-1+\D\beta+\gamma)-\gamma a}
+2^b\cbt\sum\limits_{n=0}^b \mathfrak{C}_{b-n}\,N^{n(-1+\D\beta+\gamma)-b(\gamma-\D\beta)}\,.
\end{eqnarray*}
If $\gamma\geq 1-\D\beta$, the leading order terms in both sums are the ones with maximal $n$, hence
$$\left\|\left(\hat{m^\pt}\right)^a\psi(t)\right\|^2 \ls (a+1)\cat\,N^{a(-1+\D\beta)}
+(b+1)\cbt\,N^{b(-1+2\D\beta)}.$$
If one chooses $b>a\frac{1-\D\beta}{1-2\D\beta}$ for fixed $\beta<\frac{1}{2\D}$, the second term is for sufficiently large $N$ dominated by the first one.
For $\gamma<1-\D\beta$, the leading order terms are those with $n=0$, hence
$$\left\|\left(\hat{m^\pt}\right)^a\psi(t)\right\|^2 \ls (a+1)\cat\,\fcaz\,N^{-\gamma a}
+(b+1)2^b\cbt\,\fcbz\,N^{-b(\gamma-\D\beta)}\,,$$
which yields a non-trivial bound only for $\gamma>\D\beta$.
Part (b) follows analogously from part (b) of Proposition~\ref{thm:alpha} without the restrictions on $\beta$ and $\gamma$ that are due to the second sum.
\qed

\subsection{Proof of Theorem~\ref{thm:corrections}}\label{subsec:proof:thm:corrections}
\noindent\emph{Proof of Lemma~\ref{lem:QC}.}
We use the abbreviation
$Z^\beta_{ij}=\vbij-\vbarpt(x_i)-\vbarpt(x_j)+2\mpt$ as in~\eqref{Z^beta},
and drop all superscripts $\pt$ in $\ppt$, $\qpt$  and $\hat{m^\pt}$ for simplicity. 
By Lemma~\ref{lem:commutators:2}, $\Qpt\hat{m}^a=\hat{m}^a\Qpt$, hence
\begin{eqnarray*}
\norm{\hat{m}^a\Qpt\psi}^2
&=&\tfrac{1}{(N-1)^2}\sum\limits_{i<j}\sum\limits_{k<l}\lr{\hat{m}^a\psi, q_iq_j Z^\beta_{ij}q_iq_jq_kq_lZ^\beta_{kl}q_kq_l\hat{m}^a\psi}\\
&=&\tfrac{N}{2(N-1)}\lr{\hat{m}^a\psi,q_1q_2Z^\beta_{12}q_1q_2Z^\beta_{12}q_1q_2\hat{m}^a\psi}\\
&&+\tfrac{N(N-2)}{N-1}\lr{\hat{m}^a\psi,q_1q_2Z^\beta_{12}q_1q_2q_3Z^\beta_{13}q_1q_3\hat{m}^a\psi}\\
&&+\tfrac{N(N-2)(N-3)}{4(N-1)}\lr{\hat{m}^a\psi,q_1q_2Z^\beta_{12}q_1q_2q_3q_4Z^\beta_{34}q_3q_4\hat{m}^a\psi}\\
&\ls&N^{2\D\beta}\left(\norm{q_1q_2\hat{m}^a\psi}^2+N\norm{q_1q_2q_3\hat{m}^a\psi}^2+N^2\norm{q_1q_2q_3q_4\hat{m}^a\psi}^2\right),
\end{eqnarray*}
where we have used that 
$\norm{Z^\beta_{ij}}_{L^\infty(\R^\D)}\ls N^{\D\beta}$
by Young's inequality.
Now observe that 
$$\tbinom{N}{2}\norm{q_1q_2\hat{m}^a\psi}^2=\sum\limits_{i<j}\lr{\hat{m}^a\psi,q_iq_j\hat{m}^a\psi}<\sum\limits_{i,j}\lr{\hat{m}^a\psi,q_iq_j\hat{m}^a\psi}
<\sum\limits_{i,j,k,l}\lr{\hat{m}^a\psi,q_iq_jq_kq_l\hat{m}^a\psi},$$
hence
\begin{eqnarray*}
\norm{q_1q_2\hat{m}^a\psi}^2&\ls& N^{-2}\sum\limits_{i,j,k,l}\lr{\hat{m}^a\psi,q_iq_jq_kq_l\hat{m}^a\psi}\;=\;N^2\lr{\hat{m}^a\psi,\bigg(\tfrac{1}{N}\sum\limits_{j=1}^Nq_j\bigg)^4\hat{m}^a\psi}\\
&=&N^{2}\lr{\hat{m}^a\psi,\hat{n}^8\hat{m}^a\psi}\;<\;N^{2}\norm{\hat{m}^{4+a}\psi}^2,
\end{eqnarray*}
by~\eqref{eqn:n:m}, and analogously
\begin{eqnarray*}
\norm{q_1q_2q_3\hat{m}^a\psi}^2&=&\tbinom{N}{3}^{-1}\sum\limits_{i<j<k}\lr{\hat{m}^a\psi,q_iq_jq_k\hat{m}^a\psi}\\
	&\ls& N^{-3}\sum\limits_{i,j,k,l}\lr{\hat{m}^a\psi,q_iq_jq_kq_l\hat{m}^a\psi}\;\ls\; N\norm{\hat{m}^{4+a}\psi}^2 ,\\
\norm{q_1q_2q_3q_4\hat{m}^a\psi}^2&=&\tbinom{N}{4}^{-1}\sum\limits_{i<j<k<l}\lr{\hat{m}^a\psi,q_iq_jq_kq_l\hat{m}^a\psi}\\
	&\ls& N^{-4}\sum\limits_{i,j,k,l}\lr{\hat{m}^a\psi,q_iq_jq_kq_l\hat{m}^a\psi}
	\;\ls\;  \norm{\hat{m}^{4+a}\psi}^2.
\end{eqnarray*}
This implies part (a).
For part (b), note that by Lemma~\ref{lem:commutators:2},
\begin{equation*}
\hat{m}^a\Cpt=\tfrac{1}{N-1}\sum\limits_{i<j}\left(q_iq_jZ^\beta_{ij}(q_ip_j+p_iq_j)\right)\hat{m}^a_1
+\tfrac{1}{N-1}\sum\limits_{i<j}\left((p_iq_j+q_ip_j)Z^\beta_{ij}q_iq_j\right)\hat{m}^a_{-1}.
\end{equation*}
Consequently,
\begin{eqnarray*}
&&\hspace{-1cm}\norm{\hat{m}^a\Cpt\psi}^2\\
&=&\tfrac{1}{(N-1)^2}\sum\limits_{i<j}\sum\limits_{k<l}\Big(
\lr{\hat{m}_1^a\psi,(q_ip_j+p_iq_j)Z^\beta_{ij}q_iq_jq_kq_lZ^\beta_{kl}(p_kq_l+q_kp_l)\hat{m}^a_{1}\psi}\\
&&\qquad\qquad+\lr{\hat{m}_1^a\psi,(q_ip_j+p_iq_j)Z^\beta_{ij}q_iq_j(p_kq_l+q_kp_l)Z^\beta_{kl}q_lq_k\hat{m}^a_{-1}\psi}\\
&&\qquad\qquad+\lr{\hat{m}^a_{-1}\psi,q_iq_jZ^\beta_{ij}(p_iq_j+q_ip_j)q_kq_lZ^\beta_{kl}(p_kq_l+q_kp_l)\hat{m}^a_{1}\psi}\\
&&\qquad\qquad+\lr{\hat{m}^a_{-1}\psi,q_iq_jZ^\beta_{ij}(p_iq_j+q_ip_j)(p_kq_l+q_kp_l)Z^\beta_{kl}q_kq_l\hat{m}^a_{-1}\psi}\Big)\\
&\ls& N^{\D\beta}\left(\norm{q_1\hat{m}^a_1\psi}^2+\norm{q_1q_2\hat{m}^a_{-1}\psi}^2\right)\norm{\pt}^2_{H^k(\R^\D)}\\
&&+N^{1+\D\beta}
	\big(\norm{q_1q_2\hat{m}^a_1\psi}^2+\norm{q_1\hat{m}^a_{1}\psi}\norm{q_1q_2q_3\hat{m}^a_{-1}\psi}+\norm{q_1q_2\hat{m}^a_{-1}\psi}^2\big)\norm{\pt}^2_{H^k(\R^\D)}\\
&&+N^{2+\D\beta}\big(\norm{q_1q_2q_3\hat{m}^a_1\psi}^2+\norm{q_1q_2q_3\hat{m}^a_{-1}\psi}^2+\norm{q_1q_2\hat{m}^a_1\psi}\norm{q_1q_2q_3q_4\hat{m}^a_{-1}\psi}\big)\norm{\pt}^2_{H^k(\R^\D)}
\end{eqnarray*}
similarly to the estimate of $\norm{\hat{m}^a\Qpt\psi}$.
The last inequality follows because by Lemma~\ref{lem:pt:2}, $\onorm{p_1Z^\beta_{12}}^2\ls N^{\D\beta}\norm{\pt}^2_{H^k(\R^\D)}$ due to Young's inequality and since $\norm{\vb}_{L^2(\R^\D)}^2\ls N^{\D\beta}$. 
Further, note that
\begin{eqnarray*}
\hat{m}^{2a}_1&=&\left(\sum\limits_{k=0}^{N-1}m(k+1)P_k\right)^{2a}
\;=\;\left(\sum\limits_{k=0}^{N-1}\sqrt{\tfrac{k+2}{N}}P_k\right)^{2a}
\;\leq\; \left(2\sum\limits_{k=0}^{N}\sqrt{\tfrac{k+1}{N}}P_k\right)^{2a}
\;=\; 4^{a}\hat{m}^{2a}\,,\\
\hat{m}^{2a}_{-1}&=&\left(\sum\limits_{k=1}^{N}m(k-1)P_k\right)^{2a}
\;=\;\left(\sum\limits_{k=1}^{N}\sqrt{\tfrac{k}{N}}P_k\right)^{2a}
\;\leq\; \left(\sum\limits_{k=0}^{N}\sqrt{\tfrac{k+1}{N}}P_k\right)^{2a}
\;=\; \hat{m}^{2a}
\end{eqnarray*}
in the sense of operators.
As in the estimate of $\Qpt$, we thus obtain for $\ell\in\{-1,1\}$
\begin{eqnarray*}
\norm{q_1\hat{m}^a_\ell\psi}^2&<&N^{-1}\sum\limits_{i,j,k}\lr{\hat{m}^a_\ell\psi,q_iq_jq_k\hat{m}^a_\ell\psi}= N^2\lr{\hat{n}^3\psi,\hat{m}^{2a}_\ell\,\hat{n}^3\psi}
\leq  2^{2a}N^2\norm{\hat{m}^{a+3}\psi}^2\,,
\end{eqnarray*}
and analogously $\norm{q_1q_2\hat{m}^a_\ell\psi} < 4^{a}N\norm{\hat{m}^{a+3}\psi}^2$ and $\norm{q_1q_2q_3\hat{m}^a_\ell\psi} < 4^{a}\norm{\hat{m}^{a+3}\psi}^2$.
Together, this implies part (b).
\qed\\

\noindent\emph{Proof of Theorem~\ref{thm:corrections}.}
Let $a\in\N_0$ such that $6a\leq A$.
Recall that by Definition~\ref{def:psik},
\begin{eqnarray*}
\psiao(t)&=&\sum\limits_{n=0}^a\sum\limits_{k=n}^{\min\{2n,a\}}\Tnk
\end{eqnarray*}
for any $a\geq 0$, where $\Tnk$ is given by
$$\Tnk=\sum_{(j_1\mydots j_n)\in\mathcal{S}^{(k)}_n}(-\i)^n\prod\limits_{\nu=1}^n\left(\;\int\limits_{s_{\nu-1}}^t\d s_\nu\right)
	\Ubog(t,s_n)\,t_{(j_1\mydots j_n)}^{(k)}\,,
$$
where  
$$t_{(j_1\mydots j_n)}^{(k)}:=\begin{cases}
	0& \text{for }k<n \text{ and } k>2n,\\[4pt]
	\psi_0& \text{for }k=n=0, \\[4pt]
	\prod\limits_{\l=0}^{n-1} \left(I^{\varphi(s_{n-\l})}_{j_{n-\l}}\Ubog(s_{n-\l},s_{n-\l-1})\right)\psi_0 & \text{else},
\end{cases}$$
with $I^\pt_1=\Cpt$ and $I^\pt_2=\Qpt$ and $(j_1\mydots j_n)\in \mathcal{S}_n^{(k)}$.
In this notation, 
$$\prod\limits_{\l=0}^{n-1}\left(\left(\mathcal{C}^{\varphi(s_{n-\l})}+\mathcal{Q}^{\varphi(s_{n-\l})}\right)\Ubog(s_{n-\l},s_{n-\l-1}\right)=
\sum\limits_{k=n}^{2n}\sum\limits_{(j_1\mydots j_n)\in\mathcal{S}_n^{(k)}}t_{(j_1\mydots j_n)}^{(k)}\,,$$
hence the Duhamel expansion~\eqref{eqn:Duhamel:k} of $\psi(t)$ reads
$$\psi(t)
=\sum\limits_{n=0}^{a-1}\sum\limits_{k=n}^{2n}\Tnk
+\sum\limits_{k=a}^{2a}\Ttak\,.
$$
Here, $\Ttnk$ is obtained from $\Tnk$ by replacing the first $\Ubog(t,s_n)$ by  the full time evolution $U(t,s_n)$, i.e., for $n<k<2n$,
$$\Ttnk:=\sum_{(j_1\mydots j_n)\in\mathcal{S}^{(k)}_n}(-\i)^n\prod\limits_{\nu=1}^n\left(\;\int\limits_{s_{\nu-1}}^t\d s_\nu\right)
	U(t,s_n)\prod\limits_{l=0}^{n-1}\left(I_{j_{n-l}}^{\varphi(s_{n-l})}\Ubog(s_{n-l},s_{n-l-1})\right)\psi_0\,.$$
Consequently,
\begin{eqnarray}
\psi(t)-\psiao(t)
&=&\sum\limits_{n=0}^{a-1}\,\sum\limits_{k=\min\{2n,a\}+1}^{2n}\Tnk + \sum\limits_{k=a}^{2a}\Ttak-\sum\limits_{k=a}^{\min\{2a,a\}}\Tak\nonumber\\
&=&\sum\limits_{n=\lceil\frac{a+1}{2}\rceil}^{a-1}\sum\limits_{k=a+1}^{2n}\Tnk + \sum\limits_{k=a+1}^{2a}\Ttak
+\left(\Ttaa-\Taa\right)\label{eqn:proof:thm:corr:1}
\end{eqnarray}
since the first double sum contributes only if $2n\geq a+1$, and in this case $\min\{2n,a\}=a$.
Note that for $k=n$, $j_1=\dots=j_k=1$, hence $\Tkk$ and $\Ttkk$ exclusively contain $\mathcal{C}^{\varphi(s_l)}$.
Using Duhamel's formula, the last expression can thus be expanded as
\begin{eqnarray}
&&\hspace{-1cm}\Ttaa-\Taa\nonumber\\
&=&(-\i)^a\int\limits_0^t\d s_1\mycdots\int\limits_{s_{a-1}}^t\d s_a 
	\left(U(t,s_a)-\Ubog(t,s_a)\right)\mathcal{C}^{\varphi(s_a)}\Ubog(s_a,s_{a-1})\mathcal{C}^{\varphi(s_{a-1})}\mycdots\mathcal{C}^{\varphi(s_{1})}\Ubog(s_1,0)\psi_0	\nonumber\\
&=&(-\i)^{a+1}\int\limits_0^t\d s_1\mycdots\int\limits_{s_a}^t\d s_{a+1}U(t,s_{a+1})\left(\mathcal{C}^{\varphi(s_{a+1})}+\mathcal{Q}^{\varphi(s_{a+1})}\right)\Ubog(s_{a+1},s_a)\mathcal{C}^{\varphi(s_a)}\times\nonumber\\
&&\hspace{9.5cm}\times\mycdots\,\mathcal{C}^{\varphi(s_{1})}\Ubog(s_1,0)\psi_0\nonumber\\
&=&\Ttaoao+(-\i)^{a+1}\int\limits_0^t\d s_1\mycdots\int\limits_{s_a}^t\d s_{a+1}U(t,s_{a+1})\,t^{(a+2)}_{(1,1\mydots 1,2)}\,.\label{eqn:proof:thm:corr:2}
\end{eqnarray}
By unitarity of $U(t,s)$ and $\Ubog(t,s)$,
\begin{eqnarray*}
\norm{\Tnk}&\leq& \sum_{(j_1\mydots j_n)\in\mathcal{S}^{(k)}_n}\int\limits_0^t\d s_1\mycdots\int\limits_0^t\d s_n\norm{t_{(j_1\mydots j_n)}^{(k)}}\,, \\
\norm{\Ttnk}&\leq& \sum_{(j_1\mydots j_n)\in\mathcal{S}^{(k)}_n}\int\limits_0^t\d s_1\mycdots\int\limits_0^t\d s_n\norm{t_{(j_1\mydots j_n)}^{(k)}} \,.
\end{eqnarray*}
With this,~\eqref{eqn:proof:thm:corr:1} and~\eqref{eqn:proof:thm:corr:2} imply for $a=0,1$
\begin{eqnarray}
\norm{\psi(t)-\psio(t)}&=&\bigg\|\Ttoo-\i\int\limits_0^t\d s_1 U(t,s_1)t^{(2)}_{(2)}\bigg\|\;\leq\; 
2\max\limits_{k\in\{1,2\}}\left\{\int\limits_0^t\d s\norm{t^{(k)}_{(k)}}\right\}\,,\label{a=0}\\
\norm{\psi(t)-\psit(t)}&=&\bigg\|\Ttot+\Tttt-\int\limits_0^t\d s_1\int\limits_{s_1}^t\d s_2\,U(t,s_2)t^{(3)}_{(1,2)}\bigg\|\nonumber\\
&\leq& 3\max\limits_{\substack{n\in\{1,2\}\\k\in\{2,3\}}}\left\{\sum\limits_{(j_1\mydots j_n)\in\mathcal{S}_n^{(k)}}\int\limits_0^t\d s_1\mycdots\int\limits_0^t\d s_n\norm{t^{(k)}_{(j_1\mydots j_n)}}\right\}\label{a=1}
\end{eqnarray}
which coincides with~\eqref{eqn:psi1:1} and~\eqref{eqn:psi-psi2}. For $a\geq 2$, we find
\begin{eqnarray}
&&\hspace{-1cm}\norm{\psi(t)-\psiao(t)}\nonumber\\
&<&a^2\max\limits_{\substack{n\in\{\lceil\frac{a+1}{2}\rceil\mydots a-1\}\\k\in\{a+1\mydots 2(a-1)\}}}\norm{\Tnk }
+ a\max\limits_{k\in\{a+1\mydots 2a\}}\norm{\Ttak}
+\norm{\Ttaoao}+\int\limits_0^t\d s_1\mycdots\int\limits_{s_a}^t\d s_{a+1}\left\|t^{(a+2)}_{(1,1\mydots 1,2)}\right\|
\nonumber\\
&\leq&2a^2 \max\limits_{\substack{n\in\{\lceil\frac{a+1}{2}\rceil\mydots a+1\}\\  k\in\{a+1\mydots 2a\}}}\left\{
\sum_{(j_1\mydots j_n)\in\mathcal{S}^{(k)}_n}\int\limits_0^t\d s_1\mycdots\int\limits_{s_{n-1}}^t\d s_{n}\left\|t_{(j_1\mydots j_n)}^{(k)}\right\|\right\}\nonumber\\
&\ls&a^2 \max\limits_{\substack{k\in\{a+1\mydots 2a\}\\n\leq k}}\left\{
\sum_{(j_1\mydots j_n)\in\mathcal{S}^{(k)}_n}\int\limits_0^t\d s_1\mycdots\int\limits_{s_{n-1}}^t\d s_{n}\left\|t_{(j_1\mydots j_n)}^{(k)}\right\|\right\}\label{eqn:proof:thm:corr:3}
\end{eqnarray}
where we used that $a+2\leq 2a$ for $a\geq 2$.
To estimate $\norm{t_{(j_1\mydots j_n)}^{(k)}}^2$ for $a+1\leq k\leq 2a$ and $n\leq k$, note first that Lemma~\ref{lem:QC} and Proposition~\ref{thm:alpha:2}
can be combined into the single statement
\begin{equation}\begin{split}\label{eqn:combi}
&\left\|\left(\hat{m^\pt}\right)^a I^\pt_j \Ubog(t,s) \psi\right\|^2\\
&\qquad\ls 4^{a}\norm{\pt}^2_{H^k(\R^\D)} N^{2+\D\beta j}C_{2+a+j}^{\,t-s}
\sum\limits_{\nu=0}^{2+j+a}N^{\nu(-1+\D\beta)}\left\|\left(\hat{m^\ps}\right)^{2+j+a-\nu}\psi\right\|^2
\end{split}\end{equation}
for $j\in\{1,2\}$ and any $\psi\in L^2_\mathrm{sym}(\R^{\D N})$.
Hence, with $\delta_\mu:=2(n-\mu+1)+(j_n+j_{n-1}+\mycdots+j_\mu) $ and $\eta_\mu:=\prod_{\l=0}^\mu\norm{\varphi(s_{n-\l})}^2_{H^k(\R^\D)}$, we obtain for $n\leq k$
\begin{eqnarray}
&&\hspace{-1cm}\norm{t_{(j_1\mydots j_n)}^{(k)}}^2\nonumber\\
&\ls& N^{2+\D\beta j_n}
	\sum\limits_{\nu_1=0}^{\delta_n} C_{\delta_n}^{\,s_n-s_{n-1}}\eta_0\, N^{\nu_1(-1+\D\beta)}
	\left\|\left(\hat{m^{\varphi(s_{n-1})}}\right)^{\delta_n-\nu_1}  
		\prod\limits_{\l=1}^{n-1} \left(I^{\varphi(s_{n-\l})}_{j_{n-\l}}\Ubog(s_{n-\l},s_{n-\l-1})\right)\psi_0\right\|^2\nonumber\\
&\ls& N^{2\cdot 2+\D\beta (j_n+j_{n-1})}\eta_1
		\sum\limits_{\nu_1=0}^{\delta_n} \sum\limits_{\nu_2=0}^{\delta_{n-1}-\nu_1}
		4^{\delta_n-\nu_1} C_{\delta_n}^{s_n-s_{n-1}}\,C_{\delta_{n-1}-\nu_1}^{s_{n-1}-s_{n-2}}\, N^{(\nu_1+\nu_2)(-1+\D\beta)}\times\nonumber\\
	&&\times\left\|\left(\hat{m^{\varphi(s_{n-2})}}\right)^{\delta_{n-1}-(\nu_1+\nu_2)}  
		\prod\limits_{\l=2}^{n-1} \left(I^{\varphi(s_{n-\l})}_{j_{n-\l}}\Ubog(s_{n-\l},s_{n-\l-1})\right)\psi_0\right\|^2\nonumber\\
&\ls&\dots\nonumber\\
&\ls&  N^{2(\mu+1)+\D\beta (j_n+\mycdots+ j_{n-\mu})}\eta_\mu 
	\sum\limits_{\nu_1=0}^{\delta_n} \mycdots 
	\sum\limits_{\nu_{\mu+1}=0}^{\delta_{n-\mu}-(\nu_1+\mycdots+\nu_{\mu})}C_{\delta_n}^{\,s_n-s_{n-1}} \mycdots C_{\delta_{n-\mu}-(\nu_1+\mycdots+\nu_{\mu})}^{\,s_{n-\mu}-s_{n-\mu-1}}
		\times \nonumber\\ 
	&& \hspace{2cm}\times 4^{\delta_n+\mycdots+\delta_{n+1-\mu}-(\nu_1+\mycdots+(\nu_1+\mycdots+\nu_{\mu})}\,
	 N^{(\nu_1+\mycdots+\nu_{\mu+1})(-1+\D\beta)}\nonumber\\
	&&\times \left\|\left(\hat{m^{\varphi(s_{n-\mu-1})}}\right)^{\delta_{n-\mu}-(\nu_1+\mycdots+\nu_{\mu+1}))}  	
	\prod\limits_{\l=\mu+1}^{n-1} \left(I^{\varphi(s_{n-\l})}_{j_{n-\l}}\Ubog(s_{n-\l},s_{n-\l-1})\right)\psi_0\right\|^2\nonumber\\	
&\ls&\dots\nonumber\\
&\ls&  N^{2n+\D\beta (j_n+\mycdots+ j_1)}\eta_{n-1}
	\sum\limits_{\nu_1=0}^{\delta_n} \mycdots 
	\sum\limits_{\nu_n=0}^{\delta_1-(\nu_1+\mycdots+\nu_{n-1})}
		4^{\delta_n+\mycdots+\delta_{2}-(\nu_1+\mycdots+(\nu_1+\mycdots+\nu_{n-1}))}\times \nonumber\\ 
	&&\times C_{\delta_n}^{\,s_n-s_{n-1}} \mycdots C_{\delta_1-(\nu_1+\mycdots+\nu_{n-1})}^{\,s_{1}}\,
	 N^{(\nu_1+\mycdots+\nu_n)(-1+\D\beta)}\left\|\left(\hat{m^{\varphi_0}}\right)^{\delta_{1}-(\nu_1+\mycdots+\nu_n)}  	
	\psi_0\right\|^2.	\label{eqn:proof:thm:corr:4}
\end{eqnarray}
Since $j_1+\mydots+j_n=k$ and $n\leq k\leq 2a$, we find $\delta_1=2n+k\leq 3k\leq 6a\leq A$, hence assumption A3 yields
$$\norm{(\hat{m^{\varphi_0}})^{\delta_1-(\nu_1+\mycdots+\nu_n)}\psi_0}^2\ls \fC_{\,\delta_1-(\nu_1+\mycdots+\nu_n)}\,N^{-\gamma\delta_1+\gamma(\nu_1+\mycdots+\nu_n)}\,.$$
Let us for the moment focus on the $N$-dependent factors in~\eqref{eqn:proof:thm:corr:4}, thereby neglecting all other contributions to the sum. This yields
$$ N^{2n+\D\beta k-\gamma\delta_1}\sum\limits_{\nu_1=0}^{\delta_n} \mycdots 
	\sum\limits_{\nu_n=0}^{\delta_1-(\nu_1+\mycdots+\nu_{n-1})}
	N^{(\nu_1+\mycdots+\nu_n)(-1+\D\beta+\gamma)}\,.
$$
For $\gamma\geq 1-\D\beta$, the leading order term in the sum $\sum_{\nu_n}$ is the term corresponding to the choice $\nu_n=\delta_1-(\nu_1+\mycdots+\nu_{n-1})=2n+k-(\nu_1+\mycdots+\nu_{n-1})$, which yields the total factor $N^{k(-1+\D\beta)}N^{\D\beta\delta_1}= N^{-k+2\D\beta(n+k)}$.
This factor is maximal for $n=k$.
For $\gamma< 1-\D\beta$, the leading term corresponds to the choice $\nu_1=\dots=\nu_n=0$, which yields $N^{2n(1-\gamma)+k(\D\beta-\gamma)}$. Also here, the maximal contribution issues from $n=k$.
In fact, the leading contributions for both ranges of $\gamma$ can be summarised as $N^{-k\delta(\beta,\gamma)}$, where
$$
\delta(\beta,\gamma)=\begin{cases}
	1-4\D\beta &\text{ for }\;1-\D\beta\leq\gamma\leq 1\,,\\[3pt]
	-2-\D\beta+3\gamma&\text{ for }\;\tfrac{2+\D\beta}{3}<\gamma\leq 1-\D\beta
	\end{cases}
$$ 
as defined in~\eqref{delta}.
Hence, for sufficiently large $N$, the dominating terms is the one with $n=k$, which comes from $t^{(k)}_{(j_1\mydots j_k)}=t^{(k)}_{(1\mydots 1)}$.
\begin{equation*}
\max\limits_{(j_1\mydots j_n)\in\mathcal{S}^{(k)}_n}\left\| t^{(k)}_{(j_1\mydots j_n)}\right\| = \left\| t^{(k)}_{(1\mydots 1)}\right\|\,,
\end{equation*}
and~\eqref{a=0} to~\eqref{eqn:proof:thm:corr:3} can be summarised as
\begin{eqnarray}
\norm{\psi(t)-\psiao(t)}
&\leq&(a+1)^2 \max\limits_{\substack{a+1\leq k\leq\max\{2a,a+2\}}}\left\{
\int\limits_0^t\d s_1\mycdots\int\limits_{s_{k-1}}^t\d s_{k}\left\|t_{(1\mydots 1)}^{(k)}\right\|\right\}\,.\label{eqn:proof:thm:corr:5}
\end{eqnarray}
It remains to evaluate the estimate~\eqref{eqn:proof:thm:corr:4} for $n=k$.
In this case, $j_1=\dots=j_k=1$ and $\delta_\mu=3(k-\mu+1)$. Note also that the constants $C_{a}^{\,t}$ are increasing in $a$ and $t$, hence  
$C_{\delta_{k-\mu}-(\nu_1+\mycdots+\nu_{\mu-1})}^{\,s_{k-\mu}-s_{k-\mu-1}}\leq  C_{3(\mu+1)}^{\,s_{k-\mu}}$.
Further, observe that $\delta_k+\mycdots+\delta_2= \tfrac{3}{2}k(k-1)\leq \tfrac{3}{2}k^2$.
Consequently,
\begin{eqnarray}
\norm{t_{(1\mydots 1)}^{(k)}}^2
&\ls& (1+\fC_{\,3k})\,2^{3k^2} N^{-k\delta(\beta,\gamma)}\prod\limits_{\mu=0}^{k-1}\left((3\mu+1) C^{\,s_{k-\mu}}_{3(\mu+1)}\norm{\varphi(s_\mu)}^2_{H^k(\R^\D)} \right) \,,
\label{eqn:proof:thm:corr:6}
\end{eqnarray}
where we have used that each sum $\sum_{\nu_{\mu}}$ in~\eqref{eqn:proof:thm:corr:4} contains at most
$\delta_{k-\mu+1}+1=3\mu+1$ addends, 
and that the prefactor of the leading order term for $\gamma\geq 1-\D\beta$ is $\fczz=1$, whereas it is $\fC_{\,3k}$ for $\gamma< 1-\D\beta$.  
Consequently, for sufficiently large $N$, the maximum in~\eqref{eqn:proof:thm:corr:5} is attained for $k=a+1$. Inserting the explicit formula $\cjts =j!\,3^{j(j+1)}\e^{9^{j}I_t} $ with $I_t=\int_s^t\norm{\varphi(s_1)}^2_{H^k(\R^\D)}\d s_1$ yields
\begin{eqnarray*}
\norm{\psi(t)-\psia(t)}^2&\ls&
 N^{-a\delta(\beta,\gamma)}\prod\limits_{\nu=1}^a \left(\int_0^t \e^{\frac12 9^{3(\nu+1)} I_{s_\nu}}\norm{\varphi(s_n)}_{H^k(\R^\D)}\d s_\nu\right)^2\\
&\ls&  \e^{a9^{3(a+1)}I_t}I_t^{2a}\,N^{-a\delta(\beta,\gamma)}\;\ls\;
\e^{9^{4a}I_t}\,N^{-a\delta(\beta,\gamma)}
\,,
\end{eqnarray*}
where we have bounded all $a$-dependent, time-independent expressions by a constant $c\ls1$. \qed

\section*{Acknowledgements}
\begin{wrapfigure}{l}{0.088\textwidth}
 \vspace{-15pt}
\includegraphics[scale=0.27]{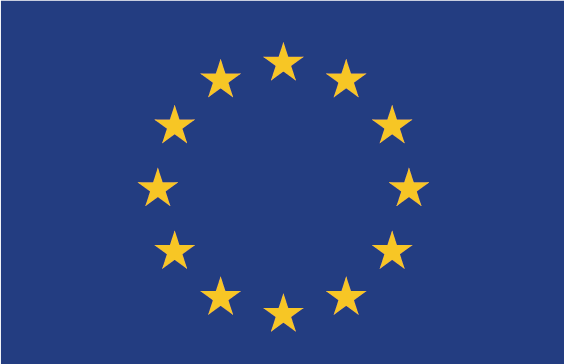}
  \vspace{-11pt}
\end{wrapfigure}
L.B.\ gratefully acknowledges the support by the German Research Foundation (DFG) within the Research Training Group 1838 ``Spectral Theory and Dynamics of Quantum Systems'', and wishes to thank Stefan Teufel, Sören Petrat and Marcello Porta for helpful discussions.  This project has received funding from the European Union’s Horizon 2020 research and innovation programme under the Marie Sk{\textl}odowska-Curie Grant Agreement No.\ 754411.
N.P.\ gratefully acknowledges support from NSF grant DMS-1516228 and DMS-1840314. 
P.P.'s research was funded by DFG Grant no.\ PI 1114/3-1. 
Part of this work was done when N.P. and P.P.\ were visiting CCNU, Wuhan. N.P.\ and P.P.\ thank A.S.\ for his hospitality at CCNU.

\renewcommand{\bibname}{References}
\bibliographystyle{abbrv}

   \bibliography{bib_PhD}
\end{document}